\documentclass[12pt,letterpaper]{article}

\usepackage{linguex} 
\usepackage{times} 
\usepackage{lipsum} 
\usepackage{ragged2e}
\usepackage{amsmath}
\usepackage{amsthm}
\usepackage{amsfonts}
\usepackage[utf8]{inputenc}
\usepackage{graphicx}
\usepackage{algorithm}
\usepackage{algpseudocode}
\usepackage{float}
\usepackage{subfloat}
\usepackage{subfig}
\usepackage{tikz}
\usetikzlibrary{shapes}
\usepackage{bm}
\usepackage{pgfplots}
\usetikzlibrary{ decorations.markings}
\pgfplotsset{compat = 1.3}
\usepackage{cases}
\usepackage{setspace}
\usepackage{empheq,etoolbox}
\usepackage[table]{xcolor}
\usepackage{multirow}
\usepackage{enumitem}
\usepackage{colortbl}
\usepackage{multirow} 

\definecolor{lightBlue}{HTML}{000080}
\definecolor{lightGreen}{HTML}{228B22}
\definecolor{lightRed}{HTML}{F08080} 
\definecolor{darkRed}{HTML}{960018}
\definecolor{lightlightBlue}{HTML}{708090} %
\definecolor{darkBlue}{HTML}{191970} %
\definecolor{lightGray}{gray}{0.9}
\definecolor{mediumGray}{gray}{0.4}
\definecolor{darkGray}{gray}{0.25}

\setlength\RaggedRightParindent{0.3in}
\RaggedRight

\usepackage{scrextend} 
\deffootnote[.5em]{0em}{1em}{\textsuperscript{\thefootnotemark}\,}

\newcounter{savefootnote} 
\newcounter{symfootnote}
\newcommand{\symfootnote}[1]{%
	\setcounter{savefootnote}{\value{footnote}}%
	\setcounter{footnote}{\value{symfootnote}}%
	\ifnum\value{footnote}>8\setcounter{footnote}{0}\fi%
	\let\oldthefootnote=\thefootnote%
	\renewcommand{\thefootnote}{\fnsymbol{footnote}}%
	\footnote{#1}%
	\let\thefootnote=\oldthefootnote%
	\setcounter{symfootnote}{\value{footnote}}%
	\setcounter{footnote}{\value{savefootnote}}%
}
\newtheorem{theorem}{Theorem}
\newtheorem{lemma}{Lemma}
\newtheorem{property}{Property}
\newtheorem{assumption}{Assumption}

\newtheorem{remark}{Remark}

\usepackage[labelsep=period]{caption}
\numberwithin{equation}{section}

\usepackage[margin=1.0in]{geometry}
\usepackage[compact]{titlesec}
\titleformat{\section}[runin]{\normalfont\bfseries}{\thesection.}{.5em}{}[.]
\titleformat{\subsection}[runin]{\normalfont\scshape}{\thesubsection.}{.5em}{}[.]
\titleformat{\subsubsection}[runin]{\normalfont\scshape}{\thesubsubsection.}{.5em}{}[.]
\usepackage[colorlinks,allcolors={black},urlcolor={blue}]{hyperref} 

\usepackage[
backend=bibtex,
style=ieee,
citestyle=numeric,
sorting=nty,
maxbibnames=99,
mincitenames=1,
maxcitenames=2
]{biblatex}
\addbibresource{bibliography.bib}

\renewenvironment{abstract}{%
	\noindent\begin{minipage}{1\textwidth}
		\setlength{\leftskip}{0.4in}
		\setlength{\rightskip}{0.4in}
		\textbf{Abstract.}}
	{\end{minipage}}

\newenvironment{keywords}{%
	\vspace{.5em}
	\noindent\begin{minipage}{1\textwidth}
		\setlength{\leftskip}{0.4in}
		\setlength{\rightskip}{0.4in}
		\textbf{Keywords.}}
	{\end{minipage}}

\begin{document}	
	\setlength{\Extopsep}{6pt}
	\setlength{\Exlabelsep}{9pt} 
	
	\begin{center}
		\normalfont\bfseries
		A one-dimensional reduced plasma model for the electrical treeing
		\vskip .5em
		\normalfont
		{Beatrice Crippa$^\star$, Anna Scotti$^\star$, Andrea Villa$^\dagger$} \\
		\vskip .5em
		\footnotesize{$^\star$ MOX-Laboratory for Modeling and Scientific Computing, Department of Mathematics, Politecnico di Milano, 20133 Milan, Italy \\
			$^\dagger$ Ricerca Sul Sistema Energetico (RSE), 20134 Milano, Italy}
	\end{center}
	
	\justifying
	
	\begin{abstract}
        Plasma models, consisting of advection--diffusion Partial Differential Equations coupled with chemical reactions, are widely adopted to describe corona, streamers and dielectric barrier discharges. However,
        the complex geometry of the electrical treeing represents an obstacle for numerical simulations.
		We develop a reduced one-dimensional formulation of a plasma model for the electrical treeing, describing the evolution of charge concentrations under the effect of an electric field. The reduced system consists of weakly coupled advection--diffusion--reaction equations for charge concentrations inside the treeing and on the dielectric surface, coupled with production--destruction Ordinary Differential Equations for the dipole moment. 
        A numerical scheme based on Finite Volumes and Patankar--type methods allows efficient simulations, while preserving key physical properties. The model is tested on increasingly complex geometries, from a straight line to a realistic electrical treeing.
	\end{abstract}
	
	\begin{keywords}
		Reduced model; drift-diffusion; electrical treeing; plasma; partial discharges.
	\end{keywords}

    \section{Introduction}
    This work is focused on the derivation of a one--dimensional (1D) model for the movement of different charged species (electrons, positive and negative ions) in the electrical treeing, also taking into account exchange of charges with the dielectric surface.\\
    The electrical treeing is one of the main causes of deterioration of solid insulators~\cite{bahadoorsingh2007role, buccella2023computational}, and it consists of a self--propagating thin and elongated ramified fracture~\cite{schurch2014imaging,schurch20193d}. Its development is caused by a prolonged action of intense electric fields on the dielectric components of the power system.
    The complex geometry of the treeing limits Partial Differential Equation (PDE)--based plasma simulations to the first stages of partial discharges, or to very simplified structures~\cite{callender2019plasma, villa2022towards}. Thus, the first step towards complete simulations of this phenomenon is the reduction of its complexity. Since the branches of the defect are typically very thin and elongated, they can be approximated by 1D lines, and the whole treeing structure can be reduced to a 1D graph by collapsing it to its skeleton. This geometrical reduction strategy is widely applied in many different fields to describe thin ramified inclusions in three--dimensional (3D) bulks. Applications include fluid flow in fractured porous media~\cite{lesinigo2011multiscale}, root--soil systems~\cite{koch2022projection} and blood flow in vascular networks~\cite{koppl20203d}.
    
    In this work we perform a rigorous geometrical reduction of the system of equations describing the movement and interaction between different charged species in the gas volume and on the dielectric surface.
    We rely on the hydrodynamic approximation of partial discharges~\cite{georghiou2005numerical, rauf1999dynamics}, combining the continuity equations for charged species~\cite{morrow1985theory} and a Poisson equation for the electric field in the gas, leading to a formulation similar to the one presented in~\cite{VILLA2017687} for the 3D case.
    While more complex chemical models can be introduced in the gas~\cite{komuro2013behaviour}, in this work we only take into account attachment and ionization reactions, producing negative and positive ions due to electron collisions. Similar chemical models, also involving recombination effects, in the context of partial discharges in air, are examined in~\cite{morrow1997streamer, tran2010numerical}.
    
    We introduce a numerical scheme for the solution to such coupled problem reduced to 1D, based on an implicit fractional--step method~\cite{villa2017implicit}, with operator splitting~\cite{glowinski2017splitting, lie1893theorie} to separate chemistry from transport, yielding an Ordinary Differential Equation (ODE) system and a drift--diffusion PDE system.
    Charge conservation and positivity of the solution are preserved discretizing the ODEs with the Patankar--Euler method~\cite{patankar2018numerical} and the PDEs with the Finite Volume method with upwind fluxes and Two--Point Flux Approximation (TPFA), as discussed in~\cite{crippa2024numericalmethods}.

    In thin elongated structures, such as the typical electrical treeing geometry, the longitudinal and transverse components of the electric field can be decoupled. The longitudinal electric field can be computed on the basis of the mixed--dimensional 3D--1D model presented in~\cite{crippa2024mixed}, describing the electrostatic potential in the defect, coupled with the electric field and potential in the surrounding insulator. In this work we address instead an efficient evaluation of the transverse electric field, while assuming for simplicity the longitudinal electric field in the gas and the electric field in the solid to be known quantities. A complete framework coupling the two problems will be object of future work. As discussed in~\cite{georghiou2005numerical, morrow1985theory, morrow1997streamer}, an accurate evaluation of the electric field transverse with respect to the centerline of the gas domain is essential for a precise simulation of the phenomenon. Indeed, it influences the advective flux that moves the charged particles towards the dielectric surface, causing an exchange of charge between volume and interface.
    Thus, we also introduce an efficient evaluation of the transverse components of the electric field on cross--sections of the gas domain. Instead of more expensive Finite Element solutions of the two--dimensional (2D) Poisson problem on cross--sections of the 3D gas domain, we use superimposition of effects, considering contributions from internal charge, surface charge, dipole moments of the surface charge distribution and electric field in the dielectric, which can be analytically obtained on simplified geometries, and then extended to more complex ones. This approach relies on a multi--scale approximation that allows to separate the longitudinal and transverse components of the field in the gas.
    
    We first test the model on a simple geometry, analyzing the convergence of the numerical method with respect to the time discretization. Then, we study the behavior of the solution on a simple branched domain, and finally we simulate the first stages of electron avalanche~\cite{meek1954electrical} on a realistic geometry of the electrical treeing. The coefficients for the diffusion--transport--reaction equations depend on the electric field and can be approximated through the solution to the stationary Boltzmann equation, and many estimates are provided in the literature~\cite{kang2003numerical, morrow1997streamer, nikonov2001surface, steinle1999two}. In all the tests presented in this work, we fix the value of the electric field tangential to the treeing centerline and compute the coefficients with the estimates provided by Morrow and Lowke~\cite{morrow1997streamer}.

    This manuscript is organized as follows: in Section~\ref{section:reduction1d:3dProblem} we present the full 3D problem, while in Section~\ref{section:reduction1d:modelReduction} we introduce some useful hypotheses for its reduction to 1D; in Section~\ref{section:reduction1d:electricField} we discuss the approximation of the transverse components of the electric field via superimposition of effects and in Section~\ref{section:reduction1d:properties_1D_reduced_problem} we analyze the conservativity and monotonicity of the equations. The numerical schemes used for the simulations are detailed in Section~\ref{section:reduction1d:numerical_methods}, and some numerical examples on increasingly complex geometries are shown in Section~\ref{section:reduction1d:results}. 

	Details of some standard but long computations, needed to prove some results presented in the paper, are reported in Appendices~\ref{appendix:exact_sol_superimposition_of_effects},~\ref{appendix:FG} and~\ref{appendix:reduction1d:characteristic_time}.

    \section{The 3D model problem}
    \label{section:reduction1d:3dProblem}
    Consider a three-dimensional domain $\Omega = \Omega_s \cup \Omega_g$, given by the union of a gas domain $\Omega_g$ and a solid dielectric one $\Omega_s$, such that $\Omega_g \cap \Omega_s = \emptyset$. The two domains, made of materials with different electrical conductivity properties, summarized by the the dielectric constants $\epsilon_g$ and $\epsilon_s$, are separated by a surface $\Gamma = \bar{\Omega}_g\cap\bar{\Omega}_s$, as displayed in Figure~\ref{figure:3d-3d:domain} for a simplified setting where the two domains are represented by coaxial cylinders. To model the evolution of charged particles in the gas domain $\Omega_g$ we rely on the problem discussed in~\cite{VILLA2017687}. We consider three families of charged species, whose volume concentrations in $\Omega_g$ are denoted by $c_p,\ p=1,\,2,\,3,$, corresponding to electrons, negative ions and positive ions, respectively, with unit-charges given by $\omega_p=\{-1,\,-1,\,1\}$. The movement of charged particles in $\Omega_g$ is governed, on a macroscopic scale, by advection-diffusion-reaction equations, with diffusion coefficients $\nu_p=\{\nu_1,\,0,\,0\}$, and transport velocity dependent on the electronic mobilities $\mu_p$ and on the electric field $\mathbf{E}_g$ present in $\Omega_g$. We consider a very simple model of chemical reactions among the considered particles, given by the photo-ionization $S_p^\text{ph}$, and the ionization and attachment effects, modeled by the terms $C_p,\, p=1,\,2,\,3$. In particular, we consider a simplified version of the model discussed by~\textcite{morrow1997streamer}:

    \begin{equation}
        C_p =
        \begin{cases}
            \left( \alpha - \eta \right) \mu_1 |\mathbf{E}| c_1, & \text{if\ } p=1,\\
            \alpha \mu_1 |\mathbf{E}| c_1, & \text{if\ } p=2,\\
            \eta \mu_1 |\mathbf{E}| c_1, & \text{if\ } p=3,
        \end{cases}
        \label{eq:chemistry:expression}
    \end{equation}

    \noindent where $\alpha(|\mathbf{E}|)$ and $\eta(|\mathbf{E}|)$ denote the ionization and attachment coefficients, respectively. More complex chemical schemes can be introduced~\cite{komuro2013behaviour}, also describing the interactions among the different species in every family $p$. We also consider transport equations for the surface concentrations of positive (holes) and negative charges, denoted by $c_{\Gamma,h}$ and $c_{\Gamma,e}$, respectively, on $\Gamma$, with transport velocities dependent on the electric field $\mathbf{E}_\Gamma$ on $\Gamma$ and the mobility coefficients $\mu_{\Gamma,e}$ and $\mu_{\Gamma,h}$.\\
    We report the complete 3D system presented in~\cite{VILLA2017687} for clarity:
	
	\begin{subnumcases}{\label{eq:3d-3d}}
		\dfrac{\partial c_p}{\partial t} + \nabla\cdot(\omega_p \mu_p c_p \mathbf{E}_g) - \nabla\cdot(\nu_p \nabla c_p) = C_p + S_p^{\text{\text{ph}}}, \quad p=1,\,2,\,3, &  $ \text{in} \ \Omega_g, $
		\label{eq:3d-3d:1} \\
		\dfrac{\partial c_{\Gamma,e}}{\partial t}-\nabla_{\Gamma}\cdot(c_{\Gamma,e}\mu_{\Gamma,e}\mathbf{E}_{\Gamma}) =
        \sum_{p=1}^2 c_p \mu_p \mathbf{E}_g\cdot\mathbf{n}_g -\sum_{p=1}^2\nu_p\nabla c_p \cdot \mathbf{n}_g, \qquad & $ \text{on}\ \Gamma, $
		\label{eq:3d-3d:2} \\
		\dfrac{\partial c_{\Gamma,h}}{\partial t}+\nabla_\Gamma\cdot(c_{\Gamma,h}\mu_{\Gamma,h}\mathbf{E}_\Gamma) = c_3 \mu_3\mathbf{E}_g\cdot\text{n}_g, \qquad & $ \text{on} \ \Gamma $,\\
		\label{eq:3d-3d:3} 
		c_p=c_p^b \quad \forall p=1,\,2,\,3,\,4, \qquad & $ \text{on}\ \Gamma $,
		\label{eq:3d-3d:4}
	\end{subnumcases}

    \noindent with homogeneous Dirichlet boundary conditions for every family, except for electrons, that take into account Schottky~\cite{schottky1923kalte} and secondary emission~\cite{hagstrum1954theory}:

    \begin{equation}
         c_p^b = 
         \begin{cases}
            \gamma\dfrac{\mu_3}{\mu_1}c_3 + \dfrac{AgT^2}{e\mu_1|\mathbf{E}_g|}\exp\left(-\dfrac{w -\Delta w(|\mathbf{E}_g|)}{kT}\right), & \text{if\ } p=1,\\
            0, & \text{if\ } p=2,\,3,
        \end{cases}
        \label{eq:3d-3d:bc}
    \end{equation}

    \noindent where $\gamma$ denotes the secondary emission coefficient, $T$ the temperature, $A_g = 1.2017 \cdot 10^6 Am^{-2} K^{-2}$, $k$ is the Boltzmann constant, $w$ the material-specific work functions, $\Delta w = \dfrac{e^3|\mathbf{E}_g|}{4\pi\epsilon_0}$ the electric field correction, and $e\approx 1.6022\cdot 10^{-19} C$ the electron charge.\\
    The electric field $\mathbf{E}$ and electrostatic potential $\Phi$ on the coupled domains can be expressed as follows:
    
	\begin{center}
		\begin{minipage}{.4\textwidth}    
        	\begin{equation}\nonumber
        		\mathbf{E}(t,\mathbf{x})=
        		\begin{cases}
        			\mathbf{E}_s(t,\mathbf{x}) \ &\text{in} \ \Omega_s,\\
        			\mathbf{E}_g(t,\mathbf{x}) \ &\text{in} \ \Omega_g,\\
        			\mathbf{E}_\Gamma(t,\mathbf{x}) \ &\text{on} \ \Gamma,
        		\end{cases}
        	\end{equation}
		\end{minipage}
		\hspace{.5em}
		\begin{minipage}{.4\textwidth}
			\begin{equation}\nonumber
        		\Phi(t,\mathbf{x})=
        		\begin{cases}
        			\Phi(t,\mathbf{x}) \ &\text{in} \ \Omega_s,\\
        			\Phi_g(t,\mathbf{x}) \ &\text{in} \ \Omega_g,\\
        			\Phi_\Gamma(t,\mathbf{x}) \ &\text{on} \ \Gamma.
        		\end{cases}
        	\end{equation}
		\end{minipage}
	\end{center}

	\noindent Moreover, we denote by $g = \dfrac{e}{\epsilon_0}(c_{\Gamma,h}-c_{\Gamma,e})$ and $q=\dfrac{e}{\epsilon_0}\sum_{p=1}^3\omega_p c_p$ the total surface charge on $\Gamma$ and the total volume charge in $\Omega_g$, respectively.\\
	The electric field $\mathbf{E}$ is computed as the solution of the electrostatic problem on coupled domains (see Figure~\ref{figure:3d-3d:domain}):
	
	\begin{equation}
        \begin{cases}
    	    \nabla\cdot(\epsilon_g\mathbf{E}_g) = q, &  \text{on}\ \Omega_g,\\
    		\nabla\cdot(\epsilon_s\mathbf{E}_s) = 0, &  \text{on}\ \Omega_s,\\
    		\mathbf{E} = -\nabla \Phi,  &  \text{on}\ \Omega,\\
    		\Phi_g = \Phi_s, & \text{on}\ \Lambda, \\
    		\epsilon_s\mathbf{E}_s \cdot\mathbf{n}_s + \epsilon_g\mathbf{E}_g \cdot\mathbf{n}_g = -g, & \text{on}\ \Lambda,\\
    		\mathbf{E}_s\cdot\mathbf{n} = E^b, & \text{on}\ \partial\Omega.
        \end{cases}
        \label{eq:electric_field:3d:dim}
	\end{equation}

	\begin{figure}
		\centering
		\begin{tikzpicture}
            \scriptsize
            \node (b) [cylinder, draw, shape border rotate=90, minimum height=2cm, minimum width=.7cm, fill=gray!40, line width = .5,yshift = .2cm] {};
			\node (a) [cylinder, draw, shape border rotate=90, minimum height=2.5cm, minimum width=3.5cm,line width = .5] {};
			\draw [->] (0.35,.6) -- (.7,.6) node[midway,below] {$\mathbf{n}_g$};
			\node [yshift=-.4cm] {$\Omega_g$};
			\node [xshift=1cm, yshift=1cm] {$\Omega_s$};
			\node [yshift=1.5cm] {$\partial\Omega_N$};
			\node [yshift=-1.4cm] {$\partial\Omega_N$};
			\node [xshift=-2.5cm] {$\partial\Omega_D$};
			\node [xshift=-.5cm] {$\Gamma$};
			\node [xshift=-.2cm,yshift=.5cm] {$\Lambda$};
			\draw [->] (1.75,.4) -- (2.1,.4) node[midway,below] {$\mathbf{n}$};
			\draw [->] (0.5,0.3) -- (0,0.3) node[midway,below] {$\mathbf{n}_s$};
			\draw[dashed] (0,-1) -- (0,1.2);
		\end{tikzpicture}
		\caption{Simplified domain $\Omega$, given by two coaxial cylinders, corresponding to the dielectric and gas subdomains, $\Omega_s$ and $\Omega_g$, respectively. We call the portion of axis inside the gas domain $\Lambda$ and the interface between the two subdomains $\Gamma$.}
		\label{figure:3d-3d:domain}
	\end{figure}
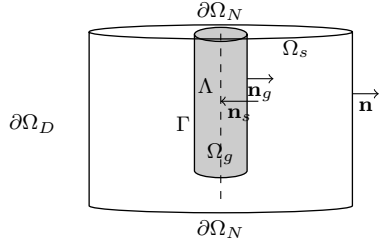
	
	\noindent Mixed Dirichlet and Neumann boundary conditions are set on $\partial\Omega^D$ and $\partial\Omega^N$, that form a partition of the boundary $\partial\Omega$.
	
	Finally, we introduce an additional Robin interface condition on the diffusive flux through $\Gamma$ for the volume charge concentrations, modeling the probability of electrons to attach to $\Gamma$ instead of bouncing back into $\Omega_g$~\cite{wilson2018investigation, boeuf1995two}:
    
	\begin{assumption}\label{assumption:robin}
		$\forall p\in\{1,\,2,\,3\}\ \exists K_p\in\mathbb{R}\ \text{such that}\ -\nu_p\nabla c_p\cdot\mathbf{n}_g = K_p c_p\ \text{on}\ \Sigma \quad \forall q\in\{1,\dots,\mathcal{N}_p\}$, with $K_p = 0,\, \forall p\neq 1$.
	\end{assumption}
	
	\noindent The gas domain represents the electrical treeing and consists of a very thin ramification inside $\Omega_s$, approximable by a one-dimensional graph.
    
	\section{Model reduction}
	\label{section:reduction1d:modelReduction}

    In the following, we reduce the problem~\eqref{eq:3d-3d} introduced in Section~\ref{section:reduction1d:3dProblem}, in the 3D gas domain $\Omega_g$, to a 1D problem defined on its skeleton. We start by a simple configuration where the 1D domain consists of a straight line, but the obtained problem can be extended to arbitrary ramified graph domains.

    Assume, for simplicity, that the gas domain consists in a cylinder of radius $R_g$, as in Figure~\ref{figure:3d-3d:domain}, and introduce a parameterization of its centerline $\Lambda$, defining the coordinate $s\in[0,S]$ along it. If one basis of the cylinder $\Omega_g$ belongs to the external boundary $\partial\Omega$ and the other one is internal to the solid domain $\Omega_s$, we will assume that the point of coordinate $s=0$ coincides with the center of the former and $s=S$ with the center of the latter. Then, for all $\mathbf{x}_s\in\Lambda$ of coordinate $s\in[0,S]$, define the transversal section $\mathcal{D}(s)$ of $\Omega_g$, orthogonal to its centerline and centered in $\mathbf{x}_s$, as $\mathcal{D}(s) = \{ \mathbf{x}\in\Omega_g \ :\: |\mathbf{x} - \mathbf{x}_s| < R_g \text{\ and\ } (\mathbf{x} - \mathbf{x}_s) \perp \Lambda \}$. Then, the interface between the two domains is $\Gamma = \mathcal{D}(S)\cup\left(\bigcup_{s\in[0,S]} \partial\mathcal{D}(s)\right)$. Assume moreover that the cylindrical gas domain is thin and elongated:

    \begin{assumption}
    	The gas domain is a cylinder with radius $R_g$ and length $L$, with $R_g\ll L$ and $R_g\ll R_\text{ext}$, where $R_\text{ext}$ denotes the characteristic dimension of the external domain $\Omega_s$.
    	\label{assumption:coupled3d1d:thin}
    \end{assumption}
    
    \noindent Then, we can approximate problem~\eqref{eq:3d-3d} by 1D equations, collapsing $\Omega_g$ onto its centerline $\Lambda$, and identifying, in the electrostatic problem~\eqref{eq:electric_field:3d:dim}, the solid domain $\Omega_s$ with the whole 3D domain $\Omega$.
    
    In the following, we perform the reduction to 1D of equation~\eqref{eq:3d-3d}, concerning the concentration of charged particles, following a similar approach as Laurino and Zunino~\cite{laurino2019derivation} and Masri et al.~\cite{masri2024modelling}, while a mixed--dimensional 3D--1D formulation of the electrostatic problem was derived in~\cite{crippa2024mixed}. This mixed--dimensional electrostatic problem allows a direct computation of the external electric field in the 3D bulk and the electrostatic potential on the 1D gas domain, and as a byproduct the component of the electric field in the gas tangential to the 1D domain can be determined as a function of the gradient of the potential.
	
	\subsection{Volume charge concentration}
	\label{section:reduction1d:volume_charge_concentration}

    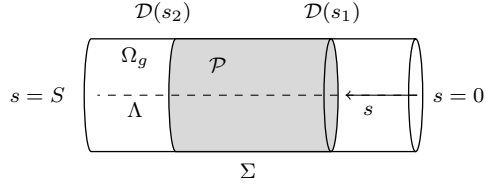
\begin{figure}
        \centering
        \begin{tikzpicture}\scriptsize
            \node (a) [cylinder, draw, minimum height=4.5cm, minimum width=1.5cm, line width = .5] {};
    		\node (b) [cylinder, draw, minimum height=2.25cm, minimum width=1.5cm, fill=gray!30, line width = .5] {};
            \draw [dashed] (2.25,0) -- (-2,0);
    		\draw [ ->] (2.25,0) -- (1.3,0);
    		\node [ yshift=-.2cm, xshift=1.6cm] {$s$};
    		\node [ yshift=-.2cm, xshift=-1.5cm] {$\Lambda$};
    		\node [ yshift=.5cm, xshift=-1.5cm] {$\Omega_g$};
    		\node [yshift=-1cm] {$\Sigma$};
    		\node [yshift=.4cm,xshift=-.4cm] {$\mathcal{P}$};
    		\node [yshift = 1.1cm, xshift = 1.125cm] {$\mathcal{D}(s_1)$};
    		\node [yshift = 1.1cm, xshift = -1.125cm] {$\mathcal{D}(s_2)$};
    		\node [xshift=2.8cm] {$s=0$};
    		\node [xshift=-2.8cm] {$s=S$};
        \end{tikzpicture}
    	\caption{Cylindrical portion $\mathcal{P}$, with lateral surface $\Sigma$, of the gas domain $\Omega_g$, on which we integrate the equation in the gas.}
    	\label{figure:coupled3d1d:reduction:domain}
    \end{figure}
    
	Let $\mathcal{P}$ be the portion of $\Omega_g$ enclosed between the sections $\mathcal{D}(s_1)$ and $\mathcal{D}(s_2)$, $s_1, \ s_2\in\Lambda$, with boundary $\partial\mathcal{P}$ and lateral surface $\Sigma = \partial\mathcal{P} \cap \Gamma$, as in Figure~\ref{figure:coupled3d1d:reduction:domain}. From now on we will assume that the volume concentration of charge is approximately constant over the sections $\mathcal{D}$, as well as the mobility and diffusion coefficients, and the chemical reaction terms.	Moreover, in thin long domains, the effect of photo--ionization is negligible with respect to the emission of electrons from the boundary. We consider the concentrations of the same three families of charged particles $c_p,\, p=1,\,2,\,3$ in the gas volume, as in the 3D case.
    
	\begin{assumption}\label{assumption:reduction1d:cpq_constant}
		$c_p\simeq\dfrac{1}{|\mathcal{D}(s)|}\bar{c}_p(s),\, \forall p\in\{1,\,2,\,3\},$ where $\bar{c}_p(s)=\displaystyle\int_{\mathcal{D}(s)}c_p$, indicates the integral mean concentration of the $p^{th}$ family of charged species over a section, $\forall s\in\Lambda$.
	\end{assumption}
    
	\begin{assumption}\label{assumption:reduction1d:mup_nup_constant}
		$\mu_p\simeq \bar{\mu}_p(s)$ and $\nu_p\simeq \bar{\nu}_p(s)$ on $\mathcal{D}(s)\ $ $\forall p\in\{1,\,2,\,3\},\quad\forall s\in\Lambda$, where $\bar{\mu}_p$ and $\bar{\nu}_p$ denote the integral means of $\mu_p$ and $\nu_p$ on cross--sections of the gas domain, $p=1,\,2,\,3$.
	\end{assumption}
	
	\begin{assumption}
		The chemical source terms $C_p$ are constant on sections of $\mathcal{P}$ orthogonal to $\Lambda$, i.e. $C_p\simeq \bar{C}_p(s)\ \text{on}\ \mathcal{D}(s),\ \forall p\in\{1,\,2,\,3\}, \ \forall s\in\Lambda$, where $\bar{C}_p$ denotes the integral mean of $C_p$ on cross--sections of the gas domain, $p=1,\,2,\,3$.
		\label{assumption:reduction1d:constant_c}
	\end{assumption}


    \begin{assumption}
        The photo-ionization terms are negligible, i.e. $S_p^\text{ph}\simeq 0,\ \forall p=1,\,2,\,3$.
    \end{assumption}
    

    \noindent The electric field generated by a constant charge distribution presents non-negligible radial components, thus forbidding to reduce equations~\eqref{eq:electric_field:3d:dim} to a purely tangential problem while being able to retrieve information on the transverse component. The mixed--dimensional model for the electrostatic problem presented in~\cite{crippa2024mixed} provides a method for computing the longitudinal component of the electric field along $\Lambda$ and also the radial components produced by a constant distribution of charge on cross sections. However, this model does not take into account generic transverse components of the electric field, due to the surface charge distribution and external effects. Thus, we will compute the transverse components by superimposition of effects, as detailed in Section~\ref{section:reduction1d:electricField}. On the other hand, we assume the longitudinal component of $\mathbf{E}_g$ along $\Lambda$ to be constant on the cross sections $\mathcal{D}$:
	
	\begin{assumption}\label{assumption:reduction1d:Egs_constant}
		$\mathbf{E}_g \cdot \mathbf{s} \simeq E_\Lambda(s),\, \forall s\in\Lambda$, where $\mathbf{s}$ denotes the unit vector tangential to $\Lambda$, directed from $s=0$ to $s=S$ (for instance, in Figure~\ref{figure:3d-3d:domain}, from the top basis of the inner gray cylinder to the bottom one).
	\end{assumption}

    
	\noindent We start the geometrical reduction by integrating equation~\eqref{eq:3d-3d:1} over $\mathcal{P}$	
    and applying the divergence theorem to the second and third terms of the left-hand-side:
	
	\begin{equation}\nonumber
		\int_{\mathcal{P}}\dfrac{\partial c_p}{\partial t}+\int_{\partial\mathcal{P}}\omega_p\mu_p c_p\mathbf{E}_g\cdot\mathbf{n}-\int_{\partial\mathcal{P}}\nu_p \nabla c_p\cdot\mathbf{n} = \int_{\mathcal{P}}C_p,
	\end{equation}
	
	\noindent where $\mathbf{n}$ denotes the outward unit normal vector to $\partial\mathcal{P}$.\\
	Since the closed surface $\partial\mathcal{P}$ is the union of $\Sigma$ and the sections $\mathcal{D}(s_1)$ and $\mathcal{D}(s_2)$, we can split the integrals over $\partial\mathcal{P}$ in the sum of integrals over these three surfaces, whose unit normal vectors are $\mathbf{n}_g$, $-\mathbf{s}$ and $\mathbf{s}$, respectively. Thanks to Assumptions~\ref{assumption:reduction1d:cpq_constant} -~\ref{assumption:reduction1d:Egs_constant}, we can replace $c_p,\ \mu_p,\ \nu_p\ \text{and}\ C_p$ on $\mathcal{D}$ with their constant values on the section, and neglect $S_p^\text{ph}$:
	
	\begin{multline}
		\int_{\mathcal{P}}\dfrac{1}{|\mathcal{D}|}\dfrac{\partial \bar{c}_p}{\partial t} -
        \int_{\mathcal{D}(s_1)}\omega_p\bar{\mu}_p(s_1) \dfrac{\bar{c}_p(s_1)}{|\mathcal{D}(s_1)|}E_\Lambda +
        \int_{\mathcal{D}(s_2)}\omega_p\bar{\mu}_p(s_2) \dfrac{\bar{c}_p(s_2)}{|\mathcal{D}(s_2)|}E_\Lambda
        + \int_{\Sigma}\omega_p\bar{\mu}_p \dfrac{\bar{c}_p}{|\mathcal{D}|}\mathbf{E}_g\cdot\mathbf{n}_g +\\
        + \int_{\mathcal{D}(s_1)}\bar{\nu}_p(s_1)\dfrac{\nabla\bar{c}_p(s_1)}{|\mathcal{D}(s_1)|} \cdot\mathbf{s} -
        \int_{\mathcal{D}(s_2)}\bar{\nu}_p(s_2)\dfrac{\nabla\bar{c}_p(s_2)}{|\mathcal{D}(s_2)|}\cdot\mathbf{s}
        - \int_{\Sigma}\bar{\nu}_p \dfrac{\nabla \bar{c}_p}{|\mathcal{D}|}\cdot\mathbf{n}_g =
        \int_{\mathcal{P}}\bar{C}_p.
        \label{eq:reduction1d:cpq:divergencethm}
	\end{multline}
	
	\noindent According to the fundamental theorem of calculus:

    \begin{subequations}
    	\begin{align}
    		&
    			- \int_{\mathcal{D}(s_1)}\omega_p\bar{\mu}_p(s_1) \dfrac{\bar{c}_p(s_1)}{|\mathcal{D}(s_1)|}E_\Lambda +
            	\int_{\mathcal{D}(s_2)}\omega_p\bar{\mu}_p(s_2) \dfrac{\bar{c}_p(s_2)}{|\mathcal{D}(s_2)|} E_\Lambda 
            	=\int_{s_1}^{s_2}\dfrac{\partial}{\partial s}\left(\int_{\mathcal{D}(s)}\omega_p\bar{\mu}_p(s)\dfrac{\bar{c}_p(s)}{|\mathcal{D}(s)|} E_\Lambda\right);
            \\
            &\int_{\mathcal{D}(s_1)}\bar{\nu}_p(s_1)\dfrac{\nabla\bar{c}_p(s_1)}{|\mathcal{D}(s_1)|}\cdot\mathbf{s} -
            \int_{\mathcal{D}(s_2)}\bar{\nu}_p(s_2)\dfrac{\nabla\bar{c}_p(s_2)}{|\mathcal{D}(s_2)|}\cdot\mathbf{s} =
            - \int_{s_1}^{s_2}\dfrac{\partial}{\partial s}\left(\int_{\mathcal{D}(s)}\bar{\nu}_p(s) \dfrac{\nabla \bar{c}_p(s)}{|\mathcal{D}(s)|}\cdot\mathbf{s}\right).
    	\end{align}
        \label{eq:reduction1d:cpq:fundamentaltheorem}
    \end{subequations}
	
	\noindent Then, exploiting the cylindrical shape of $\Omega_g$ (Assumption~\ref{assumption:coupled3d1d:thin}) and the Robin condition introduced in Assumption~\ref{assumption:robin}, we can write the diffusive flux of charges through the lateral surface as follows:
	
	\begin{equation}\label{eq:reduction1d:cpq:robin}
		-\int_\Sigma \bar{\nu}_p\dfrac{\nabla \bar{c}_p}{|\mathcal{D}|}\cdot\mathbf{n}_g \simeq
        \int_\Sigma K_p \dfrac{\bar{c}_p}{|\mathcal{D}|} =
        \int_{s_1}^{s_2}\int_{\partial\mathcal{D}(s)}K_p \dfrac{\bar{c}_p}{|\mathcal{D}|} =
        \int_{s_1}^{s_2} \dfrac{|\partial\mathcal{D}|}{|\mathcal{D}|} K_p \bar{c}_p,
	\end{equation}
	
	\noindent and the integrals over $\mathcal{P}$ as:
    
	\begin{subequations}
        \begin{align}
    		\int_{\mathcal{P}}\dfrac{1}{|\mathcal{D}|}\dfrac{\partial \bar{c}_p}{\partial t} &=
            \int_{s_1}^{s_2} \dfrac{1}{|\mathcal{D}(s)|} \int_{\mathcal{D}(s)} \dfrac{\partial \bar{c}_p}{\partial t} =
            \int_{s_1}^{s_2}\dfrac{\partial \bar{c}_p}{\partial t},
    		\label{eq:reduction1d:cpq:dt}\\
    		\int_{\mathcal{P}}C_p&=
    		\int_{s_1}^{s_2} \int_{\mathcal{D}(s)} \bar{C}_p
            \int_{s_1}^{s_2} |\mathcal{D}|\bar{C}_p.
            \label{eq:reduction1d:cpq:c}
        \end{align}
	\end{subequations}
    
	\noindent If we substitute equations~\eqref{eq:reduction1d:cpq:fundamentaltheorem},~\eqref{eq:reduction1d:cpq:robin},~\eqref{eq:reduction1d:cpq:dt} and~\eqref{eq:reduction1d:cpq:c} into~\eqref{eq:reduction1d:cpq:divergencethm}, we obtain:
	
	\begin{multline}
        \int_{s_1}^{s_2}\dfrac{\partial \bar{c}_p}{\partial t} +
        \int_{s_1}^{s_2}\dfrac{\partial}{\partial s}\left(\omega_p\bar{\mu}_p\bar{c}_pE_\Lambda\right) +
        \int_{\Sigma}\omega_p\bar{\mu}_p \dfrac{\bar{c}_p}{|\mathcal{D}|}\mathbf{E}_g\cdot\mathbf{n}_g
        - \int_{s_1}^{s_2}\dfrac{\partial}{\partial s}\left(\bar{\nu}_p \dfrac{\partial \bar{c}_p}{\partial s}\right) + \\
        + \int_{s_1}^{s_2}\dfrac{|\partial\mathcal{D}|}{|\mathcal{D}|} K_p \bar{c}_p
        = \int_{s_1}^{s_2}|\mathcal{D}|\bar{C}_p,
        \ \forall s_1,\ s_2\in\Lambda.
        \label{eq:reduction1d:cpq:1}
	\end{multline}
	
	\noindent Consider now the integral over the lateral surface $\Sigma$ in equation~\eqref{eq:reduction1d:cpq:1}, representing the advective flux through the interface. We can partition $\Sigma$ into an inflow and an outflow part, for each family of charged particles (positive and negative), whose movement, due to the electric field, depends on their unit charge $\omega_p$. We set
	
	\begin{equation*}
		\Sigma = \Sigma^\text{in}_p\cup \Sigma^\text{out}_p = \bigcup_{s\in[s_1,s_2]} \left( \partial\mathcal{D}^{\text{in}}_p(s) \cup \partial\mathcal{D}^{\text{out}}_p(s)\right),
	\end{equation*}
	
	\noindent where $\partial\mathcal{D}^{\text{in}}_p$ and $\partial\mathcal{D}^{\text{out}}_p$ denote the inflow and outflow boundary of a section $\mathcal{D}$, respectively, for the $p^\text{th}$ family, and are defined as follows:
    \begin{equation}
        \begin{aligned}
            \partial\mathcal{D}^{\text{in}}_p &= \{\mathbf{x}\in \partial\mathcal{D} \ :\: \omega_p\mathbf{E}_g\cdot\mathbf{n}_g (\mathbf{x}) \leq 0 \},\\
            \partial\mathcal{D}^{\text{out}}_p &= \{\mathbf{x}\in \partial\mathcal{D} \ :\: \omega_p\mathbf{E}_g\cdot\mathbf{n}_g (\mathbf{x}) \geq 0 \},
        \end{aligned}        
        \label{eq:reduction1d:def:inflow-outflow}
    \end{equation}
    
	\noindent for $p=1,\,2,\,3$. On the inflow boundary $\Sigma^{\text{in}}_p$, the concentrations of charged particles are set by the interface conditions~\eqref{eq:3d-3d:bc}, while their values are unknown on the outflow boundary $\Sigma^{\text{out}}_p$. Thus, also recalling that on the inflow boundary the charge concentrations are known from the boundary conditions, we can rewrite the integral over $\Sigma$ in equation~\eqref{eq:reduction1d:cpq:1} as follows:
	
	\begin{multline}
		\int_\Sigma \omega_p \bar{\mu}_p \dfrac{\bar{c}_p}{|\mathcal{D}| }\mathbf{E}_g\cdot\mathbf{n}_g =
		\int_{s_1}^{s_2}\bar{\mu}_p \dfrac{\bar{c}_p^b}{|\mathcal{D}|} \int_{\partial\mathcal{D}^{\text{in}}_p(s)} \omega_p \mathbf{E}_g\cdot\mathbf{n}_g + \int_{s_1}^{s_2}\bar{\mu}_p \dfrac{\bar{c}_p}{|\mathcal{D}|} \int_{\partial\mathcal{D}^{\text{out}}_p(s)} \omega_p \mathbf{E}_g\cdot\mathbf{n}_g = \\
		\int_{s_1}^{s_2}\bar{\mu}_p \dfrac{\bar{c}_p^b}{|\mathcal{D}| }\int_{\partial\mathcal{D}} \min\{0,\omega_p  \mathbf{E}_g\cdot\mathbf{n}_g\} + \int_{s_1}^{s_2}\bar{\mu}_p \dfrac{\bar{c}_p}{|\mathcal{D}|} \int_{\partial\mathcal{D}(s)} \max\{0,\omega_p \mathbf{E}_g\cdot\mathbf{n}_g\}.
		\label{eq:reduction1d:cpq:boundary}
	\end{multline}
	
	\noindent As mentioned above, the computation of the normal electric field $\mathbf{E}_g\cdot\mathbf{n}_g$ is not trivial, and it cannot be approximated as a constant, because of all the effects influencing the transversal components.
    In Section~\ref{section:reduction1d:electricField} we discuss our strategy to compute these integrals, exploiting the multi--scale nature of the problem. For now let us introduce the functionals $F_p^+\ : \ \Lambda \to \mathbb{R}$ and $F_p^-\ : \ \Lambda \to \mathbb{R}$, for $p=1,2,3$, defined as follows:

    \begin{equation}
        F_p^+(s) := \int_{\partial\mathcal{D}(s)} \max\{0,\omega_p\mathbf{E}_g\cdot\mathbf{n}_g\}, \quad
        F_p^-(s) := \int_{\partial\mathcal{D}(s)} \min\{0,\omega_p\mathbf{E}_g\cdot\mathbf{n}_g\}, \ \forall s \in\Lambda.
        \label{eq:reduction1d:def:F}
    \end{equation}

    \noindent We substitute them into~\eqref{eq:reduction1d:cpq:boundary} and then into~\eqref{eq:reduction1d:cpq:1}:
	
	\begin{multline}
		\int_{s_1}^{s_2}\dfrac{\partial \bar{c}_p}{\partial t} +
        \int_{s_1}^{s_2}\dfrac{\partial}{\partial s}\left(\omega_p\bar{\mu}_p\bar{c}_pE_\Lambda\right) 
		+ \int_{s_1}^{s_2}\bar{\mu}_p \dfrac{\bar{c}_p}{|\mathcal{D}| } F_p^+ +
        \int_{s_1}^{s_2}\bar{\mu}_p \dfrac{\bar{c}_p^b}{|\mathcal{D}|} F_p^- + \\
        - \int_{s_1}^{s_2}\dfrac{\partial}{\partial s}\left(\bar{\nu}_p \dfrac{\partial \bar{c}_p}{\partial s}\right) +
        \int_{s_1}^{s_2} \dfrac{|\partial\mathcal{D}|}{|\mathcal{D}|} K_p \bar{c}_p 
        = \int_{s_1}^{s_2}|\mathcal{D}|\bar{C}_p, \ \forall s_1,s_2\in\Lambda.
		\label{eq:reduction1d:cpq:2}
	\end{multline}
	
	\noindent Finally, since equation \eqref{eq:reduction1d:cpq:2} holds for any arbitrary $s_1$ and $s_2$ on the line $\Lambda$, we obtain the one-dimensional equation for the volume charge concentrations:
	
	\begin{equation}
		\dfrac{\partial \bar{c}_p}{\partial t} +
        \dfrac{\partial}{\partial s}\left(\omega_p\bar{\mu}_p\bar{c}_pE_\Lambda\right) +
        \bar{\mu}_p \dfrac{\bar{c}_p^b}{|\mathcal{D}|} F_p^- +
        \mu_p \dfrac{\bar{c}_p}{|\mathcal{D}|} F_p^+ 
        - \dfrac{\partial}{\partial s}\left(\bar{\nu}_p \dfrac{\partial \bar{c}_p}{\partial s}\right) +
        \dfrac{|\partial\mathcal{D}|}{|\mathcal{D}|} K_p\bar{c}_p = |\mathcal{D}|\bar{C}_p, \quad \text{on}\ \Lambda.
		\label{eq:reduction1d:cpq:1d}
	\end{equation}
	
	\subsection{Surface charge concentration}
	\label{section:reduction1d:surface_charge_concentration}
    
	Equations~\eqref{eq:3d-3d:2} and~\eqref{eq:3d-3d:3} describe the space-time evolution of the concentration of surface charge on the dielectric interface $ \Gamma $. We focus on the reduction of the second equation, concerning the negative charges, but the same procedure can be also applied to the equation for positive charge.
	
	Consider an arbitrary portion $\Sigma$ of $\Gamma$, defined as in Section~\ref{section:reduction1d:volume_charge_concentration}, and integrate equation~\eqref{eq:3d-3d:2} over it:
	
	\begin{equation}
		\int_\Sigma\dfrac{\partial c_{\Gamma,e}}{\partial t}-\int_\Sigma\nabla_{\Gamma}\cdot(c_{\Gamma,e}\mu_{\Gamma,e}\mathbf{E}_{\Gamma}) = \int_\Sigma\sum_{p=1}^2 c_p \mu_p \mathbf{E}_g\cdot\mathbf{n}_g -\int_\Sigma\sum_{p=1}^2\nu_p\nabla c_p \cdot \mathbf{n}_g
		\label{eq:reduction1d:cGammae:3d}
	\end{equation}
	
	\noindent By applying the divergence theorem on a 2D surface, we obtain:
	
	\begin{multline}
		-\int_\Sigma\nabla_{\Gamma}\cdot(c_{\Gamma,e}\mu_{\Gamma,e}\mathbf{E}_{\Gamma}) = -\int_{\partial\mathcal{D}(s_2)}c_{\Gamma,e}\mu_{\Gamma,e}\mathbf{E}_{\Gamma}\cdot\mathbf{s} + \int_{\partial\mathcal{D}(s_1)}c_{\Gamma,e}\mu_{\Gamma,e}\mathbf{E}_{\Gamma}\cdot\mathbf{s} = \\
		= - \int_{s_1}^{s_2}\dfrac{\partial}{\partial s}\int_{\partial\mathcal{D}(s)} c_{\Gamma,e}\mu_{\Gamma,e}\mathbf{E}_{\Gamma}\cdot\mathbf{s}.
        \label{eq:reduction1d:cGammae:divergence}
	\end{multline}
	
	\noindent We assume that the surface charge concentrations and the mobilities are constant on the boundary of sections of the gas domain $\Omega_g$, as well as the longitudinal component of $\mathbf{E}_\Gamma$:
	
	\begin{assumption}
		$c_{\Gamma,e}\simeq\dfrac{1}{|\partial\mathcal{D}(s)|}\bar{c}_{\Gamma,e}(s)$ and $c_{\Gamma,h}\simeq\dfrac{1}{|\partial\mathcal{D}(s)|}\bar{c}_{\Gamma,h}(s),\ $ where $\ \bar{c}_{\Gamma,e}(s)=\displaystyle\int_{\partial\mathcal{D}(s)}c_{\Gamma,e}$ and $\ \bar{c}_{\Gamma,h}(s)=\displaystyle\int_{\partial\mathcal{D}(s)}c_{\Gamma,h}, \quad \forall s\in\Lambda.$
		\label{assumption:reduction1d:c_gamma_const}
	\end{assumption}
	
	\begin{assumption}
		$\mu_{\Gamma,e}\simeq \bar{\mu}_{\Gamma,e}(s)$ and $\mu_{\Gamma,e}\simeq \bar{\mu}_{\Gamma,e}(s),\quad\forall s\in\Lambda$, where $\bar{\mu}_{\Gamma,e}$ denotes the integral means of $\mu_{\Gamma,e}$ on cross--sections of the gas domain, respectively.
		\label{assumption:reduction1d:mu_gamma_constant}
	\end{assumption}
	
	\begin{assumption}
        $\mathbf{E}_\Gamma\cdot\mathbf{s} \simeq E_\Lambda(s),\,\forall s \in \Lambda.$
		\label{assumption:reduction1d:E_gamma_const}
	\end{assumption}
	
	\noindent Keeping into account the cylindrical shape of $\Omega_g$ and Assumptions~\ref{assumption:reduction1d:c_gamma_const},~\ref{assumption:reduction1d:mu_gamma_constant} and~\ref{assumption:reduction1d:E_gamma_const}, and substituting equation~\eqref{eq:reduction1d:cGammae:divergence} into~\eqref{eq:reduction1d:cGammae:3d}, we obtain:
	
	\begin{multline}
		\int_{s_1}^{s_2} \dfrac{\partial \bar{c}_{\Gamma,e}}{\partial t}  - \int_{s_1}^{s_2}\dfrac{\partial}{\partial s}\left(\dfrac{\bar{c}_{\Gamma,e}}{|\partial\mathcal{D}|} \bar{\mu}_{\Gamma,e}E_\Lambda\right)  = \\
		= \sum_{p=1}^2\int_{s_1}^{s_2}\bar{\mu}_p \int_{\partial\mathcal{D}} \dfrac{\bar{c}_p}{|\mathcal{D}|} \mathbf{E}_g\cdot\mathbf{n}_g +
        \sum_{p=1}^2 \int_{s_1}^{s_2} K_p\dfrac{\bar{c}_p}{|\mathcal{D}|}|\partial\mathcal{D}|.
		\label{eq:reduction1d:cGammae:2}
	\end{multline}
	
	\noindent Recalling the definition~\eqref{eq:reduction1d:def:inflow-outflow} of inflow and outflow boundaries of sections, we set $\bar{c}_p = \bar{c}_p^b$ on $\partial\mathcal{D}^{\text{in}}_p,\, p=1,\,2$, where $\bar{c}_p^b$ is given by the boundary condition~\eqref{eq:3d-3d:bc}, and obtain:
	
	\begin{equation}
		\int_{\partial\mathcal{D}} \dfrac{\bar{c}_p}{|\mathcal{D}|} \mathbf{E}_g\cdot\mathbf{n}_g = \dfrac{\bar{c}_p}{|\mathcal{D}|} F_p^+ + \dfrac{\bar{c}_p^b}{|\mathcal{D}|} F_p^-.
		\label{eq:reduction1d:cGammae:boundary}
	\end{equation}
	
	\noindent Finally, we substitute~\eqref{eq:reduction1d:cGammae:boundary} into~\eqref{eq:reduction1d:cGammae:divergence} and get:
	
	\begin{multline}\nonumber
		\int_{s_1}^{s_2}\dfrac{\partial \bar{c}_{\Gamma,e}}{\partial t}  -
        \int_{s_1}^{s_2}\dfrac{\partial}{\partial s}\left(\bar{c}_{\Gamma,e}\bar{\mu}_{\Gamma,e} E_\Lambda\right) = \\ =
        \sum_{p=1}^2\int_{s_1}^{s_2}\bar{\mu}_p\dfrac{\bar{c}_p^b}{|\mathcal{D}|} F_p^- +
        \sum_{p=1}^2\int_{s_1}^{s_2}\bar{\mu}_p\dfrac{\bar{c}_p}{|\mathcal{D}|} F_p^+ +
        \int_{s_1}^{s_2}\sum_{p=1}^2 K_p\bar{c}_p\dfrac{|\partial\mathcal{D}|}{|\mathcal{D}|}, \quad \forall s_1,s_2\in\Lambda.
	\end{multline}
	
	\noindent Since this holds for any arbitrary $s_1$ and $s_2\in\Lambda$, then:
	
	\begin{equation}
		\dfrac{\partial \bar{c}_{\Gamma,e}}{\partial t}  -
        \dfrac{\partial}{\partial s}\left(\bar{c}_{\Gamma,e}\bar{\mu}_{\Gamma,e} E_\Lambda\right) =
        \sum_{p=1}^2\dfrac{\bar{\mu}_p\bar{c}_p}{|\mathcal{D}|}F_p^+ +
        \sum_{p=1}^2\dfrac{\bar{\mu}_p \bar{c}_p^b}{|\mathcal{D}|}F_p^- +
        \sum_{p=1}^2 K_p\bar{c}_p\dfrac{|\partial\mathcal{D}|}{|\mathcal{D}|}, \quad \text{on}\ \Lambda.
		\label{eq:reduction1d:cGammae:1d:negative}
	\end{equation}
	
	\noindent With a similar procedure, we obtain the following equation for the surface concentration of positive charge:
	
	\begin{equation}
		\dfrac{\partial \bar{c}_{\Gamma,h}}{\partial t}  +
        \dfrac{\partial}{\partial s}\left(\bar{c}_{\Gamma,h}\bar{\mu}_{\Gamma,h} E_\Lambda\right) =
        \dfrac{\mu_3\bar{c}_3}{|\mathcal{D}|}F_3^+ +
        \dfrac{\bar{\mu}_3 \bar{c}_3^b}{|\mathcal{D}|}F_3^- +
        K_3\bar{c}_3\dfrac{|\partial\mathcal{D}|}{|\mathcal{D}|}, \quad \text{on}\ \Lambda.
		\label{eq:reduction1d:cGammah:1d:negative}
	\end{equation}

    \noindent Notice that equations~\eqref{eq:reduction1d:cGammae:1d:negative} and~\eqref{eq:reduction1d:cGammah:1d:negative} have the similar structure, but the former, describing negative surface charge concentration, only depends on the negative volume charge concentrations $\bar{c}_1$ and $\bar{c}_2$ while the latter, describing positive surface charge concentration, only depends on the positive volume charge concentration $\bar{c}_3$. This models the exchange of charges between gas volume and dielectric interface.

    \begin{remark}
        The presence of positive surface charges, called \textit{holes}, is due to the absence of electrons, exchanged between the surface and the gas volume. In particular, a decrease in the concentration of surface electrons is related to a negative incoming flux of negative volume charged species ($\bar{c}_p,\, p=1,\,2$), which corresponds to a positive incoming flux of positive charged species ($\bar{c}_3$), while the opposite holds for the surface holes. Because of the conservation of total charge, we can substitute the inflow fluxes of negative charged particles with the opposite of the inflow fluxes of positive charges, and vice versa, obtaining:\begin{subequations}
        \begin{alignat}{3}
            \dfrac{\partial \bar{c}_{\Gamma,e}}{\partial t}  -
            \dfrac{\partial}{\partial s}\left(\bar{c}_{\Gamma,e}\bar{\mu}_{\Gamma,e} E_\Lambda\right) &=
            \sum_{p=1}^2\dfrac{\bar{\mu}_p\bar{c}_p}{|\mathcal{D}(s)|}F_p^+ -
            \dfrac{\mu_3\bar{c}_3^b}{|\mathcal{D}|}F_3^- +
            \sum_{p=1}^2 K_p\bar{c}_p\dfrac{|\partial\mathcal{D}|}{|\mathcal{D}|}, &\quad \text{on}\ \Lambda.
    		\label{eq:reduction1d:cGammae:1d}\\
            \dfrac{\partial \bar{c}_{\Gamma,h}}{\partial t}  +
            \dfrac{\partial}{\partial s}\left(\bar{c}_{\Gamma,h}\bar{\mu}_{\Gamma,h} E_\Lambda\right) &=
            \dfrac{\mu_3\bar{c}_3}{|\mathcal{D}|}F_3^+ 
            -\sum_{p=1}^2\dfrac{\bar{\mu}_p\bar{c}_p^b}{|\mathcal{D}|}F_p^- +
            K_3\bar{c}_3\dfrac{|\partial\mathcal{D}|}{|\mathcal{D}|}, &\quad \text{on}\ \Lambda.
    		\label{eq:reduction1d:cGammah:1d}
        \end{alignat}
    \end{subequations}
    \end{remark}

	\subsection{Dipole moment}
	\label{section:reduction1d:dipole_moment}
    Equations~\eqref{eq:reduction1d:cGammae:1d} and~\eqref{eq:reduction1d:cGammah:1d} allow to compute the average surface charge on cross--sections of the gas domain, namely its moment of order 0. However, if we want to have more precise information on the charge distribution on cross--sections, we also need to consider the moments of order 1.\\
	Consider equations~\eqref{eq:3d-3d:2} and~\eqref{eq:3d-3d:3}, and denote by $e$ the electron charge. By multiplying the former by $-e$ and the latter by $e$, and then summing the two equations together, we obtain an equation for the total surface charge $q_\Gamma = e \left( c_{\Gamma,h} - c_{\Gamma,e} \right)$:
	
	\begin{equation}
		\dfrac{\partial q_\Gamma}{\partial t}+\nabla_\Gamma \cdot(\sigma_\Gamma \mathbf{E}_\Gamma) = \sum_{p=1}^3 e c_p \mu_p (\mathbf{E}_g\cdot \mathbf{n}_g) \omega_p - \sum_{p=1}^3 e \nu_p (\nabla c_p \cdot \mathbf{n}_g) \omega_p,
		\label{eq:reduction1d:dipole:1}
	\end{equation}
	
	\noindent where $\sigma_\Gamma = e c_{\Gamma, h} \mu_{\Gamma,h} + e c_{\Gamma, e} \mu_{\Gamma,e}$ is the electrical conductivity on $\Gamma$. From Assumptions \ref{assumption:reduction1d:c_gamma_const} and \ref{assumption:reduction1d:mu_gamma_constant} follows that the both these quantities are constant on sections, i.e. $q_\Gamma \simeq \dfrac{1}{|\partial\mathcal{D}|} \bar{q}_\Gamma$, with $\bar{q}_\Gamma = e\bar{c}_{\Gamma,h} - e\bar{c}_{\Gamma,e}$, and $\sigma_\Gamma \simeq \dfrac{1}{|\partial\mathcal{D}|}\bar{\sigma}_\Gamma$ on $\mathcal{D}$, with $\bar{\sigma}_\Gamma = e(\bar{c}_{\Gamma, e}\mu_{\Gamma,e} + \bar{c}_{\Gamma, h}\mu_{\Gamma,h})$. \\
    However, the non--radial transverse components of the electric field are strongly influenced by their distribution on cross--sections, which we represent by exploiting dipole moments. We assume that $q_\Gamma$ can be expressed as a linear combination of three basis functions $\phi_i,\ i=0,\,1,\,2$, defined on $\partial\mathcal{D}$.

    \begin{assumption}
		$q_\Gamma = q_\Gamma^0 \phi_0 + \sum_{j=1}^2 q_\Gamma^j \phi_j$, where $\left\{q_\Gamma^j\right\}_{j=0}^2$ are the expansion coefficients, constant on $\partial\mathcal{D}$, with respect to the basis $\{\phi_i\}_{i=0}^2$, such that the functions $\phi_i: \partial\mathcal{D}\to\mathbb{R},\, i=0,\,1,\,2,$ satisfy the following properties:
        \begin{property}
            $\phi_0 = 1$;
            \label{property:phi_01}
        \end{property}
        \begin{property}
            $\displaystyle\int_{\partial \mathcal{D}}\phi_0 \phi_i =0, \quad i=1,2$.
            \label{property:phi_othogonality}
        \end{property}
		\label{assumption:reduction1d:basis}
	\end{assumption}
	
	\noindent If we multiply equation \eqref{eq:reduction1d:dipole:1} by $\phi_i, \ i=1,\,2$, and integrate over a portion $\Sigma$ of $\Gamma$, delimited by sections $\partial\mathcal{D}(s_1)$ and $\partial\mathcal{D}(s_2)$, $s_1,\ s_2\in\Lambda$, we obtain:
	
	\begin{equation}
		\int_{s_1}^{s_2}\int_{\mathcal{D}(s)}\dfrac{\partial q_\Gamma}{\partial t}\phi_i+
        \int_\Sigma\nabla_\Gamma \cdot(\sigma_\Gamma \mathbf{E}_\Gamma)\phi_i = 
        \int_\Sigma\sum_{p=1}^3 e c_p \mu_p \mathbf{E}_g\cdot \mathbf{n}_g \phi_i 
        -\int_\Sigma\sum_{p=1}^3 e \nu_p (\nabla c_p \cdot \mathbf{n}_g) \omega_p \phi_i,\quad i=1,2.
        \label{eq:reduction1d:dipole:3d}
	\end{equation}
	
	\noindent We integrate by parts the divergence term:
    
	\begin{equation}\nonumber
		\int_\Sigma\nabla_\Gamma \cdot(\sigma_\Gamma \mathbf{E}_\Gamma)\phi_i = 
         \int_\Sigma \nabla_\Gamma \cdot (\sigma_\Gamma \mathbf{E}_\Gamma \phi_i) -
         \int_\Sigma\sigma_\Gamma\mathbf{E}_\Gamma\cdot\nabla\phi_i,\quad i = 1,2,
    \end{equation}

    \noindent and apply the divergence theorem and fundamental theorem of calculus to the first integral, noticing that $\partial\Sigma = \partial\mathcal{D}(s_1) \cup \partial\mathcal{D}(s_2)$:
    
    \begin{multline*}
        \int_\Sigma \nabla_\Gamma \cdot (\sigma_\Gamma \mathbf{E}_\Gamma \phi_i) =
        \int_{\partial\mathcal{D}(s_2)} \sigma_\Gamma (\mathbf{E}_\Gamma \cdot \mathbf{s})\phi_i - \int_{\partial \mathcal{D}(s_1)}\sigma_\Gamma(\mathbf{E}_\Gamma \cdot \mathbf{s})\phi_i  = \\ =
		\int_{s_1}^{s_2}\dfrac{\partial}{\partial s}\left(\int_{\partial\mathcal{D}(s)} \sigma_\Gamma(\mathbf{E}_\Gamma\cdot\mathbf{s})\phi_i\right), \quad i=1,2.
	\end{multline*}
	
	\noindent	Substituting $q_\Gamma$ with the expansion~\eqref{eq:reduction1d:dipole:3d} introduced in Assumption~\ref{assumption:reduction1d:basis}, we get, for $i=1,\,2$:
	
	\begin{equation}\nonumber
		\int_\Sigma\dfrac{\partial q_\Gamma}{\partial t}\phi_i =
        \int_{s_1}^{s_2}\int_{\partial\mathcal{D}(s)}\dfrac{\partial q_\Gamma^0}{\partial t}\phi_0\phi_i +
        \int_{s_1}^{s_2}\int_{\partial\mathcal{D}(s)}\sum_{j=1}^2\dfrac{\partial q_\Gamma^j}{\partial t}\phi_j\phi_i =
        \int_{s_1}^{s_2}\sum_{j=1}^2\dfrac{\partial q_\Gamma^j}{\partial t} \int_{\partial\mathcal{D}(s)}\phi_j\phi_i,
	\end{equation}
	
	\noindent because of Properties~\ref{property:phi_01} and~\ref{property:phi_othogonality}. Moreover, from Property~\ref{property:phi_01}, recalling Assumption~\ref{assumption:reduction1d:E_gamma_const}, we can deduce that

    \begin{equation*}
        \int_{\partial\mathcal{D}(s)} \sigma_\Gamma (\mathbf{E}_\Gamma \cdot \mathbf{s})\phi_i \simeq \dfrac{\bar{\sigma}_\Gamma E_\Lambda}{|\partial\mathcal{D}(s)|}\int_{\partial\mathcal{D}(s)}\phi_i\phi_0 = 0,\quad \text{for}\ i=1,\,2,\ \forall s\in[0,S],
    \end{equation*}

    \noindent and, from Assumptions~\ref{assumption:robin} and~\ref{assumption:reduction1d:cpq_constant},

    \begin{multline*}
        \int_{s_1}^{s_2}\int_{\partial\mathcal{D}(s)}\sum_{p=1}^3 e \nu_p\omega_p (\nabla c_p \cdot \mathbf{n}_g)\phi_i \simeq
        \int_{s_1}^{s_2}\int_{\partial\mathcal{D}(s)} \sum_{p=1}^3 e\omega_p K_p \dfrac{\bar{c}_p}{|\mathcal{D}|}\phi_i =\\ =
        \sum_{p=1}^3 \int_{s_1}^{s_2} e\omega_p K_p \dfrac{\bar{c}_p}{|\mathcal{D}|} \int_{\partial\mathcal{D}} \phi_i\phi_0 =0.
    \end{multline*}

    \noindent Finally, if we split each cross--section of $\Sigma$ into inflow and outflow boundaries, and define the functionals $G_{p,i}^+\ : \ \Lambda\to\mathbb{R}$ and $G_{p,i}^-\ : \ \Lambda\to\mathbb
    R$, for $p=1,2,3$ and $i=1,2$, as:

    \begin{equation}
        G_{p,i}^+(s) := \int_{\partial \mathcal{D}(s)}\max\{0,\omega_p\mathbf{E}_g\cdot\mathbf{n}_g\}\phi_i, \quad
        G_{p,i}^-(s) := \int_{\partial \mathcal{D}(s)}\min\{0,\omega_p\mathbf{E}_g\cdot\mathbf{n}_g\}\phi_i, \ \forall s \in\Lambda,
        \label{eq:reduction1d:def:G}
    \end{equation}

    \noindent we obtain the following equations:
	
	\begin{equation}
		\sum_{j=1}^2 \dfrac{\partial q_\Gamma^j}{\partial t} M_{ij} - \dfrac{\bar{\sigma}_\Gamma}{|\partial\mathcal{D}|}H_i = \sum_{p=1}^3 e\bar{c}_p\dfrac{\bar{\mu}_p}{|\mathcal{D}|} G_{p,i}^+ + \sum_{p=1}^3 e\bar{c}_p^b \dfrac{\bar{\mu}_p }{|\mathcal{D}|} G_{p,i}^-, \quad i=1,2, \quad \mathrm{on} \ \Lambda,
		\label{eq:reduction1d:dipole:2}
	\end{equation}
	
	\noindent where 
    \begin{equation}
        M_{ij}(s):=\int_{\partial \mathcal{D}}\phi_i\phi_j, \ i,\, j=1,\,2,\ \forall s\in [0,S],
        \label{eq:reduction1d:def:M}
    \end{equation}
    \noindent and
    \begin{equation}
        H_i(s) := \int_{\partial\mathcal{D}(s)} \mathbf{E}_\Gamma\cdot\nabla\phi_i, \ i=1,\, 2,\ \forall s\in [0,S].
        \label{eq:reduction1d:def:H}
    \end{equation}

    \noindent Note that also $G_{p,i}^+,\ G_{p,i}^- \ \text{and}\ H_i$ are integrals of 3D fields on curves, as $F_p^+$ and $F_p^-$, and will be treated in a similar way, as we will see in Section~\ref{section:reduction1d:FGH}.

    \begin{remark}
        We have not defined explicit expressions of the basis functions introduced in Assumption~\ref{assumption:reduction1d:basis} yet, and the previous discussion is independent of them, as long as Properties~\ref{property:phi_01} and~\ref{property:phi_othogonality} are satisfied. In Section~\ref{section:reduction1d:electricField} we introduce a definition of $\phi_1$ and $\phi_2$, according to which $q_\Gamma^1\phi_1$ and $q_\Gamma^2\phi_2$ can be interpreted as orthogonal components of the dipole moment on cross--sections of $\Omega_g$.
    \end{remark}
    
	
	\subsection{Complete system for the charge concentration}
	\label{section:reduction1d:complete_system_for_the_charge_concentration}
	The complete reduced 1D system for charge concentrations is given by ~\eqref{eq:reduction1d:cpq:1d},~\eqref{eq:reduction1d:cGammae:1d}, \eqref{eq:reduction1d:cGammah:1d} and~\eqref{eq:reduction1d:dipole:2}:
	
    \begin{subnumcases}{\label{eq:reduction1d:charge:1d}}
        \begin{split}
            \dfrac{\partial \bar{c}_p}{\partial t} +
            \dfrac{\partial}{\partial s}  \left(\omega_p\bar{\mu}_p\bar{c}_pE_\Lambda\right) +
            \bar{\mu}_p \dfrac{\bar{c}_p}{|\mathcal{D}|} F_p^+ &+
            \bar{\mu}_p \dfrac{\bar{c}_p^b}{|\mathcal{D}|} F_p^- 
            - \dfrac{\partial}{\partial s}\left(\bar{\nu}_p \dfrac{\partial \bar{c}_p}{\partial s}\right) +\\
            & +\dfrac{|\partial\mathcal{D}|}{|\mathcal{D}|} K_p\bar{c}_p = |\mathcal{D}|\bar{C}_p,\  p=1,2,3,
        \end{split}
         & $\text{on}\ \Lambda,$\\
		\label{eq:reduction1d:charge:1d:1}
        \dfrac{\partial \bar{c}_{\Gamma,e}}{\partial t}  -
        \dfrac{\partial}{\partial s}\left(\bar{c}_{\Gamma,e}\bar{\mu}_{\Gamma,e} E_\Lambda\right) =
        \sum_{p=1}^2\dfrac{\bar{\mu}_p\bar{c}_p}{|\mathcal{D}|}F_p^+ -
        \dfrac{\bar{\mu}_3 \bar{c}_3^b}{|\mathcal{D}|}F_3^- +
        \sum_{p=1}^2 K_p\bar{c}_p\dfrac{|\partial\mathcal{D}|}{|\mathcal{D}|}, & $\text{on}\ \Lambda,$ \\
		\label{eq:reduction1d:charge:1d:2}
        \dfrac{\partial \bar{c}_{\Gamma,h}}{\partial t}  +
        \dfrac{\partial}{\partial s}\left(\bar{c}_{\Gamma,h}\bar{\mu}_{\Gamma,h} E_\Lambda\right) =
        \dfrac{\mu_3\bar{c}_3}{|\mathcal{D}|}F_3^+ -
        \sum_{p=1}^2\dfrac{\bar{\mu}_p \bar{c}_p^b}{|\mathcal{D}|}F_p^- +
        K_3\bar{c}_3\dfrac{|\partial\mathcal{D}|}{|\mathcal{D}|}, & $\text{on}\ \Lambda,$ \\
		\label{eq:reduction1d:charge:1d:3}
        \sum_{j=1}^2 \dfrac{\partial q_\Gamma^j}{\partial t} M_{ij} = \dfrac{\bar{\sigma}_\Gamma}{|\partial\mathcal{D}|}H_i + \sum_{p=1}^3 \left(e\bar{c}_p\dfrac{\bar{\mu}_p}{|\mathcal{D}|} G_{p,i}^+ + e\bar{c}_p^b \dfrac{\bar{\mu}_p }{|\mathcal{D}|} G_{p,i}^+\right), \quad i=1,2, & $\mathrm{on} \ \Lambda.$
		\label{eq:reduction1d:charge:1d:4}
    \end{subnumcases}
	
	\noindent Observe that the coupling among these equations is weak and embedded in $F_p^\pm$ and $G_{p,i}^\pm$, $p=1,\,2,\,3,\ i=1,\,2$ because of their dependence on the transverse electric field, which in turn depends on the charges, as detailed in Section~\ref{section:reduction1d:electricField}. However, it is mutual and they cannot be decoupled.
	
	We point out that this problem can be extended to ramified domains with thin branches, such as the electrical treeing, where the reduced domain coincides with a 1D graph, by imposing charge conservation conditions at the junctions.

	\section{Electric field}
	\label{section:reduction1d:electricField}
	As discussed in Section~\ref{section:reduction1d:3dProblem}, Assumption~\ref{assumption:reduction1d:cpq_constant} implies the presence of a transverse electric field in both $\Omega_g$ and $\Omega_s$. Thanks to the model derived in~\cite{crippa2024mixed}, the longitudinal component of $\mathbf{E}_g$ along $\Lambda$ can be computed, as well as the whole complete electric field $\mathbf{E}_s$ in the external dielectric domain. However, an approximation of the transversal electric field in the gas cannot be trivially obtained from the reduced model discussed in~\cite{crippa2024mixed}, since only a constant charge distribution on sections of $\Omega_g$ and their boundary is considered. 
    \\
    To compute the transverse electric field in the gas we exploit the thin and elongated shape on the treeing branches. As we will see in Section~\ref{section:reduction1d:adimensionalization}, this allows to decouple its longitudinal component from the transverse component. Moreover, since the branches of the tree are thinner than the external bulk (Assumption~\ref{assumption:coupled3d1d:thin}), we can introduce a mid--scale cylinder $\Omega_m$, containing $\Omega_g$ as in Figure~\ref{figure:3d-3d:domain}, which is large enough not to be affected by the presence of charge in the gas, but still smaller than the dielectric bulk. This way we can use the mixed--dimensional electrostatic model presented in~\cite{crippa2024mixed} to compute the longitudinal electric field in the gas and on $\partial\Omega_s$. We call the latter ``external electric field''.
    In this section we compute by superimposition of effects the transverse components of the electric field $\mathbf{E}_g$ on sections of the gas domain, taking into account different sources: two orthogonal components of a given external electric field, the surface charge distributions, described by $q_\Gamma^j,\ j=0,1,2$, and the average volume total charge concentration, given by the sums of the 1D unknown charge concentrations multiplied by the respective unite charges of each family. 
    
    In the following we discuss an approach for the computation of the transverse electric field.

    \subsection{Adimensionalization}
    \label{section:reduction1d:adimensionalization}
    In order to get rid of the dependence of the problem on the dimension of the inner gas domain $\Omega_g$, we start by adimensionalizing the equations. Given a quantity $f$, we split it into the product of an adimensional part $\tilde{f}$ and the corresponding units of measure $\hat{f}$, so that it can be expressed as $f = \tilde{f}\hat{f}$. Moreover, we normalize the distance from the axis of $\Omega$ by dividing it by the radius $R_g$ of the 3D gas domain $\Omega_g$, and we normalize the length of $\Omega$ by dividing it by the length $L$ of the gas domain $\Omega_g$. \\
    Consider a cylindrical coordinate system $(r,\theta,s)$, centered on $\Lambda$. Then, the adimensionalized coordinates are $(\tilde{r},\tilde{\theta},\tilde{s}) = \left(\dfrac{r} {R_g},\theta, \dfrac{s}{L}\right)$.
    
    Let us rewrite the adimensionalized divergence $\tilde{\nabla}\cdot$ of the electric field, in terms of the potential:

    \begin{align*}
        -\nabla\cdot\mathbf{E} &=
        \nabla\cdot(\nabla\Phi) = 
        \dfrac{1}{r} \dfrac{\partial}{\partial r} \left( r \dfrac{\partial \Phi}{\partial r} \right) + \dfrac{1}{r^2} \dfrac{\partial^2 \Phi}{\partial \theta^2} + \dfrac{\partial^2 \Phi}{\partial s^2} = \\ &=
        \dfrac{1}{\tilde{r}R_g} \dfrac{\partial}{\partial \tilde{r}}\left( \tilde{r}R_g \dfrac{\partial \Phi}{\tilde{r}} \dfrac{\partial\tilde{r}}{\partial r} \right)\dfrac{\partial\tilde{r}}{\partial r} + \dfrac{1}{\tilde{r}^2 R_g^2} \dfrac{\partial^2 \Phi}{\partial\tilde{\theta}^2} + \dfrac{\partial}{\partial \tilde{s}} \left( \frac{\partial \Phi}{\partial \tilde{s}} \dfrac{\partial\tilde{s}}{\partial s} \right)\dfrac{\partial\tilde{s}}{\partial s} = \\ &=
        \dfrac{1}{\tilde{r}R_g} \dfrac{\partial}{\partial \tilde{r}}\left( \tilde{r}R_g \dfrac{\partial \Phi}{\tilde{r}} \dfrac{1}{R_g} \right)\dfrac{1}{R_g} + \dfrac{1}{\tilde{r}^2 R_g^2} \dfrac{\partial^2 \Phi}{\partial\tilde{\theta}^2} + \dfrac{\partial}{\partial \tilde{s}} \left( \frac{\partial \Phi}{\partial \tilde{s}} \dfrac{1}{L} \right)\dfrac{1}{L} = \\ &=
        \dfrac{1}{R_g^2} \left( \left( \dfrac{1}{\tilde{r}}\dfrac{\partial}{\partial \tilde{r}} \left( \tilde{r}\dfrac{\partial\tilde{\Phi}}{\partial\tilde{r}} \right) + \dfrac{1}{\tilde{r}} \dfrac{\partial^2 \tilde{\Phi}}{\partial\tilde{\theta}^2} \right) \hat{\Phi} +\frac{\partial^2 \Phi}{\partial \tilde{s}^2}\dfrac{R_g^2}{L^2} \right).
    \end{align*}

    \noindent Notice that the last term can be neglected since, by Assumption~\ref{assumption:coupled3d1d:thin}, $R_g\ll L$, and the first term represents the 2D divergence $\tilde{\nabla}_{\tilde{\mathcal{D}}}\cdot$ of the electric field on an adimensional section $\tilde{\mathcal{D}}$ of $\Omega_g$, of radius 1. Thus,

    \begin{equation}
        \nabla\cdot\mathbf{E} \approx \dfrac{\hat{\Phi}}{R_g^2} \tilde{\nabla}_{\tilde{\mathcal{D}}} \cdot\tilde{\mathbf{E}} .
        \label{eq:reduction1d:adim:divE}
    \end{equation}

    \noindent Moreover, the radial component of the electric field can be written as:

    \begin{equation}
        \mathbf{E}\cdot\mathbf{r} =
        -\dfrac{\partial \Phi}{\partial r} = 
        -\dfrac{\partial\tilde{\Phi}}{\partial\tilde{r}}\hat{\Phi}\dfrac{\partial\tilde{r}}{\partial r} =
        -\dfrac{\partial\tilde{\Phi}}{\partial\tilde{r}}\dfrac{\hat{\Phi}}{R_g} = 
        \tilde{\mathbf{E}}\cdot\mathbf{r}\dfrac{\hat{\Phi}}{R_g}.
        \label{eq:reduction1d:adim:En}
    \end{equation}
    
    \noindent Substituting all the quantities in problem~\eqref{eq:electric_field:3d:dim} by separating the adimensional and dimensional parts, and replacing the divergence and the radial component of the electric field as in equations~\eqref{eq:reduction1d:adim:divE} and~\eqref{eq:reduction1d:adim:En}, we obtain the following 2D system:

    \begin{subnumcases}{\label{eq:reduction1d:electric_field:3d:adim}}
        \tilde{\nabla}_{\tilde{\mathcal{D}}}\cdot\tilde{\mathbf{E}}_g = \mathcal{K}\tilde{q}, &$\text{in}\ \tilde{\mathcal{D}},$
        \label{eq:reduction1d:electric_field:3d:1:adim}\\
        \tilde{\nabla}_{\tilde{\mathcal{D}}}\cdot(\epsilon_s\tilde{\mathbf{E}}_s) = 0, &$\text{in}\ \tilde{\mathcal{D}}_s, $
        \label{eq:reduction1d:electric_field:3d:2:adim}\\
        \tilde{\mathbf{E}} = -\tilde{\nabla}_{\tilde{\mathcal{D}}}\tilde{\Phi}, & $\text{in}\ \tilde{\mathcal{D}}\cup\tilde{\mathcal{D}}_s,$
        \label{eq:reduction1d:electric_field:3d:3:adim}\\
        \tilde{\Phi}_g = \tilde{\Phi}_s, & $\text{on}\ \partial\tilde{\mathcal{D}},$
        \label{eq:reduction1d:electric_field:3d:4:adim} \\
        \epsilon_s \tilde{\mathbf{E}}_s\cdot\mathbf{n}_s + \tilde{\mathbf{E}}_g\cdot\mathbf{n}_g = -\mathcal{K}_\Gamma \tilde{q}_\Gamma, &$ \text{on}\ \partial\tilde{\mathcal{D}}, $
        \label{eq:reduction1d:electric_field:3d:5:adim}\\
        \tilde{\mathbf{E}}_s \cdot\mathbf{n} = \tilde{E}^b, & $\text{on}\ \partial(\tilde{\mathcal{D}}\cup\tilde{\mathcal{D}}_s),$
        \label{eq:reduction1d:electric_field:3d:6:adim}
    \end{subnumcases}

    \noindent where $\mathcal{K}=\dfrac{\hat{q}R_g^2}{\hat{\Phi}\epsilon_0}$ and $\mathcal{K}_\Gamma = \dfrac{\hat{q}_\Gamma R_g}{\epsilon_0 \hat{\Phi}}$, and $\tilde{\mathcal{D}}_s$ represents an annulus of the adimensionalized domain $\tilde{\Omega}_s$, included between $\tilde{\mathcal{D}}_g$ and a circle of radius $\tilde{R}_s=\dfrac{R_s}{R_g}$, such that $\tilde{\mathcal{D}}_g \cup \tilde{\mathcal{D}}_s = \tilde{\mathcal{D}}$ (see Figure~\ref{figure:reduction1d:adimensional_sections}).	

    \subsection{Superimposition of effects}
    \label{section:reduction1d:superimposition_of_effects}
	The transversal components of the electric field, given by the solution to problem~\eqref{eq:reduction1d:electric_field:3d:adim}, can be seen as a superimposition of six effects.
    In the following, we decompose the external electric field into its $x$ and $y$ components, considering $x$ and $y$ as orthogonal axes of a cartesian system of coordinates, where the third orthogonal axis $z$ locally coincides with $\Lambda$. With a little abuse of notation, the effects of the six above mentioned effects can be determined as solutions $\mathbf{\Psi}^i,\ i=1,\,2,\,3,\,4,\,5,\,6$ to problem~\eqref{eq:reduction1d:electric_field:3d:adim}, with different data:
    
	\begin{alignat}{4}
		&\text{1.\ Effect of the}\ x\ \text{component of the external electric field:}&\quad &
            \tilde{q} = 0, \, \tilde{q}_\Gamma = 0, \, \tilde{E}^b = \cos\tilde{\theta};
            \tag{S1}
            \label{eq:reduction1d:superimposition:x}\\[1.8ex]
		&\text{2.\ Effect of the}\ y\ \text{component of the external electric field:}&\quad &
			\tilde{q} = 0, \, \tilde{q}_\Gamma = 0, \, \tilde{E}^b = \sin\tilde{\theta};
            \tag{S2}
			\label{eq:reduction1d:superimposition:y}\\[1.8ex]
		&\text{3.\ Effect of the surface charge distribution}\ q_\Gamma^1:&\ &
			\tilde{q} = 0, \, \tilde{q}_\Gamma = \frac{\phi_1}{\mathcal{K}_\Gamma}, \, \tilde{E}^b = 0;
            \tag{S3}
			\label{eq:reduction1d:superimposition:q1}\\[1.8ex]
		&\text{4.\ Effect of the surface charge distribution}\ q_\Gamma^2:&\ &
			\tilde{q} = 0, \, \tilde{q}_\Gamma = \frac{\phi_2}{\mathcal{K}_\Gamma}, \, \tilde{E}^b = 0;
            \tag{S4}
			\label{eq:reduction1d:superimposition:q2}\\[1.5ex]
		&\text{5.\ Effect of the\ mean\ total volume\ charge}\ \bar{q}:&\ &
			\tilde{q}=\frac{1}{\mathcal{K}}, \, \tilde{q}_\Gamma = 0, \, \tilde{E}^b = \frac{1}{2\tilde{R}_s\epsilon_s};
            \tag{S5}
			\label{eq:reduction1d:superimposition:q}\\
		&\text{6.\ Effect of the mean total surface charge}\ q_\Gamma^0:&\ &
			\tilde{q} = 0,\, \tilde{q}_\Gamma = \frac{1}{\mathcal{K}_\Gamma},\, \tilde{E}^b = \frac{1}{\epsilon_s\tilde{R}_s}.
            \tag{S6}
			\label{eq:reduction1d:superimposition:q0}
	\end{alignat}

	\noindent In problems~\eqref{eq:reduction1d:superimposition:q1} and~\eqref{eq:reduction1d:superimposition:q2}, $\phi_1$ and $\phi_2$ denote the two basis functions introduced in Section~\ref{section:reduction1d:dipole_moment}. We can choose $\phi_1$ and $\phi_2$ in such a way that problems~\eqref{eq:reduction1d:superimposition:q1} and~\eqref{eq:reduction1d:superimposition:q2} describe the effect of the surface charge distribution in the $x$ and $y$ directions, respectively: $\phi_1=-\mathbf{\Psi}_g^1\cdot\mathbf{n}_g$ and $\phi_2=-\mathbf{\Psi}_g^2\cdot\mathbf{n}_g$. This way, $\mathbf{\Psi}^3$, such that $\mathbf{\Psi}^3_g = \mathbf{\Psi}_g^1$ and $\mathbf{\Psi}_s^3 =0$, and $\mathbf{\Psi}^4$, such that $\mathbf{\Psi}^4_g = \mathbf{\Psi}_g^2$ and $\mathbf{\Psi}_s^4 =0$, are solutions of problems~\eqref{eq:reduction1d:superimposition:q1} and~\eqref{eq:reduction1d:superimposition:q2}, respectively. This way, as we will see in equation~\eqref{eq:reduction1d:superimposition:fields:g}, the transverse field in the gas only depends on three parameters. Moreover, using equations~\eqref{eq:reduction1d:electric_field:3d:1:adim} and~\eqref{eq:reduction1d:electric_field:3d:3:adim}, we can show that such basis functions satisfy the orthogonality Property~\ref{property:phi_othogonality} with respect to $\phi_0=1$:
	\begin{equation}\nonumber
		\int_{\partial\mathcal{D}}\phi_i\phi_0 = \int_{\partial\mathcal{D}}\phi_i = -\dfrac{\epsilon_0}{e}\int_{\partial\mathcal{D}}\mathbf{\Psi}_g^i\cdot\mathbf{n}_g = -\dfrac{\epsilon_0}{e}\int_{\mathcal{D}} \nabla\cdot\mathbf{\Psi}_g^i=0, \quad i=1,\,2.
	\end{equation}

	\noindent If the total electric field $\tilde{\mathbf{E}}$ satisfying~\eqref{eq:reduction1d:electric_field:3d:adim} is given by a linear combination of the solutions to problems~\eqref{eq:reduction1d:superimposition:x}-~\eqref{eq:reduction1d:superimposition:q0}:
	
	\begin{equation}
		\tilde{\mathbf{E}} = \sum_{k=1}^6 \tilde{a}_k \mathbf{\Psi}^k,
		\label{eq:reduction1d:superimposition:sum}
	\end{equation}
	
	\begin{figure}
		\centering
		\begin{tikzpicture}\scriptsize
			\draw (0,0) circle (1.8cm);
			\draw[fill=gray!30, dashed] (0,0) circle (.9cm) node at (.5,.5) {$\tilde{\mathcal{D}}_s$};
			\draw[fill=gray!60] (0,0) circle (.5cm) node at (.22,.22) {$\tilde{\mathcal{D}_g}$};
			\node at (.8,.8) {$\tilde{\mathcal{D}}$};
			
			\draw (0,0)--(-.9,0) node[below, pos=0.7] {$\tilde{R}_s$};
		\end{tikzpicture}
		\caption{Section of the adimensionalized domain $\tilde{\Omega}$, on which we define equation~\eqref{eq:reduction1d:electric_field:3d:adim}. The section $\tilde{\mathcal{D}}_g$ is a circle of radius 1, $\tilde{\mathcal{D}}$ is circle of radius $\tilde{R}_s$ and $\tilde{\mathcal{D}}_s = \tilde{\mathcal{D}}\setminus \tilde{\mathcal{D}}_g$ denotes the region between the two.}
		\label{figure:reduction1d:adimensional_sections}
	\end{figure}
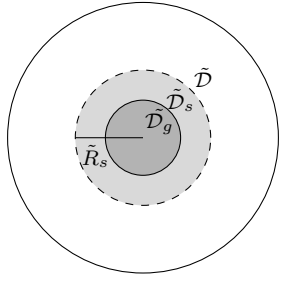
	
	\begin{figure}
		\centering
		\scriptsize
		\subfloat[Radial component \label{figure:reduction1d:soe:radial}]{
			\centering
			\begin{tikzpicture}
				\draw[fill=gray!30] (0,0) circle (1cm);
				\draw[->, thick] (0,0)--(1.2,0);
				\draw[->, thick] (0,0)--(-1.2,0);
				\draw[->, thick] (0,0)--(0,1.2);
				\draw[->, thick] (0,0)--(0,-1.2);
				\draw[->, thick] (0,0)--(.9,.9);
				\draw[->, thick] (0,0)--(-.9,.9);
				\draw[->, thick] (0,0)--(-.9,-.9);
				\draw[->, thick] (0,0)--(.9,-.9);
			\end{tikzpicture}}
			\hspace{1cm}
			\subfloat[$x$ component \label{figure:reduction1d:soe:x}]{
			\centering
			\begin{tikzpicture}
				\draw[fill=gray!30] (0,0) circle (1cm);
				\draw[->, thick] (-1.2,0)--(1.2,0);
				\draw[->, thick] (-1.2,.4)--(1.2,.4);
				\draw[->, thick] (-1.2,.8)--(1.2,.8);
				\draw[->, thick] (-1.2,-.4)--(1.2,-.4);
				\draw[->, thick] (-1.2,-.8)--(1.2,-.8);
			\end{tikzpicture}}
			\hspace{1cm}
			\subfloat[$y$ component \label{figure:reduction1d:soe:y}]{
			\centering
			\begin{tikzpicture}
				\draw[fill=gray!30] (0,0) circle (1cm);
				\draw[->, thick] (0,-1.2)--(0,1.2);
				\draw[->, thick] (.4,-1.2)--(.4,1.2);
				\draw[->, thick] (.8,-1.2)--(.8,1.2);
				\draw[->, thick] (-.4,-1.2)--(-.4,1.2);
				\draw[->, thick] (-.8,-1.2)--(-.8,1.2);
			\end{tikzpicture}}
			\caption{Different components of the electric field given by the considered effects: the radial component in Figure~\ref{figure:reduction1d:soe:radial} is given by the effect of the constant charge concentrations ($\mathbf{\Psi}_5$, $\mathbf{\Psi}_6$), while the components along the $x$ and $y$ axes are given by the respective components of the external electric field and of the dipole moment ($\mathbf{\Psi}_1$ and $\mathbf{\Psi}_3$ along $x$, and $\mathbf{\Psi}_2$ and $\mathbf{\Psi}_4$ along $y$).}
			\label{figure:reduction1d:soe}
	\end{figure}
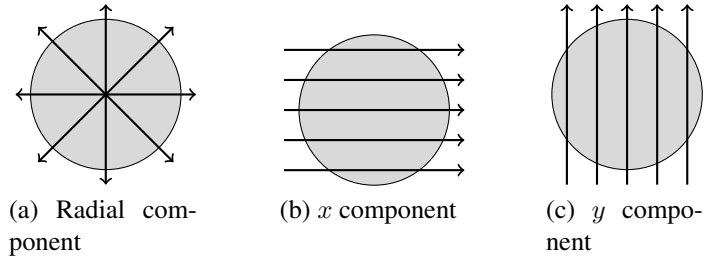

	\noindent then, we can compute the coefficients $\tilde{a}_k,\ k=1,\dots,6$, by substituting~\eqref{eq:reduction1d:superimposition:sum} in~\eqref{eq:reduction1d:electric_field:3d:adim}. Since the divergence of all the solutions but $\mathbf{\Psi}^5$ is zero in $\Omega_g$, the divergence of the electric field in the gas is given by
	
	\begin{equation*}
		\nabla_{\tilde{\mathcal{D}}}\cdot\tilde{\mathbf{E}}_g = \sum_{k=1}^6 \tilde{a}_k\nabla_{\tilde{\mathcal{D}}}\cdot\mathbf{\Psi}_g^k = \tilde{a}_5\nabla_{\tilde{\mathcal{D}}}\cdot\mathbf{\Psi}_g^5 = \tilde{a}_5.
	\end{equation*}

	\noindent Substituting it into equation~\eqref{eq:reduction1d:electric_field:3d:1:adim}, we obtain $ \tilde{a}_5 =  \tilde{q}\mathcal{K}. $\\
	Substitute now equation~\eqref{eq:reduction1d:superimposition:sum} into~\eqref{eq:reduction1d:electric_field:3d:4:adim}, where the only non-null contributions to the jump on the interface are given by $\mathbf{\Psi}^3$, $\mathbf{\Psi}^4$ and $\mathbf{\Psi}^6$:
	
	\begin{equation*}
		-\mathcal{K}_\Gamma \tilde{q}_\Gamma =
        -\mathcal{K}_\Gamma \sum_{j=0}^2 \tilde{q}_\Gamma^j\phi^j =
        -\tilde{a}_3\phi_1 -\tilde{a}_4\phi_2 -\tilde{a}_6\phi_0.
	\end{equation*}

	\noindent Then, $\tilde{a}_3 = \mathcal{K}_\Gamma\tilde{q}_\Gamma^1$, $\tilde{a}_4 = \mathcal{K}_\Gamma\tilde{q}_\Gamma^2$ and $\tilde{a}_6 = \mathcal{K}_\Gamma\tilde{q}_\Gamma^0$.\\
	Finally, the right--hand side of the Neumann condition~\eqref{eq:reduction1d:electric_field:3d:6:adim} is given by the contributions of the volume and surface charge distribution, described by $\mathbf{\Psi}^i,\ i=3,\dots,6$, and the action of an external, known and possibly non-null, electric field $\tilde{\mathbf{E}}^\text{ext}$:
	
	\begin{equation}
		\tilde{\mathbf{E}}^\text{ext}\cdot\mathbf{n} =
		\sum_{k=1}^2 \tilde{a}_k\mathbf{\Psi}^k\cdot\mathbf{n}.
		\label{eq:reduction1d:superimposition:Eext}
	\end{equation}
    
	\noindent If we denote by $\tilde{E}^\text{ext}_x$ and $\tilde{E}^\text{ext}_y$ components of $\tilde{\mathbf{E}}^\text{ext}$ along the directions of the axes $x$ and $y$, respectively, then from equation~\eqref{eq:reduction1d:superimposition:Eext} we obtain $\tilde{a}_1 = \tilde{E}^\text{ext}_x$ and $\tilde{a}_2 = \tilde{E}^\text{ext}_y.$
	
	\noindent Finally, we can write the total electric field as follows:
	
	\begin{align*}
		\mathbf{E} = \dfrac{\hat{\Phi}}{R_g}\tilde{\mathbf{E}} &=
        \dfrac{\hat{\Phi}}{R_g}
        \left(
        \tilde{E}^\text{ext}_x\mathbf{\Psi}^1 +
        \tilde{E}^\text{ext}_y\mathbf{\Psi}^2 +
        \mathcal{K}_\Gamma  \tilde{q}_\Gamma^1\mathbf{\Psi}^3 +
        \mathcal{K}_\Gamma  \tilde{q}_\Gamma^2\mathbf{\Psi}^4 +
        \mathcal{K} \tilde{q}\mathbf{\Psi}^5 +
        \mathcal{K}_\Gamma  \tilde{q}_\Gamma^0\mathbf{\Psi}^6
        \right) = \\
        &=
        E^\text{ext}_x\mathbf{\Psi}^1 +
        E^\text{ext}_y\mathbf{\Psi}^2 +
        \frac{{q}_\Gamma^1}{\epsilon_0}\mathbf{\Psi}^3 +
        \frac{{q}_\Gamma^2}{\epsilon_0}\mathbf{\Psi}^4 +
        R_g \frac{q}{\epsilon_0}\mathbf{\Psi}^5 +
        \frac{q_\Gamma^0}{\epsilon_0}\mathbf{\Psi}^6
	\end{align*}
	
	\noindent In particular, noticing that $\mathbf{\Psi}_g^6=0$ and that on $\mathbf{E}_\Gamma$ the only effects are produced by the external field and the surface charges $q_\Gamma^1$ and $q_\Gamma^2$, we obtain:
	
    \begin{subnumcases}{}
        \mathbf{E}_g = 
        a_1 \mathbf{\Psi}_g^1+ 
        a_2 \mathbf{\Psi}_g^2+
        a_5 \mathbf{\Psi}_g^5,
		\label{eq:reduction1d:superimposition:fields:g}\\
        {\mathbf{E}}_s = 
        a_1 \mathbf{\Psi}_s^1+ 
        a_2 \mathbf{\Psi}_s^2+
        a_5 \mathbf{\Psi}_s^5+
        a_6 \mathbf{\Psi}_s^6,
		\label{eq:reduction1d:superimposition:fields:s}\\
        {\mathbf{E}}_\Gamma = 
        a_1 \mathbf{\Psi}^1\big|_\Gamma+
        a_2 \mathbf{\Psi}^2\big|_\Gamma.
        \label{eq:reduction1d:superimposition:fields:h}
    \end{subnumcases}

    \noindent where $a_1 = E^\text{ext}_x + \dfrac{q_\Gamma^1}{\epsilon_0}$, $a_2 = E^\text{ext}_y + \dfrac{q_\Gamma^2}{\epsilon_0}$, $a_5 = R_g \dfrac{q}{\epsilon_0} = R_g \dfrac{e}{\epsilon_0}\sum_{p=1}^3\omega_p \bar{c}_p$ and $a_6=\dfrac{q_\Gamma^0}{\epsilon_0}$.

    \noindent We can compute the exact solutions to problems~\eqref{eq:reduction1d:superimposition:x} -~\eqref{eq:reduction1d:superimposition:q0}. In particular, $\bm{\Psi}_1$ and $\bm{\Psi}_2$ are directed as the axes $x$ and $y$, respectively, since their only source is the external electric field in the corresponding direction; moreover, from the boundary condition~\eqref{eq:reduction1d:electric_field:3d:6:adim} and from the interface condition~\eqref{eq:reduction1d:electric_field:3d:5:adim} we can deduce the magnitude of the solution in the two domains, obtaining

    \begin{equation}
        \begin{alignedat}{4}
            &\bm{\Psi}_g^1 &= \epsilon_s\mathbf{x}, \qquad &\bm{\Psi}_s^1 &= \mathbf{x}, \\
            &\bm{\Psi}_g^2 &= \epsilon_s\mathbf{y}, \qquad &\bm{\Psi}_s^2 &=\mathbf{y}.
        \end{alignedat}
        \label{eq:reduction1d:exact_psi1-2}
    \end{equation}

    \noindent By integrating equation~\eqref{eq:reduction1d:electric_field:3d:1:adim} on a circle $\mathcal{W}_g$ contained in $\tilde{\mathcal{D}}$ and then equation~\eqref{eq:reduction1d:electric_field:3d:2:adim} on a circle $\mathcal{W}_s$, such that $\tilde{\mathcal{D}} \subset \mathcal{W}_s \subset \tilde{\mathcal{D}}_s$, with the data defined in~\eqref{eq:reduction1d:superimposition:q}, we obtain the exact solution

    \begin{equation*}
        \bm{\Psi}_g^5 = \dfrac{\tilde{r}}{2}\mathbf{n}, \qquad \bm{\Psi}_s^5 = \dfrac{1}{2\tilde{r}\epsilon_s}\mathbf{n},
    \end{equation*}

    \noindent directed as the outgoing normal vector $\mathbf{n}$ to the lateral surface of the circles, and only dependent on the radial coordinate $\tilde{r}$. Finally, integrating equation~\eqref{eq:reduction1d:electric_field:3d:2:adim} with the data defined in~\eqref{eq:reduction1d:superimposition:q0} on the circle $\mathcal{W}_s$, we can compute the exact solution $\bm{\Psi}_s^6$ in $\tilde{\mathcal{D}}_s$, while the corresponding $\bm{\Psi}_g^6$ in $\tilde{\mathcal{D}}$ is given by the interface condition~\eqref{eq:reduction1d:electric_field:3d:5:adim}:

    \begin{equation*}
        \bm{\Psi}_g^6 = 0, \qquad \bm{\Psi}_s^6 = \dfrac{1}{\tilde{r}\epsilon_s}\mathbf{n}.
    \end{equation*}

    \noindent For further details about the computation of the analytical solutions, see Appendix~\ref{appendix:exact_sol_superimposition_of_effects}.

    \subsection{Properties of the coefficients F, G and H}
    \label{section:reduction1d:FGH}

    Given the exact solutions to problems~\eqref{eq:reduction1d:superimposition:x} and~\eqref{eq:reduction1d:superimposition:y}, we know the analytical expression of the basis functions $\phi_1$ and $\phi_2$, and we can easily compute their gradients. 

    \begin{equation*}
        \begin{alignedat}{4}
            &\phi_1 = -\epsilon_s\mathbf{x}\cdot\mathbf{n}_g =
            -\epsilon_s\dfrac{x}{R_g}, &\qquad
            &\phi_2 = -\epsilon_s\mathbf{y}\cdot\mathbf{n}_g =
            -\epsilon_s\dfrac{y}{R_g}, \\
            &\nabla\phi_1
            = \dfrac{\epsilon_s}{R_g}\left[ 1, 0, 0 \right]^T, &\qquad
            &\nabla\phi_1 
            = \dfrac{\epsilon_s}{R_g}\left[0, 1, 0 \right]^T,
        \end{alignedat}
    \end{equation*}

    \noindent 
    Then, substituting these gradients and the analytical expressions of $\phi_1$ and $\phi_2$ in equation~\eqref{eq:reduction1d:def:H}, we obtain explicit expressions of the coefficients $H_1$, $H_2$ and $M_{ij},\ i,j=1,2$:

    \begin{subequations}
        \begin{align}
            &H_1 = \int_{\partial\mathcal{D}} \left(a_1\epsilon_s\mathbf{x} + a_2\epsilon_s\mathbf{y} \right)\cdot \nabla\phi_1\big|_\Gamma = -\epsilon_s^2 \pi a_1, 
            \label{eq:reduction1d:H1:expression}\\
            &H_2 = \int_{\partial\mathcal{D}} \left( a_1\epsilon_s\mathbf{x} + a_2\epsilon_s\mathbf{y} \right)\cdot \nabla\phi_2\big|_\Gamma = -\epsilon_s^2 \pi a_2,
            \label{eq:reduction1d:H2:expression}\\
            &M_{11} = M_{22} = \epsilon_s^2 R_g \pi, \quad
            M_{12} = M_{21} = 0.
            \label{eq:reduction1d:M:expression}
        \end{align}
        \label{eq:reduction1d:HM:expression}
    \end{subequations}
    
    \noindent Moreover, in Appendices~\ref{appendix:F} and~\ref{appendix:G} we compute the values of the coefficients $F_p^+,\,F_p^-,\,G_{p,i}^+$ and $G_{p,i}^-$ for $p=1,2,3$ and $i=1,2$, as functions of $a_1,\,a_2$ and $a_5$ (equations~\eqref{eq:reduction1d:F:appendix:expression} and~\eqref{eq:reduction1d:G:expression}), from which we can deduce the following result, as discussed in Remarks~\ref{remark:appendix:F:sign} and~\ref{remark:appendix:G:sign} of Appendix~\ref{appendix:FG}:

    \begin{lemma}
        The following inequalities hold:
        \begin{enumerate}
            \item $F_p^+ \geq 0,\,$ for $ p=1,\,2,\,3$;
            \item $F_p^- \leq 0,\,$ for $ p=1,\,2,\,3$;
            \item $G_{p,i}^+ \geq 0$ if $a_i \geq 0,\,$ for $p=1,\,2$ and $i=1,\,2$;
            \item $G_{3,i}^+ \leq 0$ if $a_i \geq 0,\,$ for $i=1,\,2$;
            \item $G_{p,i}^- \geq 0$ if $a_i \geq 0,\,$ for $p=1,\,2$ and $i=1,\,2$;
            \item $G_{3,i}^- \leq 0$ if $a_i \geq 0,\,$ for $i=1,\,2$.
        \end{enumerate}
        \label{theorem:FG:sign}
    \end{lemma}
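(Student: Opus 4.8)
The plan is to deduce all six inequalities from the explicit evaluations of the functionals $F_p^\pm$ and $G_{p,i}^\pm$ carried out in Appendices~\ref{appendix:F} and~\ref{appendix:G}. The starting point is to compute, on a generic cross--section boundary $\partial\mathcal{D}(s)$, the normal field $\mathbf{E}_g\cdot\mathbf{n}_g$ appearing in the definitions~\eqref{eq:reduction1d:def:F} and~\eqref{eq:reduction1d:def:G}, together with the basis functions $\phi_1,\phi_2$. Parametrizing $\partial\mathcal{D}(s)$ by the polar angle $\theta$ so that $\mathbf{n}_g=(\cos\theta,\sin\theta,0)$ in the local cross--section frame, and inserting the superimposition formula~\eqref{eq:reduction1d:superimposition:fields:g} with the closed forms $\mathbf{\Psi}_g^1=\epsilon_s\mathbf{x}$, $\mathbf{\Psi}_g^2=\epsilon_s\mathbf{y}$ from~\eqref{eq:reduction1d:exact_psi1-2} and $\mathbf{\Psi}_g^5=\tfrac{\tilde r}{2}\mathbf{n}$ (so that $\mathbf{\Psi}_g^5\cdot\mathbf{n}_g=\tfrac12$ on $\partial\mathcal{D}$), one finds that $\mathbf{E}_g\cdot\mathbf{n}_g$ is an affine--trigonometric function of $\theta$ of the form $\epsilon_s(a_1\cos\theta+a_2\sin\theta)+\tfrac12 a_5$, while $\phi_1=-\epsilon_s\cos\theta$ and $\phi_2=-\epsilon_s\sin\theta$ on $\partial\mathcal{D}(s)$.

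Items~1 and~2 are then immediate and need nothing of the above: $\max\{0,\cdot\}\ge 0$ and $\min\{0,\cdot\}\le 0$ pointwise and $\partial\mathcal{D}(s)$ has positive length, hence $F_p^+\ge 0$ and $F_p^-\le 0$ for every $p$. For the statements on $G_{p,i}^\pm$ I would first reduce to the single family $p=3$: since $\omega_1=\omega_2=-1$ and $\omega_3=+1$, the elementary identities $\max\{0,-x\}=-\min\{0,x\}$ and $\min\{0,-x\}=-\max\{0,x\}$, applied inside~\eqref{eq:reduction1d:def:G}, give $G_{p,i}^+=-G_{3,i}^-$ and $G_{p,i}^-=-G_{3,i}^+$ for $p=1,2$. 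Hence items~3 and~5 follow from items~6 and~4 respectively, and it remains only to prove $G_{3,i}^+\le 0$ and $G_{3,i}^-\le 0$ under the hypothesis $a_i\ge 0$.

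To do so I would write $\max\{0,x\}=\tfrac12(x+|x|)$ and $\min\{0,x\}=\tfrac12(x-|x|)$ and split each of $G_{3,i}^\pm$ into a \emph{linear} part and an \emph{absolute--value} part. The linear part is an elementary trigonometric integral which, by the orthogonality relations $\int_0^{2\pi}\cos^2\theta\,d\theta=\int_0^{2\pi}\sin^2\theta\,d\theta=\pi$ and $\int_0^{2\pi}\sin\theta\cos\theta\,d\theta=\int_0^{2\pi}\cos\theta\,d\theta=\int_0^{2\pi}\sin\theta\,d\theta=0$, collapses to an explicit multiple of $a_i$. Writing $\mathbf{E}_g\cdot\mathbf{n}_g=A\cos(\theta-\theta_0)+c$ with $A=\epsilon_s\sqrt{a_1^2+a_2^2}$, $c=\tfrac12 a_5$ and phase $\theta_0$ fixed by $(\cos\theta_0,\sin\theta_0)\propto(a_1,a_2)$, the absolute--value part reduces, after the shift $\psi=\theta-\theta_0$ and because $|A\cos\psi+c|\,\sin\psi$ is odd in $\psi$, to a quantity of the sign of $a_i$ times
\[
J(A,c):=\int_0^{2\pi}\bigl|A\cos\psi+c\bigr|\cos\psi\,d\psi .
\]
Assembling the two parts gives $G_{3,i}^{+}=-\tfrac12\epsilon_s^2 R_g\,a_i\bigl(\pi+J(A,c)/A\bigr)$ and $G_{3,i}^{-}=-\tfrac12\epsilon_s^2 R_g\,a_i\bigl(\pi-J(A,c)/A\bigr)$ (the degenerate case $a_1=a_2=0$ gives $G_{3,i}^\pm=0$), so the whole lemma reduces to the single inequality $|J(A,c)|\le A\pi$.

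The only genuinely nontrivial step is thus this bound on $J$, which I expect to be the main obstacle, and which I would prove by a case split on whether $A\cos\psi+c$ keeps a constant sign over the circle. If $|c|\ge A$ the sign is constant and $J=\pm A\pi$ at once. Otherwise, with $\beta:=\arccos(-c/A)\in(0,\pi)$, the sign is constant on each of the arcs $(-\beta,\beta)$ and $(\beta,2\pi-\beta)$, and an explicit integration (using $c=-A\cos\beta$) yields $J=A\bigl(2\beta-\pi-\sin 2\beta\bigr)$. Then $|J|\le A\pi$ is equivalent to $0\le 2\beta-\sin 2\beta\le 2\pi$ on $(0,\pi)$, which is the statement that $t\mapsto t-\sin t$ is nondecreasing on $[0,2\pi]$ (its derivative $1-\cos t\ge 0$) with endpoint values $0$ and $2\pi$. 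Substituting $|J|\le A\pi$ back makes the brackets $\pi\pm J/A$ nonnegative, so $G_{3,i}^\pm$ has the sign opposite to $a_i$, which proves items~4 and~6 and, via the reduction above, items~3 and~5; items~1 and~2 were already done. The sign reversal between $p=3$ and $p=1,2$ is precisely the $\omega_3=+1$ versus $\omega_{1,2}=-1$ flip, and all the displayed identities are the ``standard but long'' material assembled in Appendix~\ref{appendix:FG}.
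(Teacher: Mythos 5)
Your proposal is correct and follows essentially the same route as the paper: both arguments reduce to explicit trigonometric integrals of $\mathbf{E}_g\cdot\mathbf{n}_g = A\cos(\theta-\theta_0)+c$ over the arcs where it changes sign, with your $\beta$ playing the role of the paper's $\tilde{\theta}$, your case split on $|c|$ versus $A$ matching the paper's split on $\rho$ versus $v$, and your inequality $0\le 2\beta-\sin 2\beta\le 2\pi$ being the same elementary fact that makes the paper's closed forms $v\tilde{\theta}+\rho\sin\tilde{\theta}$ and $v(\pi-\tilde{\theta})-\rho\sin\tilde{\theta}$ nonnegative. Your repackaging of the six sign claims into the single bound $|J(A,c)|\le A\pi$ is a tidier presentation of the computation carried out case by case in Appendix~\ref{appendix:FG}, but not a genuinely different method.
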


    \noindent Finally, if we substitute the values of $H_i$ and $M_{ij}$, $i,j=1,2$, and $q_\Gamma^i = a_i + E_i^\text{ext}$ explicitly obtained in equation~\eqref{eq:reduction1d:HM:expression}, into equation~\eqref{eq:reduction1d:charge:1d:4}, we obtain the following:

        \begin{equation}
            \dfrac{\partial a_i}{\partial t} =
            - \dfrac{\bar{\sigma}_\Gamma}{|\partial\mathcal{D}|}\dfrac{a_i}{\epsilon_0 R_g}
            + \dfrac{1}{\epsilon_s^3 R_g \pi} \dfrac{\partial E_i^\text{ext}}{\partial t} 
            + \sum_{p=1}^3 \dfrac{e}{\epsilon_0 \epsilon_s^2 \pi R_g} \dfrac{\bar{\mu}_p}{|\mathcal{D}|}\left( \bar{c}_p G_{p,i}^+ 
            + \bar{c}_p^b G_{p,i}^- \right), \quad i = 1,2,
            \label{eq:reduction1d:dipole:a}
        \end{equation}

        \noindent where $E_1^\text{ext} = E_x^\text{ext}$ and $E_2^\text{ext} = E_y^\text{ext}$.

        We discuss now the orders of magnitude of the terms on the right--hand side of equation~\eqref{eq:reduction1d:dipole:a}, so that we can keep into account only the predominant ones.
        The effect of the charge density, is almost null on the external electric field at a long distance from the gas domain. In fact, the variation of the external electric field is almost completely governed by the boundary conditions for the electrostatic problem, imposed on the dielectric as external boundary condition. In applications of our interest, i.e., AC regime at industrial frequency, the externally imposed electric field is given by a sinusoidal function, whose characteristic time of oscillation is typically of the order of milliseconds ($10^{-3} s$). On the other hand, the evolution in time of $a_i$ is much faster, and this allows to neglect the time derivatives of $E^\text{ext}_i,\, i=1,\,2,$ in equation~\eqref{eq:reduction1d:dipole:a}. Indeed, as detailed in Appendix~\ref{appendix:reduction1d:characteristic_time}, the right--hand side of equation~\eqref{eq:reduction1d:dipole:a} has a characteristic time of order $10^{-10} s$, thus the dynamics of the dipole moment is much faster than that of the external electric field, and therefore we can neglect the time derivative of $E_i^\text{ext},\, i=1,\,2$.

        \section{Properties of the 1D reduced problem}
        \label{section:reduction1d:properties_1D_reduced_problem}
        After the dimensional reduction carried out in Section~\ref{section:reduction1d:modelReduction} and the considerations on the transversal components of the electric field made in Section~\ref{section:reduction1d:electricField}, we end up with a problem made of the reduced equations for volume and surface charge concentrations~\eqref{eq:reduction1d:charge:1d} and the equations~\eqref{eq:reduction1d:dipole:a} in the unknowns $a_i,\, i=1,2$, which to represent two orthogonal components of the dipole moment and of the external electric field in the computation of the integrals of the transversal components of the electric field:

        \begin{subnumcases}{\label{eq:reduction1d:charge:1d:final}}
		\label{eq:reduction1d:charge:1d:final:1}
        \begin{split}
            \dfrac{\partial \bar{c}_p}{\partial t} +
            \dfrac{\partial}{\partial s} & \left(\omega_p\bar{\mu}_p\bar{c}_pE_\Lambda\right) +
            \bar{\mu}_p \dfrac{\bar{c}_p}{|\mathcal{D}|} F_p^+ +
            \bar{\mu}_p \dfrac{\bar{c}_p^b}{|\mathcal{D}|} F_p^- 
            - \dfrac{\partial}{\partial s}\left(\bar{\nu}_p \dfrac{\partial \bar{c}_p}{\partial s}\right) +\\
            & +\dfrac{|\partial\mathcal{D}|}{|\mathcal{D}|} K_p\bar{c}_p = |\mathcal{D}|\bar{C}_p,\  p=1,2,3,
        \end{split}
         & $\text{on}\ \Lambda,$\\
		\label{eq:reduction1d:charge:1d:final:2}
        \dfrac{\partial \bar{c}_{\Gamma,e}}{\partial t}  -
        \dfrac{\partial}{\partial s}\left(\bar{c}_{\Gamma,e}\bar{\mu}_{\Gamma,e} E_\Lambda\right) =
        \sum_{p=1}^2\dfrac{\bar{\mu}_p\bar{c}_p}{|\mathcal{D}|}F_p^+ -
        \dfrac{\bar{\mu}_3 \bar{c}_3^b}{|\mathcal{D}|}F_3^- +
        \sum_{p=1}^2 K_p\bar{c}_p\dfrac{|\partial\mathcal{D}|}{|\mathcal{D}|}, & $\text{on}\ \Lambda,$ \\
		\label{eq:reduction1d:charge:1d:final:3}
        \dfrac{\partial \bar{c}_{\Gamma,h}}{\partial t}  +
        \dfrac{\partial}{\partial s}\left(\bar{c}_{\Gamma,h}\bar{\mu}_{\Gamma,h} E_\Lambda\right) =
        \dfrac{\bar{\mu}_3\bar{c}_3}{|\mathcal{D}|}F_3^+ -
        \sum_{p=1}^2 \dfrac{\mu_p\bar{c}_p^b}{|\mathcal{D}|}F_p^- +
        K_3\bar{c}_3\dfrac{|\partial\mathcal{D}|}{|\mathcal{D}|}, & $\text{on}\ \Lambda,$ \\
		\label{eq:reduction1d:charge:1d:final:4}
        \dfrac{\partial a_i}{\partial t} =
        - \dfrac{\bar{\sigma}_\Gamma}{|\partial\mathcal{D}|}\dfrac{a_i}{\epsilon_0 R_g}
        + \sum_{p=1}^3 \dfrac{e}{\epsilon_0\epsilon_s^2 \pi R_g} \dfrac{\bar{\mu}_p}{|\mathcal{D}|}\left( \bar{c}_p G_{p,i}^+ 
        + \bar{c}_p^b G_{p,i}^- \right), \quad i = 1,2, & $\mathrm{on} \ \Lambda,$
    \end{subnumcases}

        \noindent where $F_p^+,\,F_p^-,\,G_{p,i}^+$ and $G_{p,i}^-$, $p=1,2,3$ are given by equations~\eqref{eq:reduction1d:F:appendix:expression} and~\eqref{eq:reduction1d:G:expression} in Appendix~\ref{appendix:FG}.\\
        We can show that the solution $a_i(t,s),\, i=1,\,2,$ to equation~\eqref{eq:reduction1d:charge:1d:final:4} has the same sign as the initial condition $a_i(0,s),\, i=1,\,2,$ at every time instant $t>0$, for all $s\in[0,\,S],$ and that the charge concentrations, solutions to equations~\eqref{eq:reduction1d:charge:1d:final:1} -~\eqref{eq:reduction1d:charge:1d:final:3}, are non-negative.\\
        In order to do so, let us introduce \textit{production--destruction} ODEs as problems of the form

        \begin{equation*}
            \begin{cases}
                \dfrac{du}{dt} = P(u) - D(u), & t\in(0,T],\\
                u(0) = u_0,
            \end{cases}
        \end{equation*}

        \noindent with $P(u) \geq 0$ and $D(u) \geq 0$ if $u \geq 0$. We call $P$ \textit{production} term and $D$ \textit{destruction} term (see~\cite{burchard2003high} for details). The solution to a production--destruction ODE is non-negative if the initial condition $u_0$ is non-negative and the destruction term statisfies the following property:
        
        \begin{property}
            $D(u)\to 0$ as $u\to 0$.
            \label{property:D:limit}
        \end{property}

        \begin{theorem}
        \label{theorem:ai:positive}
            Equation~\eqref{eq:reduction1d:charge:1d:final:4} is monotone, i.e., its solution $a_i(t)$ has the same sign as the initial condition $a_i(0) = E_i^\text{ext}(0) + \dfrac{q_\Gamma^i(0)}{\epsilon_0}$ for all $t>0$.
        \end{theorem}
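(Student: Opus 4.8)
The plan is to read equation~\eqref{eq:reduction1d:charge:1d:final:4}, at each fixed $s\in[0,S]$, as a \emph{production--destruction} ODE in the single unknown $a_i$ and to invoke the non-negativity principle for such equations recalled just before the statement, using the sign information of Lemma~\ref{theorem:FG:sign}. Setting $\lambda(t):=\bar{\sigma}_\Gamma/(|\partial\mathcal{D}|\,\epsilon_0 R_g)$, equation~\eqref{eq:reduction1d:charge:1d:final:4} reads
\begin{equation*}
    \frac{d a_i}{d t} = -\lambda a_i + \frac{e}{\epsilon_0\epsilon_s^2\pi R_g}\sum_{p=1}^3 \frac{\bar{\mu}_p}{|\mathcal{D}|}\bigl(\bar{c}_p\,G_{p,i}^+ + \bar{c}_p^b\,G_{p,i}^-\bigr),\qquad i=1,2.
\end{equation*}
Along the solution of the coupled system the volume and surface concentrations stay non-negative (the companion statement for~\eqref{eq:reduction1d:charge:1d:final:1}--\eqref{eq:reduction1d:charge:1d:final:3}, established by the same device), so $\lambda\geq0$, $\bar{c}_p\geq0$ and $\bar{c}_p^b\geq0$; I would therefore prove the two positivity statements jointly and here take the concentration positivity for granted.

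Consider first $a_i(0)\geq0$. On the region $\{a_i\geq0\}$, Lemma~\ref{theorem:FG:sign} gives $G_{p,i}^{\pm}\geq0$ for $p=1,2$ (parts~3 and~5) and $G_{3,i}^{\pm}\leq0$ (parts~4 and~6), so I would split the right-hand side as $P_i-D_i$ with
\begin{equation*}
    P_i := \frac{e}{\epsilon_0\epsilon_s^2\pi R_g}\sum_{p=1}^2 \frac{\bar{\mu}_p}{|\mathcal{D}|}\bigl(\bar{c}_p\,G_{p,i}^+ + \bar{c}_p^b\,G_{p,i}^-\bigr)\geq0,\qquad
    D_i := \lambda a_i - \frac{e}{\epsilon_0\epsilon_s^2\pi R_g}\,\frac{\bar{\mu}_3}{|\mathcal{D}|}\bigl(\bar{c}_3\,G_{3,i}^+ + \bar{c}_3^b\,G_{3,i}^-\bigr)\geq0,
\end{equation*}
which is exactly the production--destruction form. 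It then remains to check Property~\ref{property:D:limit}, i.e.\ $D_i\to0$ as $a_i\to0$: the term $\lambda a_i$ vanishes trivially, and for the $G_{3,i}^{\pm}$ contribution I would use that $G_{p,i}^{\pm}$, as defined in~\eqref{eq:reduction1d:def:G}, vanishes at $a_i=0$. Indeed, on $\partial\mathcal{D}$ one has $\mathbf{E}_g\cdot\mathbf{n}_g = \epsilon_s(a_1\cos\theta + a_2\sin\theta) + a_5/2$ and $\phi_1=-\epsilon_s\cos\theta$, $\phi_2=-\epsilon_s\sin\theta$, and the reflection $\theta\mapsto\pi-\theta$ (for $i=1$), resp.\ $\theta\mapsto-\theta$ (for $i=2$), leaves $\mathbf{E}_g\cdot\mathbf{n}_g$ invariant when $a_i=0$ while sending $\phi_i$ to $-\phi_i$; hence the integrand of $G_{p,i}^{\pm}$ is odd and the integral is zero. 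The explicit formulas~\eqref{eq:reduction1d:G:expression} of Appendix~\ref{appendix:G} make this quantitative (they exhibit $G_{p,i}^{\pm}$ as a continuous function of $(a_1,a_2,a_5)$, odd in its $i$-th slot and vanishing there when $a_i=0$). Thus $D_i\to0$ as $a_i\to0$, and the cited principle yields $a_i(t)\geq0$ for all $t>0$; since the same reflection forces $P_i\to0$ as $a_i\to0$ as well, the whole right-hand side vanishes at $a_i=0$, so that (with its local Lipschitz dependence on $a_i$) a trajectory started with $a_i(0)>0$ can never reach~$0$.

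For $a_i(0)<0$, I would run the same argument on $\widetilde a_i:=-a_i$. By the oddness of $G_{p,i}^{\pm}$ in its $i$-th argument, when $a_i\leq0$ the inequalities of Lemma~\ref{theorem:FG:sign} hold with the signs reversed, so $\widetilde a_i$ solves a production--destruction ODE of the same type, with the roles of the families $p\in\{1,2\}$ and $p=3$ interchanged and $\lambda\widetilde a_i$ again a legitimate destruction term ($\lambda\geq0$); hence $\widetilde a_i(0)\geq0$ gives $\widetilde a_i(t)\geq0$, i.e.\ $a_i(t)\leq0$. Combining the two cases proves the claim. I expect the main obstacle to be exactly the verification of Property~\ref{property:D:limit} together with the regularity hypotheses of the production--destruction principle (continuity of $P_i$, $D_i$ and the limit $D_i\to0$): both reduce to the explicit expressions collected in Appendix~\ref{appendix:FG}, and one must also be careful to invoke the companion concentration-positivity result so as to guarantee $\lambda,\bar{c}_p,\bar{c}_p^b\geq0$ along the trajectory.
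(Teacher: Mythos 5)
Your proposal is correct and follows essentially the same route as the paper: the same production--destruction splitting of equation~\eqref{eq:reduction1d:charge:1d:final:4} (with $P$ collecting the $p=1,2$ terms and $D$ collecting the conductivity term and the $p=3$ terms), the sign information of Lemma~\ref{theorem:FG:sign}, and the verification of Property~\ref{property:D:limit} via the explicit expressions of $G_{p,i}^{\pm}$ in Appendix~\ref{appendix:FG}. You supply some details the paper leaves implicit --- the reduction of the case $a_i(0)<0$ to the non-negative case by oddness of $G_{p,i}^{\pm}$ in $a_i$, and the need to have $\bar{c}_p,\bar{c}_p^b\geq 0$ along the trajectory --- but the core argument is identical.
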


        \begin{proof}
        	From Theorem~\ref{theorem:FG:sign} we deduce that equation~\eqref{eq:reduction1d:dipole:a} is a production--destruction ODE, with production and destruction terms respectively given by:
            \begin{subequations}
                \begin{align}
                    P(a_i) &= 
                        \sum_{i=1}^2 \dfrac{e}{\epsilon_0\epsilon_s^2 \pi R_g} \dfrac{\bar{\mu}_p}{|\mathcal{D}|} \left(\bar{c}_p G_{p,i}^+ + \bar{c}_p^b G_{p,i}^- \right),
                    \label{eq:reduction1d:production}\\
                    D(a_i) &=
                        \dfrac{\bar{\sigma}_\Gamma}{|\partial\mathcal{D}|}\epsilon_s^2\pi a_i -
                        \dfrac{e}{\epsilon_0\epsilon_s^2 \pi R_g} \dfrac{\bar{\mu}_3}{|\mathcal{D}|} \left(\bar{c}_3 G_{3,i}^+ + \bar{c}_3^b G_{3,i}^- \right),
                    \label{eq:reduction1d:destruction}
                \end{align}
            \end{subequations}
            with $P(a_i) \geq 0$ if $a_i \geq 0$ and $D(a_i) \geq 0$ if $a_i \geq 0$, $i=1,2$.\\
            Moreover, in Appendix~\ref{appendix:F:proof:bounded}, we show that the destruction term $D$ defined as in equations~\eqref{eq:reduction1d:destruction} satisfies Property~\ref{property:D:limit}.\\
            As a consequence, the solution $a_i(t,s)$ to equation~\eqref{eq:reduction1d:charge:1d:final:4} has the same sign as the initial condition $a_i(0,s)=\dfrac{q_\Gamma^i(0)}{\epsilon_0} + E_i^\text{ext}(0)$, for all $t>0$ and $s\in[0,\,S]$.
        \end{proof}

    \noindent Equation~\eqref{eq:reduction1d:charge:1d:final:1}, instead, represents a weakly-coupled nonlinear system, made of three equations corresponding to $p=1,\,2,\,3$, where the nonlinearity is in the coupling terms $F_p^+$ and $F_p^-$, $p=1,2,3$. As anticipated in Section~\ref{section:reduction1d:3dProblem}, the diffusion coefficients are zero for $p=2,\,3$ and non-zero for $p=1$. Thus, equation~\eqref{eq:reduction1d:charge:1d:final:1} for $p=1$ is a parabolic advection-diffusion-reaction equation, while for $p=2,\,3$ it is an advection-reaction equation. Also for this system of equations we can prove a monotonicity result:

    \begin{theorem}
    \label{theorem:cp:positive}
        If $\bar{c}_p^b\geq 0$ and $ \bar{c}_p(0)\geq 0,\, \forall p=1,\,2,\,3,$ then the solution to equations ~\eqref{eq:reduction1d:charge:1d:final:1}-\eqref{eq:reduction1d:charge:1d:final:3} is non-negative for all $t>0$.
    \end{theorem}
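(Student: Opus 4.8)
The plan is to establish non-negativity one unknown at a time, in an order dictated by the weak coupling of \eqref{eq:reduction1d:charge:1d:final:1}--\eqref{eq:reduction1d:charge:1d:final:3}, applying a scalar maximum (or comparison) principle to each scalar equation. Two structural facts drive everything. First, by Lemma~\ref{theorem:FG:sign} one has $F_p^+\ge 0$ and $F_p^-\le 0$ for every $p$ and for \emph{every} value of $a_1,a_2,a_5$; hence these signs are insensitive to the dependence of $F_p^\pm$ on the unknowns through $a_5=R_g\frac{e}{\epsilon_0}\sum_p\omega_p\bar c_p$. Second, by \eqref{eq:chemistry:expression} the chemical source terms are all proportional to $\bar c_1$, with $\bar C_2=\alpha\bar\mu_1|\mathbf E|\bar c_1$ and $\bar C_3=\eta\bar\mu_1|\mathbf E|\bar c_1$ non-negative whenever $\bar c_1\ge 0$, while $\bar C_1=(\alpha-\eta)\bar\mu_1|\mathbf E|\bar c_1$ is merely a bounded zeroth-order coefficient multiplying $\bar c_1$. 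Throughout I would treat $E_\Lambda$, $|\mathbf E|$ and the non-negative coefficients $\bar\mu_p\ge0$, $\bar\nu_p\ge0$ as given bounded data, with $K_1\ge 0$ (a non-negative attachment probability, Assumption~\ref{assumption:robin}) and the initial surface concentrations $\bar c_{\Gamma,e}(0),\bar c_{\Gamma,h}(0)$ taken non-negative.

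First I would treat \eqref{eq:reduction1d:charge:1d:final:1} for $p=1$, which is coupled to $\bar c_2,\bar c_3$ only through coefficients. Folding the Robin term and the chemical term $|\mathcal D|\bar C_1$ into the left-hand side, it reads
\[
\partial_t\bar c_1+\partial_s\!\big(\omega_1\bar\mu_1E_\Lambda\bar c_1\big)-\partial_s\!\big(\bar\nu_1\,\partial_s\bar c_1\big)+\kappa_1\,\bar c_1=-\,\bar\mu_1\frac{\bar c_1^b}{|\mathcal D|}F_1^-,
\]
with $\kappa_1=\bar\mu_1 F_1^+/|\mathcal D|+(|\partial\mathcal D|/|\mathcal D|)K_1-|\mathcal D|(\alpha-\eta)\bar\mu_1|\mathbf E|$ a bounded, possibly sign-indefinite, zeroth-order coefficient, and a right-hand side that is $\ge 0$ since $F_1^-\le 0$ and $\bar c_1^b\ge 0$. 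After the change of unknown $\bar c_1=e^{\lambda t}w$ with $\lambda$ large enough that $\kappa_1+\lambda\ge 0$, $w$ solves a parabolic advection--diffusion--reaction equation with non-negative zeroth-order coefficient, non-negative source, non-negative initial datum $\bar c_1(0)$ and non-negative inflow data $\bar c_1^b\ge0$ from \eqref{eq:3d-3d:bc}, so the parabolic weak maximum principle gives $w\ge 0$, hence $\bar c_1\ge 0$ for all $t>0$; note this step does not use any information on $\bar c_2,\bar c_3$.

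With $\bar c_1\ge 0$ in hand we get $\bar C_2,\bar C_3\ge 0$, and for $p=2,3$ (where $\bar\nu_p=0$, $K_p=0$) equation \eqref{eq:reduction1d:charge:1d:final:1} is a scalar transport--reaction equation with non-negative zeroth-order coefficient $\bar\mu_p F_p^+/|\mathcal D|$ and non-negative right-hand side $-\bar\mu_p\bar c_p^b F_p^-/|\mathcal D|+|\mathcal D|\bar C_p$; integrating along characteristics (absorbing the compressibility $\partial_s(\bar\mu_pE_\Lambda)\ne0$ by another exponential shift) and using $\bar c_p(0)\ge 0$ gives $\bar c_2,\bar c_3\ge 0$. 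Finally, in \eqref{eq:reduction1d:charge:1d:final:2} the right-hand side is now a sum of manifestly non-negative terms --- $\sum_{p=1}^2\bar\mu_p\bar c_p F_p^+/|\mathcal D|\ge 0$, $-\bar\mu_3\bar c_3^b F_3^-/|\mathcal D|\ge 0$, $\sum_{p=1}^2 K_p\bar c_p|\partial\mathcal D|/|\mathcal D|\ge 0$ --- so this transport equation propagates $\bar c_{\Gamma,e}(0)\ge 0$ to $\bar c_{\Gamma,e}\ge 0$; the identical argument on \eqref{eq:reduction1d:charge:1d:final:3}, using $\bar c_3\ge 0$, $F_3^+\ge 0$ and $F_p^-\le 0$, gives $\bar c_{\Gamma,h}\ge 0$.

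The main obstacle is not the sign bookkeeping but making these scalar maximum principles rigorous for a genuinely nonlinear, partially degenerate system (hyperbolic in $\bar c_2,\bar c_3$ and in the surface concentrations): since $F_p^\pm$, $a_5$ and $|\mathbf E|$ depend on the solution, one needs either a priori $L^\infty$ bounds and regularity ensuring the $F_p^\pm$ are admissible bounded measurable coefficients, or a fixed-point/iteration scheme in which the sign structure above --- on which the whole argument rests --- is verified at each iterate. I would also flag that, as stated, the theorem tacitly uses $\bar c_{\Gamma,e}(0),\bar c_{\Gamma,h}(0)\ge 0$ and $K_1\ge 0$, which should be recorded among the hypotheses.
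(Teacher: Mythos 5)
Your proof is correct and follows the same overall strategy as the paper: treat the parabolic equation for $p=1$ with a maximum principle, then propagate positivity to the hyperbolic equations for $p=2,3$ along characteristics, with all signs supplied by Lemma~\ref{theorem:FG:sign}. The differences are in the machinery rather than the ideas. For $p=1$ the paper invokes \cite[Lemma 2.1]{pao2012nonlienar} (which already tolerates a bounded sign-indefinite zeroth-order coefficient) via a somewhat awkward contradiction argument, whereas your exponential shift $\bar c_1=e^{\lambda t}w$ plus the classical weak maximum principle is more elementary and arguably cleaner. For $p=2,3$ the paper recasts the characteristic ODE as a production--destruction system and verifies Property~\ref{property:D:limit} through the bound of Property~\ref{property:reduction1d:F:bounded}; your direct integration along characteristics with a second exponential shift is equivalent, though the paper's framing has the advantage of mirroring exactly the Patankar--Euler positivity argument used later for the discrete scheme (Theorem~\ref{theorem:predictor}). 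You also prove non-negativity of $\bar c_{\Gamma,e},\bar c_{\Gamma,h}$, which the paper defers to the separate Theorem~\ref{theorem:cgamma:positive}; this matches the literal equation range in the statement, and your observation that the hypotheses $\bar c_{\Gamma,e}(0),\bar c_{\Gamma,h}(0)\ge 0$ and $K_p\ge 0$ are tacitly used is accurate. Finally, the gap you flag --- that $F_p^\pm$ depend on the solution through $a_5$ and $|\mathbf{E}|$, so the maximum principles are being applied with solution-dependent coefficients without an a priori boundedness or fixed-point argument --- is present in the paper's proof as well and is worth recording.
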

    
    \begin{proof}
    We can rewrite equation~\eqref{eq:reduction1d:charge:1d:final:1} as follows:

    \begin{equation*}
        \dfrac{\partial \bar{c}_p}{\partial t}
        - \mathcal{L}_p(\bar{c}_p) 
        + r_p(\bar{\mathbf{c}})
        = g_p(\mathbf{\bar{c}}),
    \end{equation*}

    \noindent where $\mathbf{\bar{c}} = (\bar{c}_1,\,\bar{c}_2,\,\bar{c}_3)$ denotes the collection of all the volume charge concentrations, $\mathcal{L}_p,\, p=1,\,2,\,3,$ the differential operator

    \begin{equation*}
        \mathcal{L}_p(\bar{c}_p) =
        \dfrac{\partial}{\partial s}\left( \bar{\nu}_p \dfrac{\partial \bar{c}_p}{\partial s} \right)
        -\dfrac{\partial}{\partial s}\left( \omega_p \bar{\mu}_p \bar{E}_\Lambda \bar{c}_p \right),
        \quad p=1,\,2,\,3,
    \end{equation*}
    
    \noindent while $r_p (\mathbf{\bar{c}})$ and $g_p(\bar{\mathbf{c}}),\,p=1,\,2,\,3$, represent the linear and nonlinear reaction coupling terms, respectively:

    \begin{equation*}
        r_p(\mathbf{\bar{c}}) =
        \dfrac{|\partial\mathcal{D}|}{|\mathcal{D}|} K_p\bar{c}_p 
        - |\mathcal{D}|\bar{C}_p,
        \quad 
        g_p(\mathbf{\bar{c}}) =
        - \bar{c}_p \mathcal{N}_p^+(\mathbf{\bar{c}}) 
        - \mathcal{N}_p^-(\mathbf{\bar{c}}),\quad p=1,\,2,\,3.
    \end{equation*}

    \noindent where we have defined

    \begin{equation*}
        \mathcal{N}_p^+(\mathbf{\bar{c}}) := \bar{\mu}_p \dfrac{1}{|\mathcal{D}|} F_p^+(\mathbf{\bar{c}}), \qquad
        \mathcal{N}_p^-(\mathbf{\bar{c}}) := \bar{\mu}_p \dfrac{\bar{c}_p^b}{|\mathcal{D}|} F_p^-(\mathbf{\bar{c}}), \quad p=1,\,2,\,3.
        \label{eq:reduction1d:Np:def}
    \end{equation*}

    \noindent Our proof for $p=1$ relies on~\cite[Lemma 2.1]{pao2012nonlienar}, that we report here for the reader's convenience:

    \begin{lemma}~\cite[Lemma 2.1]{pao2012nonlienar}\\
    \label{theorem:pao}
        Let $\Omega$ be an open domain in $\mathbb{R}^n,\, n\in\mathbb{N}$, and let $\partial\Omega$ be the boundary of $\Omega$. For each $T>0$, let $D_T = (0,T]\times \Omega$ and $S_T = (0,T] \times \partial\Omega$. Denote by $\mathcal{C}^m(\bar{D}_T)$ the set of continuous functions on $\bar{D}_T$ and by $\mathcal{C}^{1,2}(D_T)$ the set of continuously differentiable functions in $t$ and twice continuously differentiable in $x$, for all $(x,t)\in D_T$. Let $w\in\mathcal{C}\left(\bar{D}_T\right) \cap \mathcal{C}^{1,2}(D_T)$ be such that
        \begin{equation*}
            \begin{dcases}
                w_t - Lw + cw \geq 0, & \text{in}\ D_T,\\
                \alpha_0 \dfrac{\partial w}{\partial \nu} + \beta_0 w \geq 0, & \text{on}\ S_T,\\
                w(0,x) \geq 0, & \text{in}\ \Omega,
            \end{dcases}
        \end{equation*}

    \noindent where $\alpha_0 \geq 0,\, \beta_0 \geq 0,\, \alpha_0+\beta_0 > 0$ on $S_T$, and $c\equiv c(t,x)$ is a bounded function in $D_T$. Then, $w(t,x) \geq 0$ in $\bar{D}_T$. Moreover, $w(t,x) > 0$ in $D_T$ unless it is identically zero.
    \end{lemma}

    \noindent As a consequence  of Lemma~\ref{theorem:FG:sign}, $\mathcal{N}_p^+ \geq 0$ and $\mathcal{N}_p^- \leq 0,\, \forall \mathbf{\bar{c}}$. Thus, if we suppose that the solution to equation~\eqref{eq:reduction1d:charge:1d:final:1} for $p=1$ is $\bar{c}_1 < 0$, we would have $g_1(\mathbf{\bar{c}}) \geq 0$. However, by Lemma~\ref{theorem:pao}, this implies that $\bar{c}_1 \geq 0$, which contradicts the previous assumptions. Therefore, if equation~\eqref{eq:reduction1d:charge:1d:final:1} for $p=1$ admits a continuous solution, it must be non-negative.
            
        Let us consider now the advection-reaction equation~\eqref{eq:reduction1d:charge:1d:final:1} for $p=2,\,3$, whose solution can be computed with the method of characteristics. Along the characteristic curves $x(x_0,v,t)$, where $v = \omega_p \bar{\mu}_p E_\Lambda$ denotes the transport velocity, $\bar{c}_p$ is given by the solution to the following ODE (see~\cite[Section 3.2]{evans2022partial} for details):

        \begin{equation}
            \begin{cases}
                \dfrac{d \bar{c}_p}{d t} = f_p({\mathbf{\bar{c}}}) - \dfrac{d E_\Lambda}{ds}{\bar{c}}_p, \quad t > 0,\\
                \bar{c}_p(t=0) = \bar{c}_p^0,
            \end{cases}
            \label{eq:reduction1d:cp:ode}
        \end{equation}

		\noindent where $E_\Lambda$ is a given tangential electric field on $\Lambda$, assumed in this work independent of $\bar{\mathbf{c}}$, and we have defined
		
		\begin{equation*}
			f(\mathbf{\bar{c}}) = -\bar{c}_p \mathcal{N}_p^+(\mathbf{\bar{c}}) - \mathcal{N}_p^-(\mathbf{\bar{c}}) - \dfrac{|\partial\mathcal{D}|}{|\mathcal{D}|}K_p \bar{c}_p + |\mathcal{D}|\bar{C}_p.
		\end{equation*}

        \noindent The ODE~\eqref{eq:reduction1d:cp:ode} can be seen as a production-destruction equation, with production and destruction terms respectively given by:

        \begin{alignat*}{2}
            P_p(c_p) &=
            \begin{dcases}
                -\mathcal{N}_p^-({\mathbf{\bar{c}}}) + |\mathcal{D}|\bar{C}_p, & \text{if\ } \dfrac{d E_\Lambda}{ds} \geq 0,\\
                -\mathcal{N}_p^-({\mathbf{\bar{c}}}) + |\mathcal{D}|\bar{C}_p - \dfrac{d E_\Lambda}{ds}\bar{c}_p, & \text{if\ } \dfrac{d E_\Lambda}{ds} \leq 0,
            \end{dcases}&
            \quad p=2,\,3 \\
            D_p(c_p) &=
            \begin{dcases}
                \dfrac{|\partial\mathcal{D}|}{|\mathcal{D}|}K_p\bar{c}_p + \bar{c}_p \mathcal{N}_p^+({\mathbf{\bar{c}}}) - \dfrac{d E_\Lambda}{ds}\bar{c}_p, & \text{if\ } \dfrac{d E_\Lambda}{ds} \geq 0,\\
                \dfrac{|\partial\mathcal{D}|}{|\mathcal{D}|}K_p\bar{c}_p + \bar{c}_p \mathcal{N}_p^+({\mathbf{\bar{c}}}), & \text{if\ } \dfrac{d E_\Lambda}{ds} \leq 0,
            \end{dcases}&
            \quad p=2,\,3.
        \end{alignat*}

		\noindent In~\ref{appendix:F:proof:bounded} we show that the following property holds:

		\begin{property}
			For all $p=1,\,2,\,3,$ $\exists\ \hat{F}_p^+\geq 0$ and $\tilde{F}_p^+ \geq 0$ such that $F_p^+ \leq \hat{F}_p^+ \sum_{q=1}^3 |\bar{c}_q| + \tilde{F}_p^+$.
			\label{property:reduction1d:F:bounded}
		\end{property}
		
		\noindent Thus, for $p=2,\,3,$

        \begin{equation*}
            \lim_{\bar{c}_p \to 0} F_p^+({\mathbf{c}})\bar{c}_p \leq
            \lim_{\bar{c}_p \to 0} \left( \hat{F}_p^+\sum_{q=1}^3 |\bar{c}_q| + \tilde{F}_p^+ \right) \bar{c}_p = 0, \quad p=2,\,3,
        \end{equation*}

        \noindent since $\hat{F}_p^+$, $\tilde{F}_p^-$ and $\sum_{q=1}^3|\bar{c}_q|$ are bounded for $\bar{c}_p \to 0$. As a consequence, $D_p$ satisfies Property~\ref{property:D:limit}, hence, the solution to~\eqref{eq:reduction1d:cp:ode} is non-negative, and thus $\bar{c}_p\geq 0$ along all the characteristics, for $p=1,\,2,\,3$.
    \end{proof}

    \noindent A similar result can be also proven for the linear transport equations~\eqref{eq:reduction1d:charge:1d:final:2} and~\eqref{eq:reduction1d:charge:1d:final:3}:

    \begin{theorem}
        If $\bar{c}_{\Gamma,e}(0,s) \geq 0$ and $\bar{c}_{\Gamma,h}(0,s) \geq 0$, then the solutions $\bar{c}_{\Gamma,e}$ and $\bar{c}_{\Gamma,h}$ to equations~\eqref{eq:reduction1d:charge:1d:final:2} and~\eqref{eq:reduction1d:charge:1d:final:3} are non--negative for all $t>0$.
        \label{theorem:cgamma:positive}
    \end{theorem}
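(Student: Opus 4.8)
The plan is to reuse, almost verbatim, the characteristics argument already employed for the advection--reaction equations~\eqref{eq:reduction1d:charge:1d:final:1} with $p=2,3$ in the proof of Theorem~\ref{theorem:cp:positive}. The crucial structural remark is that the coupling in~\eqref{eq:reduction1d:charge:1d:final:2}--\eqref{eq:reduction1d:charge:1d:final:3} is one--directional: the right--hand sides do not contain $\bar{c}_{\Gamma,e}$ or $\bar{c}_{\Gamma,h}$, but only the volume concentrations $\bar{c}_p$, the boundary data $\bar{c}_p^b$, and the transverse--field functionals $F_p^\pm$. So first I would establish that these right--hand sides are non--negative for every $t>0$. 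By Theorem~\ref{theorem:cp:positive} we already know $\bar{c}_p\geq0$; by hypothesis $\bar{c}_p^b\geq0$; by Lemma~\ref{theorem:FG:sign} we have $F_p^+\geq0$ and $F_p^-\leq0$ for all $p$, and it is important that these two inequalities hold \emph{unconditionally} (no control on the sign of the $a_i$ is needed, so the argument is not circular); finally $K_p\geq0$ with $K_p=0$ for $p\neq1$ by Assumption~\ref{assumption:robin}, and in particular $K_3=0$. Term by term this makes the source in~\eqref{eq:reduction1d:charge:1d:final:2}, namely $\sum_{p=1}^2\bar{\mu}_p\bar{c}_pF_p^+/|\mathcal{D}|-\bar{\mu}_3\bar{c}_3^bF_3^-/|\mathcal{D}|+\sum_{p=1}^2K_p\bar{c}_p|\partial\mathcal{D}|/|\mathcal{D}|$, non--negative, and likewise the source in~\eqref{eq:reduction1d:charge:1d:final:3}.

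Next I would pass to the advective (non--conservative) form by expanding the divergence term, e.g.\ for~\eqref{eq:reduction1d:charge:1d:final:2},
\[
\frac{\partial \bar{c}_{\Gamma,e}}{\partial t} - \bar{\mu}_{\Gamma,e}E_\Lambda\frac{\partial \bar{c}_{\Gamma,e}}{\partial s} = g_{\Gamma,e} + \bar{c}_{\Gamma,e}\,\frac{\partial}{\partial s}\!\left(\bar{\mu}_{\Gamma,e}E_\Lambda\right),
\]
with $g_{\Gamma,e}\geq0$ the source just discussed, and analogously for $\bar{c}_{\Gamma,h}$ with transport velocity $+\bar{\mu}_{\Gamma,h}E_\Lambda$. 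Solving by the method of characteristics, along a curve $s(t)$ with $\dot{s}=-\bar{\mu}_{\Gamma,e}E_\Lambda$ the quantity $u(t):=\bar{c}_{\Gamma,e}(t,s(t))$ solves the linear ODE $\dot{u}=g_{\Gamma,e}+b(t)u$ with $b=\partial_s(\bar{\mu}_{\Gamma,e}E_\Lambda)$ bounded, since $E_\Lambda$ is a given, sufficiently regular tangential field and the mobilities are bounded. This fits the production--destruction framework introduced just before Theorem~\ref{theorem:ai:positive}: set $P(u)=g_{\Gamma,e}+\max\{0,b\}u$ and $D(u)=\max\{0,-b\}u$, so that $P\geq0$ whenever $u\geq0$ and $D(u)\to0$ as $u\to0$, i.e.\ Property~\ref{property:D:limit} holds. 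Since $u(0)=\bar{c}_{\Gamma,e}(0,s(0))\geq0$ by hypothesis, the production--destruction principle yields $u(t)\geq0$, hence $\bar{c}_{\Gamma,e}\geq0$ along every characteristic; as the characteristics foliate $(0,\infty)\times[0,S]$ this gives $\bar{c}_{\Gamma,e}\geq0$ everywhere, and the identical argument applied to~\eqref{eq:reduction1d:charge:1d:final:3} gives $\bar{c}_{\Gamma,h}\geq0$.

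I do not expect a genuine obstacle: the proof is structurally the same as the $p=2,3$ case of Theorem~\ref{theorem:cp:positive}. The only two points requiring a little care are (i) making sure the sign of the source is guaranteed by the unconditional inequalities $F_p^+\geq0$, $F_p^-\leq0$ of Lemma~\ref{theorem:FG:sign} together with the already established $\bar{c}_p\geq0$, so that one does not implicitly invoke the sign of the $a_i$; and (ii) checking that the zero--order coefficient $\partial_s(\bar{\mu}_{\Gamma,\cdot}E_\Lambda)$ produced by the non--conservative rewriting is bounded, which holds under the standing regularity assumptions on $E_\Lambda$ and on the mobility coefficients, guaranteeing well--posedness of the characteristic ODE and legitimacy of the production/destruction splitting.
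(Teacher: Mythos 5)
Your proposal is correct and follows essentially the same route as the paper's proof: integrate along the characteristics of the two transport equations and observe that, by Lemma~\ref{theorem:FG:sign} (the unconditional inequalities $F_p^+\geq 0$, $F_p^-\leq 0$), Theorem~\ref{theorem:cp:positive} and the non--negativity of $\bar{c}_p^b$ and $K_p$, the one--way--coupled source terms are non--negative, so the solution stays above its non--negative initial value. The only difference is that you explicitly retain and control the zero--order coefficient $\partial_s(\bar{\mu}_{\Gamma,\cdot}E_\Lambda)$ produced by the non--conservative rewriting, folding it into the production--destruction splitting, whereas the paper integrates the source directly along the characteristics without that term; your version is the slightly more careful one.
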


    \begin{proof}
        The solutions to equations~\eqref{eq:reduction1d:charge:1d:final:2} and~\eqref{eq:reduction1d:charge:1d:final:3} along the characteristic curves $x(x_0, v_{\Gamma,\star}, t)$, $\star = e,\,h$, with $v_{\Gamma,e} = -\bar{\mu}_{\Gamma,e}E_\Lambda$, $v_{\Gamma,h} = \bar{\mu}_{\Gamma,h}E_\Lambda$, are given by the solutions to the following ODEs (see~\cite[Section 3.2]{evans2022partial} for details):
        
        \begin{equation*}
            \begin{cases}
                \dfrac{dw_{\Gamma,\star}}{dt}(t) = f_{\Gamma,\star}(\mathbf{\bar{c}}), \quad t\geq 0,\\
                w_{\Gamma,\star}(0) = \bar{c}_{\Gamma,\star}(x_0,0),     
            \end{cases}
            \quad \star = e,h,
        \end{equation*}

        \noindent where we have defined 
            
        \begin{align*}
            f_{\Gamma,e} & =
            \sum_{p=1}^2 \mathcal{N}_p^+(\bar{\mathbf{c}})\bar{c}_p 
            - \mathcal{N}_3^-(\bar{\mathbf{c}}) +
            \sum_{p=1}^2 K_p\bar{c}_p \dfrac{|\partial\mathcal{D}|}{|\mathcal{D}|}, \\
            f_{\Gamma,h} & =
            \mathcal{N}_3^+(\bar{\mathbf{c}})\bar{c}_3 - 
            \sum_{p=1}^2 \mathcal{N}_p^-(\bar{\mathbf{c}})+
            K_3\bar{c}_3 \dfrac{|\partial\mathcal{D}|}{|\mathcal{D}|}.
        \end{align*}

        \noindent As a consequence of Lemma~\ref{theorem:FG:sign} and Theorem~\ref{theorem:cp:positive}, $f_{\Gamma,\star} \geq 0,$ $\star=e,\,h$. Thus, $w(x,t) = \bar{c}_{\Gamma,\star}(x,0) + \displaystyle\int_0^t f_{\Gamma,\star}(x,t) dt \geq 0$ for all $x\in\Lambda, \,t>0$, and the solutions to equations~\eqref{eq:reduction1d:charge:1d:final:2} and~\eqref{eq:reduction1d:charge:1d:final:3} are non--negative along all the characteristics.
    \end{proof}

    \noindent Finally, we can show that the reduced 1D system~\eqref{eq:reduction1d:charge:1d:final} is conservative, i.e., that the total chemical reactions among the charged species sum to zero pointwise in the whole gas domain. We show that the reduced 1D system remains conservative, despite of the presence of the nonlinear reaction terms $F_p^\pm$ that did not appear in the original 3D model.

    \begin{theorem}
        The system of equations~\eqref{eq:reduction1d:charge:1d:final:1}-~\eqref{eq:reduction1d:charge:1d:final:3}, modeling the charge densities in 1D, conserves the total net charge.
        \label{theorem:charge:conservative}
    \end{theorem}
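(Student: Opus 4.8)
The proof is purely algebraic: one weights the three volume equations \eqref{eq:reduction1d:charge:1d:final:1} by the unit charges $\omega_p=\{-1,-1,1\}$, combines them with the two surface equations taken with the appropriate signs, and checks that every reaction and interface--exchange term cancels, so that the result is a homogeneous continuity equation for the total net charge.

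Concretely, I would introduce the net charge per unit length of $\Lambda$,
\[
  \mathcal{Q} \;:=\; \sum_{p=1}^{3} \omega_p\,\bar{c}_p \;+\; \bar{c}_{\Gamma,h} \;-\; \bar{c}_{\Gamma,e},
\]
which up to the constant $e/\epsilon_0$ is the 1D analogue of the total bulk charge $q$ plus the total surface charge $g$, and then form the combination obtained by multiplying \eqref{eq:reduction1d:charge:1d:final:1} by $\omega_p$, summing over $p=1,2,3$, adding \eqref{eq:reduction1d:charge:1d:final:3}, and subtracting \eqref{eq:reduction1d:charge:1d:final:2}. The time derivatives collect into $\partial_t\mathcal{Q}$; and, because $\omega_p^2=1$ and $\bar{\nu}_2=\bar{\nu}_3=0$, the convective and diffusive terms remain in divergence form after the weighting and sum to $\partial_s\mathcal{J}$, where
\[
  \mathcal{J} \;:=\; \sum_{p=1}^{3} \bar{\mu}_p\bar{c}_p E_\Lambda \;+\; \bar{\nu}_1\,\partial_s\bar{c}_1 \;+\; \bar{c}_{\Gamma,h}\bar{\mu}_{\Gamma,h}E_\Lambda \;+\; \bar{c}_{\Gamma,e}\bar{\mu}_{\Gamma,e}E_\Lambda
\]
is the total longitudinal current along $\Lambda$.

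The core of the proof is the cancellation of the remaining terms, which I would verify group by group. First, $|\mathcal{D}|\sum_{p=1}^{3}\omega_p\bar{C}_p=0$ because the chemical source \eqref{eq:chemistry:expression} is charge--neutral pointwise, as in the 3D model. Second, using $\omega_1=\omega_2=-1$ and $\omega_3=1$, the $F^+$ contribution of the weighted volume equations is $\frac{\bar{\mu}_3\bar{c}_3}{|\mathcal{D}|}F_3^+ - \sum_{p=1}^{2}\bar{\mu}_p\frac{\bar{c}_p}{|\mathcal{D}|}F_p^+$, which is exactly minus the $F^+$ source terms appearing in \eqref{eq:reduction1d:charge:1d:final:3} and \eqref{eq:reduction1d:charge:1d:final:2}; the identical computation with $\bar{c}_p$ replaced by $\bar{c}_p^b$ cancels the $F^-$ terms. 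Third, since $K_2=K_3=0$, the weighted diffusive--exchange term reduces to $-\frac{|\partial\mathcal{D}|}{|\mathcal{D}|}K_1\bar{c}_1$, which cancels the term $K_1\bar{c}_1\frac{|\partial\mathcal{D}|}{|\mathcal{D}|}$ carried over from \eqref{eq:reduction1d:charge:1d:final:2} (the $K_3$ term in \eqref{eq:reduction1d:charge:1d:final:3} vanishes). Hence $\partial_t\mathcal{Q}+\partial_s\mathcal{J}=0$ on $\Lambda$: no net charge is created or destroyed, neither by the chemistry nor by the volume--surface exchange encoded in $F_p^\pm$ and $K_p$. Integrating over an arbitrary sub-segment $[s_1,s_2]\subset\Lambda$ gives $\frac{d}{dt}\int_{s_1}^{s_2}\mathcal{Q}\,ds = \mathcal{J}(s_1)-\mathcal{J}(s_2)$, i.e. the total net charge varies only through the longitudinal current at the endpoints (and, on a ramified graph, through the junction conditions).

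Since the argument is algebraic, the main obstacle is bookkeeping rather than concept: one has to be consistent about the signs coming from $\omega_p=\{-1,-1,1\}$, about the minus with which $\bar{c}_{\Gamma,e}$ enters $\mathcal{Q}$, and about the fact that \eqref{eq:reduction1d:charge:1d:final:2}--\eqref{eq:reduction1d:charge:1d:final:3} already use the rewritten source terms of the Remark following \eqref{eq:reduction1d:cGammah:1d:negative}; one also has to confirm that the transport and diffusion fluxes stay in divergence form after weighting, which holds precisely because $\omega_p^2=1$ and $\bar{\nu}_1$ is the only nonvanishing diffusion coefficient. The single genuinely physical ingredient is the pointwise neutrality $\sum_{p}\omega_p C_p=0$ inherited from the 3D model; once this is granted, the nonlinear terms $F_p^\pm$, which did not appear in 3D, cancel in pairs exactly because, by construction in Sections~\ref{section:reduction1d:volume_charge_concentration}--\ref{section:reduction1d:surface_charge_concentration}, each of them is the charge that leaves one subdomain and enters the other.
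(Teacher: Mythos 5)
Your proposal is correct and follows essentially the same route as the paper: weight the volume equations by $e\omega_p$, add the hole equation and subtract the electron equation, observe that the $F_p^\pm$ and $K_p$ exchange terms cancel pairwise between the volume and surface equations while the fluxes stay in divergence form, and use $\sum_{p}\omega_p\bar{C}_p=0$ to conclude. The only caveat, shared with the paper's own proof rather than being a gap in yours, is that the pointwise neutrality $\sum_p\omega_p C_p=0$ holds for the intended chemistry ($C_2=\eta\mu_1|\mathbf{E}|c_1$, $C_3=\alpha\mu_1|\mathbf{E}|c_1$, as used in the numerical scheme) and not for equation~\eqref{eq:chemistry:expression} exactly as printed.
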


    \begin{proof}
        If we multiply equations~\eqref{eq:reduction1d:charge:1d:final:1}-~\eqref{eq:reduction1d:charge:1d:final:3} by the electron charge $e$ and by the unit charge of each species, and we sum them all together, we obtain an equation describing the total net charge $\bar{c} = e \left(\sum_{p=1}^3 \omega_p \bar{c}_p + \bar{c}_{\Gamma,h} - \bar{c}_{\Gamma,e} \right)$ in the 1D domain:

        \begin{multline*}
            \dfrac{\partial \bar{c}}{\partial t} +
            \dfrac{\partial}{\partial s} \left( e E_\Lambda \sum_{p=1}^3 \left(\bar{\mu}_p\bar{c}_p\right) + E_\Lambda\bar{\sigma}_\Gamma \right) -
            \dfrac{\partial }{\partial s}\left( e\sum_{p=1}^3 \omega_p \bar{\nu}_p \dfrac{\partial\bar{c}_p}{\partial s} \right) = + |\mathcal{D}| e\sum_{p=1}^3 \omega_p\bar{C}_p +\\
            - e \sum_{p=1}^3 \left(\omega_p \bar{\mu}_p\dfrac{\bar{c}_p}{|\mathcal{D}|}F_p^+ + \omega_p \bar{\mu}_p \dfrac{\bar{c}_p^b}{|\mathcal{D}|}F_p^- + \dfrac{|\partial \mathcal{D}|}{|\mathcal{D}|}\omega_p K_p \bar{c}_p \right) 
            e\bar{\mu}_3 \dfrac{\bar{c}_3}{|\mathcal{D}|}F_3^+
            +\bar{\mu}_3 \dfrac{\bar{c}_3^b}{|\mathcal{D}|}F_3^-  + \\
            + \dfrac{|\partial \mathcal{D}|}{|\mathcal{D}|} K_3 \bar{c}_3
            + e \sum_{p=1}^2 \left( \bar{\mu}_p\dfrac{\bar{c}_p}{|\mathcal{D}|}F_p^+ +  \bar{\mu}_p \dfrac{\bar{c}_p^b}{|\mathcal{D}|}F_p^- + \dfrac{|\partial \mathcal{D}|}{|\mathcal{D}|} K_p \bar{c}_p \right) + 
            = |\mathcal{D}| e \sum_{p=1}^3 \omega_p \bar{C}_p.
        \end{multline*}      
        
        \noindent And, by the definition of the chemical reaction term~\eqref{eq:chemistry:expression}, we have
        
        \begin{equation*}
        	|\mathcal{D}| e \sum_{p=1}^3 \omega_p \bar{C}_p =
        	|\mathcal{D}| e \left(-(\alpha -\eta) - \alpha + \eta\right) \mu_1 |\mathbf{E}_g|\bar{c}_1 = 0.
        \end{equation*}
        
        \noindent Thus, we can conclude that the net effect of all the chemical reactions is null on the total charge and the fluxes are balanced by the time derivative.
        
    \end{proof}
    
	\section{Numerical methods}
	\label{section:reduction1d:numerical_methods}
    
	The semi-discretization in time of problem~\eqref{eq:reduction1d:charge:1d:final} is based on a two-step operator-splitting first order scheme, introduced in~\cite{lie1893theorie}, allowing us to separate the different physical phenomena described by the equations. In particular, we split the problem into two steps, decoupling the drift-diffusion equations from the chemical reactions and the equations of the dipole moment~\eqref{eq:reduction1d:dipole:a}, as detailed in Algorithm~\ref{algorithm:time-splitting}.\\
    Let $t_0 = 0,\dots,t_N=T$, such that $t_n = t_{n-1} + \Delta t$, $n=1,\dots,N$ and denote by $u^n(\mathbf{x}) = u(\mathbf{x},t_n)$, $n=0,\dots,N$, the value of a generic function $u\ :\ \mathbb{R}\times[0,T]\to\mathbb{R}$ at time $t_n$. We discretize the time derivative as $\dfrac{du}{dt} = \dfrac{u^{n+1} - u^n}{\Delta t}$.\\
    At each time step $t_n$, we first compute a first approximation, that we denote with the superscript $\cdot^{n+1/2}$, of the solution at the next time step $t_{n+1}$, by solving the chemical reaction equations, that is then used as initial condition to solve the dirft-diffusion equations~\eqref{eq:reduction1d:predictor:physics}. For the drift--diffusion equations, we employ the implicit Euler--Finite Volume method with upwind and TPFA, extended to 1D graphs, as discussed in~\cite{crippa2024numericalmethods}. This method ensures positivity of the solution at each time step, provided that the intermediate solutions $\cdot^{n+\frac{1}{2}}$ are non-negative. In order to guarantee positivity of the numerical solution to the chemical reaction equations at the discrete level, we employ a Patankar--Euler scheme~\cite[Chapter~7]{patankar2018numerical}, which consists of the following weighting of the destruction term~\eqref{eq:reduction1d:destruction}:

    \begin{equation}
        \dfrac{u^{n+1} - u^n}{\Delta t} =
        P(u^n) - D(u^n) \dfrac{u^{n+1}}{u^n}, \quad i=1,\,2, \quad n=0,\dots,N-1.
        \label{eq:reduction1d:patankar-euler}
    \end{equation}

    \begin{algorithm}
        \caption{\texttt{splitting}}
        \begin{algorithmic}[1]
            \Require $\bar{c}_{\Gamma,e}^0,\ \bar{c}_{\Gamma,h}^0, \ a_1^0,\ a_2^0,\ \bar{c}_p^0,\, p=1,\,2,\,3 $.
            \Ensure $\bar{c}_{\Gamma,e}^{n+1},\ \bar{c}_{\Gamma,h}^{n+1},\ a_1^n,\ a_2^n,\ \bar{c}_p^{n+1},\, p=1,\,2,\,3,\ \ n=1,\dots,N$.

            \For{$n=0,\dots,N-1$}

            \State Solve the chemical reaction equations (Algorithm~\ref{algorithm:chemistry}):

            $\bar{c}_p^{n+1/2},\bar{c}_{\Gamma,e}^{n+1/2}, \bar{c}_{\Gamma,h}^{n+1/2}, a_i^{n+1} \gets \texttt{chemistry}(\bar{c}_p^n,\bar{c}_{\Gamma,e}^n, \bar{c}_{\Gamma,h}^n,a_i^{n})$
            
            \State Solve the drift and drift-diffusion equations:
            
            \begin{equation}
                \begin{dcases}
                    \dfrac{\partial \bar{c}_p}{\partial t}
                    + \dfrac{\partial}{\partial s} \left( \omega_p \dfrac{\bar{\mu}_p}{|\mathcal{D}|} \bar{c}_p \bar{E}_\Lambda\right)
                    - \dfrac{\partial}{\partial s} \left( \bar{\nu}_p \dfrac{\partial \bar{c}_p}{\partial s} \right) 
                    = 0, \quad p=1,\,2,\,3,
                    & \text{in\ } (t_n,t_{n+1}),\\
                    \dfrac{\partial \bar{c}_{\Gamma,e}}{\partial t}
                    - \dfrac{\partial}{\partial s}\left( \dfrac{\bar{\mu}_{\Gamma,e}}{|\partial\mathcal{D}|}\bar{c}_{\Gamma,e} \bar{E}_\Lambda \right)
                    = 0, & \text{in\ } (t_n,t_{n+1}),\\
                    \dfrac{\partial \bar{c}_{\Gamma,h}}{\partial t}
                    + \dfrac{\partial}{\partial s}\left( \dfrac{\bar{\mu}_{\Gamma,h}}{|\partial\mathcal{D}|}\bar{c}_{\Gamma,h} \bar{E}_\Lambda \right)
                    = 0, & \text{in\ } (t_n,t_{n+1}),
                \end{dcases}
                \label{eq:reduction1d:predictor:physics}
            \end{equation}
        
        with $\bar{c}_p(t_n) = \bar{c}_p^{n+1/2},\, \bar{c}_{\Gamma,e}(t_n) = \bar{c}_{\Gamma,e}^{n+1/2},\, \bar{c}_{\Gamma,h}(t_n) = \bar{c}_{\Gamma,h}^{n+1/2} $, with an upwind-TPFA Finite Volume method.

        \EndFor
        \end{algorithmic}
        \label{algorithm:time-splitting}
    \end{algorithm}

        \begin{algorithm}
        \caption{\texttt{$\texttt{chemistry}$}}
        \begin{algorithmic}[1]
            \Require $\bar{c}_{\Gamma,e}^{n},\ \bar{c}_{\Gamma,h}^{n},\ a_1^n,\ a_2^n,\ \bar{c}_p^{n},\, p=1,\,2\,3$.
            \Ensure $\bar{c}_{\Gamma,e}^{n+1/2},\ \bar{c}_{\Gamma,h}^{n+1/2},\ a_1^{n+1},\ a_2^{n_1},\ \bar{c}_p^{n+1/2},\, p=1,\,2,\,3 $.

            \State Compute $G_{p,i}^+(a_1^{n},a_2^n,\mathbf{\bar{c}}^{n})$ and $G_{p,i}^-(a_1^{n},a_2^n,\mathbf{\bar{c}}^{n})$, for $i=1,\,2, \ p=1,\,2,\,3$;

            \State Compute $F_p^{+,n} = F_p^+(a_1^{n},a_2^n,\mathbf{\bar{c}}^{n})$ and $F_p^{-,n} = F_p^-(a_1^{n},a_2^n,\mathbf{\bar{c}}^{n})$, for $p=1,\,2\,3$;

            \State Solve equation~\eqref{eq:reduction1d:charge:1d:final:1} with Patankar--Euler scheme~\eqref{eq:reduction1d:patankar-euler}:

            \State Solve the volume charge concentration equation:

            \begin{equation*}
                \dfrac{\partial \bar{c}_p}{\partial t} = P_p(\bar{\mathbf{c}}) - D_p(\bar{\mathbf{c}}),\ p=1,\,2,\,3, \quad \text{in\ } (t_n,t_{n+1}),
            \end{equation*}

            \noindent with $\bar{c}_p(t_n) = \bar{c}_p^n$, discretized as in equation~\eqref{eq:reduction1d:predictor:chemistry:volume};

            \State Solve the surface charge concentration equations 

            \begin{equation*}
                \dfrac{d\bar{c}_{\Gamma,\star}^n}{dt} = f_{\Gamma,\star},\ \star=e,\,h, \quad \text{in\ } (t_n,t_{n+1}),
            \end{equation*}
            
            \noindent with $\bar{c}_{\Gamma,\star}(t_n) = \bar{c}_{\Gamma,\star}^n$, discretized as in equation~\eqref{eq:reduction1d:predictor:chemistry:surface}.
        \end{algorithmic}
        \label{algorithm:chemistry}
    \end{algorithm}

    \noindent The Patankar--Euler scheme for the chemical reaction part of equation~\eqref{eq:reduction1d:charge:1d:final:1}, after cancellation of the multiplicative term $\bar{c}_p^n$ in the destruction term, reads as follows:
    
    \begin{subnumcases}{\label{eq:reduction1d:predictor:chemistry:volume}}
   		\begin{split}
   			\dfrac{\bar{c}_1^{n+1/2} - \bar{c}_1^n}{\Delta t} = 
	   		\bar{\mu}_1\dfrac{\bar{c}_1^b}{|\mathcal{D}|}F_1^{-,n} &+ \alpha\mu_1|\mathbf{E}_g| \bar{c}_1^{n} + \\
	   		&- \left( \eta\mu_1|\mathbf{E}_g| + \dfrac{|\partial\mathcal{D}|}{|\mathcal{D}|}K_1 + \bar{\mu}_1\dfrac{\bar{c}_1^n}{|\mathcal{D}|}F_1^{+,n} \right)\bar{c}_1^{n+1/2} ,
   		\end{split}
   		\label{eq:reduction1d:predictor:chemistry:volume:1}\\
   		\dfrac{\bar{c}_2^{n+1/2} - \bar{c}_2^n}{\Delta t} =
   		\bar{\mu}_2\dfrac{\bar{c}_2^b}{|\mathcal{D}|}F_2^{-,n} + \eta\mu_1|\mathbf{E}_g| \bar{c}_1^{n+1/2}
   		- \left( \dfrac{|\partial\mathcal{D}|}{|\mathcal{D}|}K_2 + \bar{\mu}_2\dfrac{\bar{c}_2^n}{|\mathcal{D}|}F_2^{+,n} \right)\bar{c}_2^{n+1/2} ,
   		\label{eq:reduction1d:predictor:chemistry:volume:2}\\
   		\dfrac{\bar{c}_3^{n+1/2} - \bar{c}_3^n}{\Delta t} =
   		\bar{\mu}_3\dfrac{\bar{c}_3^b}{|\mathcal{D}|}F_3^{-,n} + \alpha\mu_1|\mathbf{E}_g| \bar{c}_1^{n}
   		- \left( \dfrac{|\partial\mathcal{D}|}{|\mathcal{D}|}K_3 + \bar{\mu}_3\dfrac{\bar{c}_3^n}{|\mathcal{D}|}F_3^{+,n} \right)\bar{c}_3^{n+1/2} .
   		\label{eq:reduction1d:predictor:chemistry:volume:3}
   	\end{subnumcases}

    \noindent Notice that all the nonlinearities are treated explicitly, thus the equation can be directly solved for $\bar{c}_p^{n+1/2}$, and that the ionization terms are treated explicitly, ensuring unconditional positivity of the solutions, as pointed out in~\cite{villa2017implicit}.\\
    Finally, for the chemical reaction part of equations~\eqref{eq:reduction1d:charge:1d:final:2} and~\eqref{eq:reduction1d:charge:1d:final:3}, we introduce the following numerical scheme:
           
    \begin{equation}
        \begin{dcases}
            \bar{c}_{\Gamma,e}^{n+1/2} =
            \bar{c}_{\Gamma,e}^{n} 
            + \Delta t \sum_{p=1}^2 \bar{\mu}_p F_p^{+,n} \dfrac{\bar{c}_p^{n+1/2}}{|\mathcal{D}|}
            - \Delta t \bar{\mu}_3 F_3^{-,n} \dfrac{\bar{c}_3^b}{|\mathcal{D}|}
            + \sum_{p=1}^2 K_p \dfrac{|\partial\mathcal{D}|}{|\mathcal{D}|} \bar{c}_p^{n+1/2} ,
            \\
            \bar{c}_{\Gamma,h}^{n+1/2} =
            \bar{c}_{\Gamma,h}^{n} 
            + \Delta t \bar{\mu}_3 F_3^{+,n} \dfrac{\bar{c}_3^{n+1/2}}{|\mathcal{D}|}
            - \Delta t \sum_{p=1}^2 \bar{\mu}_p F_p^{+,n}  \dfrac{\bar{c}_p^b}{|\mathcal{D}|}
            + K_3 \dfrac{|\partial\mathcal{D}|}{|\mathcal{D}|} \bar{c}_3^{n+1/2} 
        \end{dcases}
        \label{eq:reduction1d:predictor:chemistry:surface}
    \end{equation}

    \noindent We can show that the proposed numerical scheme preserves the properties of monotonicity and conservation of charge of the continuous problem, stated in Theorems~\ref{theorem:ai:positive},~\ref{theorem:cp:positive},~\ref{theorem:cgamma:positive} and~\ref{theorem:charge:conservative}:

    \begin{theorem}
        Consider the numerical scheme described in Algorithms~\ref{algorithm:time-splitting} and~\ref{algorithm:chemistry}, applied to solve problem~\eqref{eq:reduction1d:charge:1d:final}. The discrete solutions are non--negative at every time step and the discrete problem conserves the total charge.
        \label{theorem:predictor}
    \end{theorem}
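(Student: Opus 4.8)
The plan is to argue by induction on the time index $n$, carrying a joint hypothesis: at level $n$ the volume concentrations $\bar c_p^n$ ($p=1,2,3$) and the surface concentrations $\bar c_{\Gamma,e}^n,\bar c_{\Gamma,h}^n$ are non--negative, $a_i^n$ has the sign of $a_i^0$, and the total discrete net charge has evolved from the initial one only through the prescribed boundary fluxes (the discrete counterpart of Theorem~\ref{theorem:charge:conservative}). The base case is the hypothesis on the data. For the inductive step one processes the two substeps of Algorithm~\ref{algorithm:time-splitting} in order: first the chemistry call of Algorithm~\ref{algorithm:chemistry}, which produces the intermediate states $\cdot^{n+1/2}$ and the values $a_i^{n+1}$, and then the source--free drift/drift--diffusion solve~\eqref{eq:reduction1d:predictor:physics}, which produces $\cdot^{n+1}$. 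Throughout, ``non--negativity of the discrete solution'' is read as non--negativity of the charge concentrations, while for $a_i$ the relevant statement is preservation of the sign of the initial datum, mirroring Theorems~\ref{theorem:ai:positive},~\ref{theorem:cp:positive} and~\ref{theorem:cgamma:positive}.

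First I would treat positivity in the chemistry substep. For $a_i$ one uses the Patankar--Euler update~\eqref{eq:reduction1d:patankar-euler} with production and destruction terms~\eqref{eq:reduction1d:production}--\eqref{eq:reduction1d:destruction} frozen at level $n$: by Theorem~\ref{theorem:FG:sign} together with the inductive hypothesis ($\bar c_p^n\ge 0$, $\bar c_p^b\ge 0$, hence $\bar\sigma_\Gamma\ge 0$), $P(a_i^n)$ and $D(a_i^n)$ have exactly the sign structure used in the production--destruction argument of Theorem~\ref{theorem:ai:positive}, so the scalar update $a_i^{n+1}\bigl(1+\Delta t\,D(a_i^n)/a_i^n\bigr)=a_i^n+\Delta t\,P(a_i^n)$ is solvable with denominator $\ge 1$ and returns $a_i^{n+1}$ with the sign of $a_i^n$; the degenerate case $a_i^n=0$ is covered by the limiting form of the scheme, since $D(0)=0$ by Property~\ref{property:D:limit}, giving $a_i^{n+1}=\Delta t\,P(0)\ge 0$. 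For the volume concentrations, the scalar equations~\eqref{eq:reduction1d:predictor:chemistry:volume:1}--\eqref{eq:reduction1d:predictor:chemistry:volume:3} are solved in the order $p=1,2,3$; after the cancellation of the factor $\bar c_p^n$ in the destruction term each one reads $\bar c_p^{n+1/2}(1+\Delta t\,R_p^n)=\bar c_p^n+\Delta t\,S_p$, where the destruction rate $R_p^n$ collects the attachment rate (for $p=1$), the Robin rate $K_p|\partial\mathcal D|/|\mathcal D|\ge 0$ and $\bar\mu_p F_p^{+,n}/|\mathcal D|\ge 0$ (Theorem~\ref{theorem:FG:sign}), hence $R_p^n\ge 0$, while $S_p\ge 0$ because it collects the explicit ionization contribution (proportional to $\bar c_1^n\ge 0$, or to the already computed $\bar c_1^{n+1/2}\ge 0$) and $-\bar\mu_p\bar c_p^b F_p^{-,n}/|\mathcal D|\ge 0$ (since $F_p^{-,n}\le 0$ and $\bar c_p^b\ge 0$); thus $\bar c_p^{n+1/2}\ge 0$. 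This is the discrete image of the production--destruction structure of Theorems~\ref{theorem:ai:positive} and~\ref{theorem:cp:positive}, with the explicit treatment of ionization ensuring unconditional positivity, as in~\cite{villa2017implicit}. Finally the explicit surface updates~\eqref{eq:reduction1d:predictor:chemistry:surface} have the form $\bar c_{\Gamma,\star}^{n+1/2}=\bar c_{\Gamma,\star}^n+\Delta t\,f_{\Gamma,\star}^n$ with $f_{\Gamma,\star}^n\ge 0$ by Theorem~\ref{theorem:FG:sign} and the non--negativity of $\bar c_p^{n+1/2}$, $\bar c_p^b$ and $K_p$ --- precisely the sign of $f_{\Gamma,\star}$ exploited in Theorem~\ref{theorem:cgamma:positive} --- so $\bar c_{\Gamma,\star}^{n+1/2}\ge 0$.

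Next I would handle the transport substep. Problems~\eqref{eq:reduction1d:predictor:physics} are source--free: a linear advection--diffusion equation for $p=1$ and linear advection equations for $p=2,3$ and for the two surface concentrations. Discretised by implicit Euler with upwind fluxes and TPFA on the $1$D graph (with charge--conservation conditions at the junctions), each step yields a linear system with M--matrix structure, hence uniquely solvable and non--negativity preserving: this is the content of~\cite{crippa2024numericalmethods}. Feeding in the non--negative intermediates $\cdot^{n+1/2}$ therefore yields $\bar c_p^{n+1},\bar c_{\Gamma,e}^{n+1},\bar c_{\Gamma,h}^{n+1}\ge 0$, while the $a_i$ are untouched by this substep; together with the previous paragraph this closes the positivity part of the induction.

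Conservation is the discrete mirror of the proof of Theorem~\ref{theorem:charge:conservative}. In the transport substep every equation of~\eqref{eq:reduction1d:predictor:physics} carries no source and the finite--volume discretisation is locally conservative (cell fluxes telescope along each branch and balance at the junctions), so this step alters the global charge only through the prescribed boundary data. In the chemistry substep one multiplies~\eqref{eq:reduction1d:predictor:chemistry:volume:1}--\eqref{eq:reduction1d:predictor:chemistry:volume:3} by $e\omega_p$, the $\bar c_{\Gamma,e}$--update in~\eqref{eq:reduction1d:predictor:chemistry:surface} by $-e$ and the $\bar c_{\Gamma,h}$--update by $e$, and sums: the interface--exchange terms ($F_p^{\pm,n}$ and the Robin terms $K_p$) entering the volume equations for $p=1,2$ reappear, with opposite sign and identical time level, in the $\bar c_{\Gamma,e}$ update, and those for $p=3$ in the $\bar c_{\Gamma,h}$ update, so they cancel in pairs; the remaining ionization/attachment contributions cancel exactly as in the continuous computation, since $e\sum_{p=1}^3\omega_p\bar C_p\equiv 0$ by~\eqref{eq:chemistry:expression} and the discrete terms carry the same $\bigl(-(\alpha-\eta)-\alpha+\eta\bigr)$ combination at matching time levels (the level--$n$ ionization terms cancel between $p=1$ and $p=3$, the level--$(n+1/2)$ attachment terms between $p=1$ and $p=2$). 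Combining the two substeps gives conservation of the total discrete charge, and the induction is complete. The main obstacle is exactly this bookkeeping of the staggered time levels: the sequential solve $p=1\to 2\to 3$, the explicit surface updates, and the Patankar reweighting (which freezes the destruction rates at level $n$) mix values at levels $n$ and $n+1/2$, and one must check carefully both that positivity propagates unbroken through this chain and that every exchange term occurs with the same coefficient and time level in the volume and in the surface balance, so that the cancellations behind conservation survive the discretisation; the degenerate cases $a_i^n=0$ and $\bar c_p^n=0$ in the Patankar division likewise need the separate treatment afforded by Property~\ref{property:D:limit}.
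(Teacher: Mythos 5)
Your proposal is correct and follows essentially the same route as the paper: solve the Patankar--Euler updates explicitly for $\bar c_p^{n+1/2}$ and observe that numerator and denominator are non--negative by Lemma~\ref{theorem:FG:sign}, propagate positivity through the explicit surface updates and the upwind--TPFA transport step via the results of~\cite{crippa2024numericalmethods}, close by induction, and obtain conservation by multiplying by the signed charges and summing so that the exchange and ionization/attachment terms cancel as in Theorem~\ref{theorem:charge:conservative}. Your additional care with the staggered time levels and the degenerate Patankar division at $a_i^n=0$ is a reasonable elaboration of details the paper leaves implicit, but it does not change the argument.
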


    \begin{proof}		    	
        Let us start from the monotonicity.\\
        From equation~\eqref{eq:reduction1d:predictor:chemistry:volume} we can directly compute $\bar{c}_{p}^{n+1/2}$ as follows:

        \begin{align*}
            \bar{c}_1^{n+1/2} &= \dfrac{\bar{c}_1^n + \Delta t \left( \alpha \mu_1 |\mathbf{E}_g|\bar{c}_1^n  - \bar{\mu}_1\dfrac{\bar{c}_1^b}{|\mathcal{D}|}F_1^{-,n} \right)}{1 + \Delta t \left(\eta \mu_1 |\mathbf{E}_g| + \dfrac{|\partial\mathcal{D}|}{|\mathcal{D}|}K_1 + \bar{\mu}_1\dfrac{\bar{c}_1^n}{|\mathcal{D}|}F_1^{+,n}\right)}, \quad n=0,\dots,N-1,\\
            \bar{c}_2^{n+1/2} &= \dfrac{\bar{c}_2^n 
            	+ \Delta t \left( 
           	\eta \mu_1 |\mathbf{E}_g|\bar{c}_1^{n+1/2} - \bar{\mu}_2\dfrac{\bar{c}_2^b}{|\mathcal{D}|}F_2^{-,n}\right)}{1 + \Delta t \left(\dfrac{|\partial\mathcal{D}|}{|\mathcal{D}|}K_2 + \bar{\mu}_2\dfrac{\bar{c}_2^n}{|\mathcal{D}|}F_2^{+,n}\right)}, \quad n=0,\dots,N-1,\\
            \bar{c}_3^{n+1/2} &= \dfrac{\bar{c}_3^n + \Delta t \left( \alpha \mu_1 |\mathbf{E}_g|\bar{c}_1^n  - \bar{\mu}_3\dfrac{\bar{c}_3^b}{|\mathcal{D}|}F_3^{-,n} \right)}{1 + \Delta t \left( \dfrac{|\partial\mathcal{D}|}{|\mathcal{D}|}K_3 + \bar{\mu}_3\dfrac{\bar{c}_3^n}{|\mathcal{D}|}F_3^{+,n}\right)}, \quad n=0,\dots,N-1.
        \end{align*}

        \noindent Provided that $c_p^n\geq 0$, both the numerator and denominator of the right--hand side are non-negative, as a consequence of Lemma~\ref{theorem:FG:sign}. If we start from a non-negative initial condition $c_p^0 \geq 0$, we obtain $c_p^{1/2} \geq 0$ and, according to~\cite[Theorem 7]{crippa2024numericalmethods}, $\bar{c}_p^1 \geq 0$. Thus, by induction, $c_p^{n+1/2} \geq 0$ and $\bar{c}_p^{n+1} \geq 0,\, \forall n=0,\,\dots,\,N-1$.\\
        As a consequence of Lemma~\ref{theorem:FG:sign}, the right--hand side of equation~\eqref{eq:reduction1d:predictor:chemistry:surface} is non-negative, thus $\bar{c}_{\Gamma,e}^{n+1/2}\geq 0$ and $\bar{c}_{\Gamma,h}^{n+1/2}\geq 0,\, \forall n=0,\,\dots,\,N-1$, provided that $\bar{c}_{\Gamma,e}^n\geq 0$ and $\bar{c}_{\Gamma,h}^n\geq 0$. If we start from a non-negative initial condition, we obtain $\bar{c}_{\Gamma,e}^{1/2} \geq 0$ and $\bar{c}_{\Gamma,h}^{1/2}\geq 0$, ensuring $\bar{c}_{\Gamma,e}^1 \geq 0$ and $\bar{c}_{\Gamma,e}^1\geq 0$, according to~\cite[Theorem 1]{crippa2024numericalmethods}, and by induction we can conclude that $\bar{c}_{\Gamma,e}^{n+1/2}\geq 0$, $\bar{c}_{\Gamma,h}^{n+1/2}\geq 0$, $\bar{c}_{\Gamma,e}^{n+1}\geq 0$ and $\bar{c}_{\Gamma,h}^{n+1}\geq 0$, $\forall n=0,\,\dots,\,N-1$.\\
        Finally, the Patankar--Euler scheme~\eqref{eq:reduction1d:patankar-euler}, applied to the dipole moment equation~\eqref{eq:reduction1d:charge:1d:final:4}, ensures monotonicity of the corresponding discrete solution at each time step (see~\cite{patankar2018numerical}).

        In order to show that the discrete problem is conservative, we multiply equation~\eqref{eq:reduction1d:predictor:chemistry:volume} by $\omega_p$, and sum equations~\eqref{eq:reduction1d:predictor:chemistry:surface} and~\eqref{eq:reduction1d:predictor:chemistry:volume} all together, multiplied by the electron charge $e$, as in the proof of the Theorem~\ref{theorem:charge:conservative}, obtaining:

        \begin{align*}
            \dfrac{\bar{c}^{n+1/2} -\bar{c}^n}{\Delta t} =&
            \ e \sum_{p=1}^3 \omega_p\left(- \bar{\mu}_p\dfrac{\bar{c}_p^b}{|\mathcal{D}|}F_p^{-,n} \right) -
            e\sum_{p=1}^3 \omega_p \left(\dfrac{|\partial\mathcal{D}|}{|\mathcal{D}|}K_p - \bar{\mu}_p\dfrac{\bar{c}_p^n}{|\mathcal{D}|}F_p^{+,n}
            \right) \bar{c}_p^{n+1/2} + \\
            &+e \sum_{p=1}^2 \bar{\mu}_p\dfrac{\bar{c}_p^b}{|\mathcal{D}|}F_p^{-,n} +
            e\sum_{p=1}^2 \left(\dfrac{|\partial\mathcal{D}|}{|\mathcal{D}|}K_p - \bar{\mu}_p\dfrac{\bar{c}_p^n}{|\mathcal{D}|}F_p^{+,n}
            \right) \bar{c}_p^{n+1/2} +\\
            & + e \left(-\alpha \bar{c}_1^{n} + \eta \bar{c}_1^{n+1/2}
            + \alpha \bar{c}_1^n - \eta \bar{c}_1^{n+1/2}\right) \mu_1 |\mathbf{E}_g|+\\
            &+e \bar{\mu}_3\dfrac{\bar{c}_3^b}{|\mathcal{D}|}F_3^{-,n}  +
            e \left(\dfrac{|\partial\mathcal{D}|}{|\mathcal{D}|}K_3 - \bar{\mu}_3\dfrac{\bar{c}_3^n}{|\mathcal{D}|}F_3^{+,n}
            \right) \bar{c}_3^{n+1/2} = 0, \, n=0,\,\dots,\,N-1.
        \end{align*}

        \noindent Thus, the discrete problem conserves the total net charge.
    \end{proof}

    \begin{remark}
    \label{remark:splitting}
        The time splitting scheme described in Algorithm~\ref{algorithm:time-splitting} allows us to separate the chemical reactions from the drift and diffusion of charged particles, introducing an approximation of the same order of accuracy as the time discretization scheme. Indeed, substituting the system of equations~\eqref{eq:reduction1d:predictor:chemistry:volume} and~\eqref{eq:reduction1d:predictor:chemistry:surface} into~\eqref{eq:reduction1d:predictor:physics}, we obtain a semi-implicit discretization in time of equations~\eqref{eq:reduction1d:charge:1d:final:1} -~\eqref{eq:reduction1d:charge:1d:final:1}, where the drift-diffusion part is treated implicitly and the positive reactions are treated explicitly.
    \end{remark}
    
	\section{Results}
	\label{section:reduction1d:results}
    In this section we test the 1D model describing the evolution of charge concentrations in the initial stage of the electrical treeing, when the displacement of charges inside the defect does not significantly influence the electric field. Since in this work we consider the tangential electric field as a fixed known quantity and the charge concentrations in later phases are high enough to influence its value, we are limited to the avalanche. However, a future coupling of this model to the mixed--dimensional electrostatic problem presented in~\cite{crippa2024mixed} will allow simulations of complete partial discharges. We simulate, first on a straight line, then on a simple branched domain, and finally on the geometry of a realistic electrical treeing, the electron avalanche phase, consisting of an exponential increase of the concentration of charged particles. In all cases we suppose that at the initial time all the charge concentrations are equal to 0, and that electrons enter the domain from an inlet point and are free to leave from outflow nodes. The charged particles are transported along the 1D domain by a fixed tangential electric field $E_\Lambda = 10 \ \frac{MV}{m}$, multiplied by the charge mobilities
    $\bar{\mu}_1 = 8.1\cdot10^{-5} \ \frac{m^2}{kV \ \mu s},\ \bar{\mu}_2 = 2.7\cdot10^{-7} \ \frac{m^2}{kV \ \mu s},\ \bar{\mu}_3 = 2.3\cdot10^{-7} \ \frac{m^2}{kV \ \mu s},\ \bar{\mu}_{\Gamma,e} = \bar{\mu}_{\Gamma,h} = 10^{-13} \ \frac{m^2}{kV \ \mu s}.$ The diffusion coefficient is set to 0 for all the charged species, except for the electrons, with $\bar{\nu}_1 = 2.3\cdot 10^{-4}\ \frac{m^2}{\mu s}$. Finally, the corresponding coefficients of the chemical reactions described in equation~\eqref{eq:chemistry:expression} are fixed to $\alpha = 10^{5} \ \frac{1}{m}$ and $\eta = 3.4\cdot 10^{-3}\frac{1}{m}$. All the mentioned parameters are dependent on the electric field in the gas and have been computed according to the estimates provided by Morrow and Lowke~\cite{morrow1997streamer} for given $E_\Lambda$.

    \subsection{TC1: Straight line}
    \label{section:reduction1d:TC1}
    We start by considering as 1D domain the segment $\Lambda = \{s\in [0,10^{-4}]m\}$, with an electric field $\mathbf{E}_g = E_\Lambda\mathbf{s}$ directed along the line, from $s=0$ to $s=10^{-4}\ m$. The positive charges are transported in the same direction as the electric field, while the negative charges in the opposite direction. In particular, at $s=10^{-4}\ m$ we set a Dirichlet boundary condition $\bar{c}_1 = 10^{6} \ \frac{1}{m}$ on the electrons, which are transported inside the domain with velocity $v_1 = \bar{\mu}_1 E_\Lambda = 0.81 \ \frac{m}{\mu s}$. We simulate the evolution of the concentration of all the charged particles for $T = 0.2\ ns$, computing at each time step the radial components of the electric field induced by the volume charge. For this test case, we suppose that the external electric field is much smaller than $E_\Lambda$, and that its magnitude can be considered zero. As a consequence, the dipole moments remain null, since the surface charge distribution is homogeneous.
    
    In Figures~\ref{fig:TC1:cp:x} and~\ref{fig:TC1:cgamma:x} we can observe the evolution of the volume and surface charge concentrations, respectively, at different time steps. At the first time step (Figure~\ref{fig:TC1:cp:x:t0}), the electrons injected from $s=10^{-4} \ m$ are transported into the domain $\Lambda$, with an advection velocity that is opposite to the electric field, while all the other charged species have zero density. Then, thanks to the ionization and attachment, also positive and negative ions are formed, as we can see in Figure~\ref{fig:TC1:cp:x:t50}. From Figures~\ref{fig:TC1:cp:x:t50} and~\ref{fig:TC1:cp:x:t100}, we notice that the elevated ionization coefficient $\alpha$ makes the reaction dominant over the transport of positive ions. A similar behavior is also observed in Figure~\ref{fig:TC1:cgamma:x}, showing the displacement of positive and negative charge densities on the surface. Since we have set very small transport velocities for $c_{\Gamma,e}$ and $c_{\Gamma,h}$, their evolution over time is almost entirely due to the chemical reactions involving the volume charge transfer from the gas domain. We observe that the surface electron concentration increases quickly as the volume charge increases, while there is a slight delay in the formation of holes. This indicates that chemical reactions in the gas contribute to peak of volume electrons more than the detachment of electrons from the surface, and that electron attachment to the surface is more likely than detachment in this phase. The evolution of volume charge densities over the considered time interval at the outflow point $s = 0\ m$ is represented in Figure~\ref{fig:TC1:cp:t}: it shows an initially fast exponential increase of the concentrations of all the charged species, that slows down after $0.1\ ns$. At this time the peaks of charge concentrations in the domain reach the outflow point, thus most of the charge is transported out of the domain and the solution is almost steady. At the end of the simulation, the concentrations of electrons and positive ions become of order $10^{10}\ \frac{1}{m}$ and $10^{11}\ \frac{1}{m}$, respectively (much higher than the inflow condition), while the concentration of negative ions is of order $10^0\ \frac{1}{m}$, reflecting the notable difference in the values of the ionization and attachment reaction coefficients.\\
    These results are in agreement with the numerical and experimental tests performed in similar conditions by Morrow and Lowke~\cite{morrow1997streamer} and Tran et al.~\cite{tran2010numerical}.

    Finally, in Figure~\ref{fig:TC1:Eg} we observe the magnitude of the transverse electric field at different time steps: since there is no external electric field, the non--radial components are null and the only visible effect is due to the presence of volume charge. In particular, the radial electric field is negative, i.e. directed towards the 1D domain, at all the points of the domain where the net charge density $\bar{q}=\sum_{p=1}^2 \omega_p \bar{c}_p$ is negative, and positive, i.e. directed away from the 1D domain, where the net charge density is positive.

    \begin{figure}
        \centering
        \subfloat[$t=0.01 \ ns$\label{fig:TC1:cp:x:t0}]
        {
            \begin{tikzpicture}\scriptsize
                \begin{axis}[
                	width=.2\textwidth,height=.1\textwidth,
                    scale only axis,
                    xtick={0,0.25e-4,0.5e-4,0.75e-4,1e-4},
                    xlabel={$s\ \left(m\right)$},
                    ylabel={$c_1 \ \left(1/m\right)$},
                    title style={yshift=4pt},
                    grid=major,
                    grid style={dotted}
                  ]
                    \addplot[
                      mark=none,
                      line width=1.5pt
                    ] 
                    table[
                      col sep=comma,
                      x=arc_length,
                      y=c1
                    ] {./data/negative_charges_0.csv};
                \end{axis}
                \begin{axis}[
                	width=.2\textwidth,height=.1\textwidth,
                    at={(4.5cm,0)},
                    scale only axis,
                    ymin=-1e-4,
                    ymax=1e-3,
                    xtick={0,0.25e-4,0.5e-4,0.75e-4,1e-4},
                    xlabel={$s\ \left(m\right)$},
                    ylabel={$c_2 \ \left(1/m\right)$},
                    title style={yshift=2pt},
                    grid=major,
                    grid style={dotted}
                  ]
                    \addplot[
                      mark=none,
                      line width=1.5pt
                    ] 
                    table[
                      col sep=comma,
                      x=arc_length,
                      y=field
                    ] {./data/c2_0.csv};
                \end{axis}
                \begin{axis}[
                	width=.2\textwidth,height=.1\textwidth,
                    at={(9cm,0)},
                    scale only axis,
                    ymin=-1e-4,
                    ymax=1e-3,
                    xtick={0,0.25e-4,0.5e-4,0.75e-4,1e-4},
                    xlabel={$s\ \left(m\right)$},
                    ylabel={$c_3 \ \left(1/m\right)$},
                    title style={yshift=2pt},
                    grid=major,
                    grid style={dotted}
                  ]
                    \addplot[
                      mark=none,
                      line width=1.5pt
                    ] 
                    table[
                      col sep=comma,
                      x=arc_length,
                      y=field
                    ] {./data/c3_0.csv};
                \end{axis}
            \end{tikzpicture}
        }\\
        \subfloat[$t=0.1 \ ns$\label{fig:TC1:cp:x:t50}]
        {
            \begin{tikzpicture}\scriptsize
                \begin{axis}[
                	width=.2\textwidth,height=.1\textwidth,
                    scale only axis,
                    xtick={0,0.25e-4,0.5e-4,0.75e-4,1e-4},
                    xlabel={$s\ \left(m\right)$},
                    ylabel={$c_1 \ \left(1/m\right)$},
                    title style={yshift=4pt},
                    grid=major,
                    grid style={dotted}
                  ]
                    \addplot[
                      mark=none,
                      line width=1.5pt
                    ] 
                    table[
                      col sep=comma,
                      x=arc_length,
                      y=c1
                    ] {./data/negative_charges_10.csv};
                \end{axis}
                \begin{axis}[
                	width=.2\textwidth,height=.1\textwidth,
                	at={(4.5cm,0)},
                    scale only axis,
                    xtick={0,0.25e-4,0.5e-4,0.75e-4,1e-4},
                    xlabel={$s\ \left(m\right)$},
                    ylabel={$c_2 \ \left(1/m\right)$},
                    title style={yshift=2pt},
                    grid=major,
                    grid style={dotted}
                  ]
                    \addplot[
                      mark=none,
                      line width=1.5pt
                    ] 
                    table[
                      col sep=comma,
                      x=arc_length,
                      y=c2
                    ] {./data/negative_charges_10.csv};
                \end{axis}
                \begin{axis}[
                	width=.2\textwidth,height=.1\textwidth,
                    scale only axis,
                    at={(9cm,0)},
                    xtick={0,0.25e-4,0.5e-4,0.75e-4,1e-4},
                    xlabel={$s\ \left(m\right)$},
                    ylabel={$c_3 \ \left(1/m\right)$},
                    title style={yshift=2pt},
                    grid=major,
                    grid style={dotted}
                  ]
                    \addplot[
                      mark=none,
                      line width=1.5pt
                    ] 
                    table[
                      col sep=comma,
                      x=arc_length,
                      y=c3
                    ] {./data/c3_10.csv};
                \end{axis}
            \end{tikzpicture}
        }\\
        \subfloat[$t=0.2 \ ns$\label{fig:TC1:cp:x:t100}]
        {
            \begin{tikzpicture}\scriptsize
                \begin{axis}[
                	width=.2\textwidth,height=.1\textwidth,
                    scale only axis,
                    xtick={0,0.25e-4,0.5e-4,0.75e-4,1e-4},
                    xlabel={$s\ \left(m\right)$},
                    ylabel={$c_1 \ \left(1/m\right)$},
                    title style={yshift=4pt},
                    grid=major,
                    grid style={dotted}
                  ]
                    \addplot[
                      mark=none,
                      line width=1.5pt
                    ] 
                    table[
                      col sep=comma,
                      x=arc_length,
                      y=c1
                    ] {./data/negative_charges_20.csv};
                \end{axis}
                \begin{axis}[
                	width=.2\textwidth,height=.1\textwidth,
                    scale only axis,
                    at={(4.5cm,0)},
                    xtick={0,0.25e-4,0.5e-4,0.75e-4,1e-4},
                    xlabel={$s\ \left(m\right)$},
                    ylabel={$c_3 \ \left(1/m\right)$},
                    title style={yshift=2pt},
                    grid=major,
                    grid style={dotted}
                  ]
                    \addplot[
                      mark=none,
                      line width=1.5pt
                    ] 
                    table[
                      col sep=comma,
                      x=arc_length,
                      y=c2
                    ] {./data/negative_charges_20.csv};
                \end{axis}
                \begin{axis}[
                	width=.2\textwidth,height=.1\textwidth,
                    scale only axis,
                    at={(9cm,0)},
                    xtick={0,0.25e-4,0.5e-4,0.75e-4,1e-4},
                    xlabel={$s\ \left(m\right)$},
                    ylabel={$c_3 \ \left(1/m\right)$},
                    title style={yshift=2pt},
                    grid=major,
                    grid style={dotted}
                  ]
                    \addplot[
                      mark=none,
                      line width=1.5pt
                    ] 
                    table[
                      col sep=comma,
                      x=arc_length,
                      y=c3
                    ] {./data/c3_20.csv};
                \end{axis}
            \end{tikzpicture}
        }
        \caption{\textbf{TC1 -} Volume concentrations of charged species on $\Lambda=[0,10^{-4}]\ m$, for $\Delta t = 10^{-6}\ \mu s$.}
        \label{fig:TC1:cp:x}
    \end{figure}
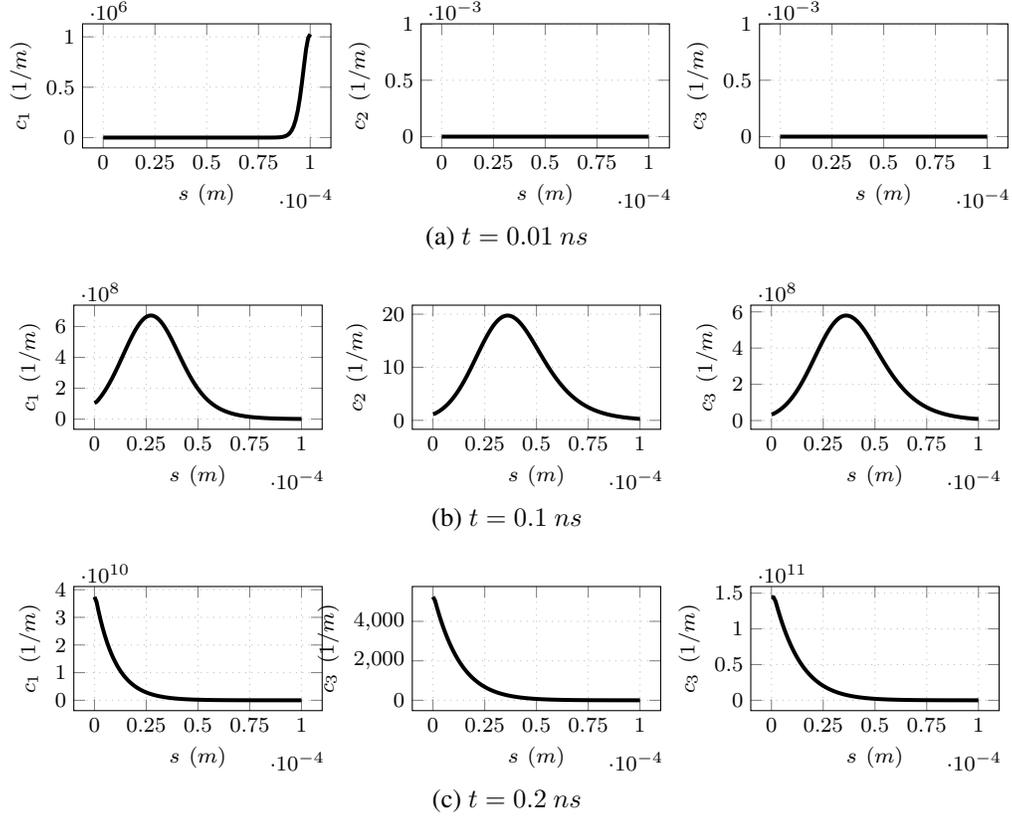

    \begin{figure}
        \centering
        \subfloat[$t=0.1 \ ns$\label{fig:TC1:cgamma:x:t50}]
        {
            \begin{tikzpicture}\scriptsize
                \begin{axis}[
                	width=.2\textwidth,height=.1\textwidth,
                    scale only axis,
                    xtick={0,0.25e-4,0.5e-4,0.75e-4,1e-4},
                    xlabel={$s\ \left(m\right)$},
                    ylabel={$c_ {\Gamma,e} \ \left(1/m\right)$},
                    title style={yshift=4pt},
                    grid=major,
                    grid style={dotted}
                  ]
                    \addplot[
                      mark=none,
                      line width=1.5pt
                    ] 
                    table[
                      col sep=comma,
                      x=arc_length,
                      y=ce
                    ] {./data/ce_10.csv};
                \end{axis}
                \begin{axis}[
                	width=.2\textwidth,height=.1\textwidth,
                    at={(5cm,0)},
                    scale only axis,
                    xtick={0,0.25e-4,0.5e-4,0.75e-4,1e-4},
                    xlabel={$s\ \left(m\right)$},
                    ylabel={$c_{\Gamma,h} \ \left(1/m\right)$},
                    title style={yshift=2pt},
                    grid=major,
                    grid style={dotted}
                  ]
                    \addplot[
                      mark=none,
                      line width=1.5pt
                    ] 
                    table[
                      col sep=comma,
                      x=arc_length,
                      y=ch
                    ] {./data/ch_10.csv};
                \end{axis}
            \end{tikzpicture}
        }\\
        \subfloat[$t=0.2 \ ns$\label{fig:TC1:cgamma:x:t100}]
        {
             \begin{tikzpicture}\scriptsize
                \begin{axis}[
                	width=.2\textwidth,height=.1\textwidth,
                    scale only axis,
                    xtick={0,0.25e-4,0.5e-4,0.75e-4,1e-4},
                    xlabel={$s\ \left(m\right)$},
                    ylabel={$c_{\Gamma,e} \ \left(1/m\right)$},
                    title style={yshift=4pt},
                    grid=major,
                    grid style={dotted}
                  ]
                    \addplot[
                      mark=none,
                      line width=1.5pt
                    ] 
                    table[
                      col sep=comma,
                      x=arc_length,
                      y=ce
                    ] {./data/ce_20.csv};
                \end{axis}
                \begin{axis}[
                	width=.2\textwidth,height=.1\textwidth,
                    at={(5cm,0)},
                    scale only axis,
                    xtick={0,0.25e-4,0.5e-4,0.75e-4,1e-4},
                    xlabel={$s\ \left(m\right)$},
                    ylabel={$c_{\Gamma,h} \ \left(1/m\right)$},
                    title style={yshift=2pt},
                    grid=major,
                    grid style={dotted}
                  ]
                    \addplot[
                      mark=none,
                      line width=1.5pt
                    ] 
                    table[
                      col sep=comma,
                      x=arc_length,
                      y=ch
                    ] {./data/ch_20.csv};
                \end{axis}
            \end{tikzpicture}
        }
        \caption{\textbf{TC1 -} Surface charge concentrations on $\Lambda=[0,10^{-4}]\ m$, for $\Delta t = 10^{-6}$.}
        \label{fig:TC1:cgamma:x}
    \end{figure}
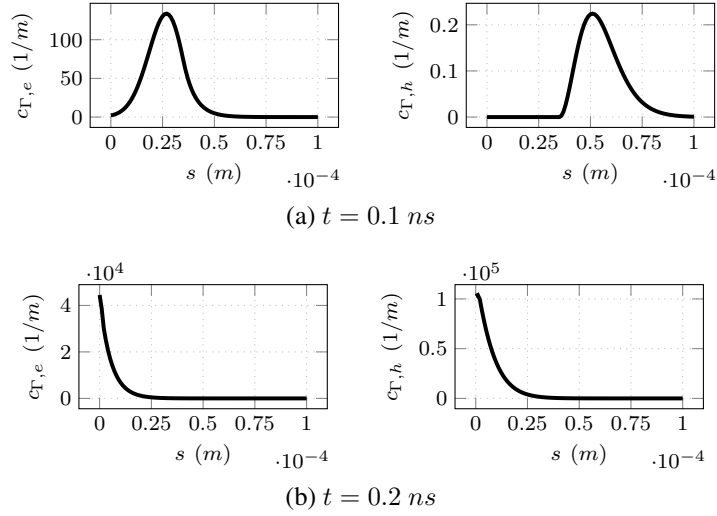

    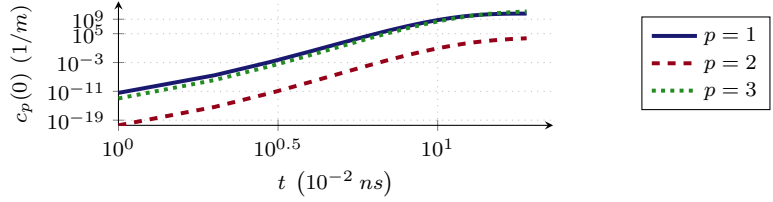
\begin{figure}
    \centering
    \begin{tikzpicture}\scriptsize
        \begin{axis}[
            width=.35\textwidth,
            height=.1\textwidth,
            scale only axis,
            axis y line=left,
            axis x line=bottom,
            ymode=log,
            xmode=log,
            ymax=1e+14,
            xmax=23,
            ytick={1e-19,1e-11,1e-3,1e+5,1e+9},
            xlabel={$t\ \left(10^{-2} \ ns\right)$},
            ylabel={$c_p(0) \ \left(1/m\right)$},
            legend entries={
                $p=1$, $p=2$, $p=3$
            },
            legend style={
                at={(1.5,.5)},
                anchor=east
            },
            grid=major,
            grid style={dotted}
          ]
            \addplot[
              color=darkBlue,
              mark=none,
              line width=1.5pt
            ] 
            table[
              col sep=comma,
              x=Time,
              y=c1
            ] {./data/c1_over_time_20.csv};
            \addplot[
              color=darkRed,
              mark=none,
              dashed,
              line width=1.5pt
            ] 
            table[
              col sep=comma,
              x=Time,
              y=c2
            ] {./data/c2_over_time_20.csv};
            \addplot[
              color=lightGreen,
              mark=none,
              dotted,
              line width=1.5pt
            ] 
            table[
              col sep=comma,
              x=Time,
              y=c3
            ] {./data/c3_over_time_20.csv};
        \end{axis}
    \end{tikzpicture}
    \caption{\textbf{TC1 -} Volume charge concentration at $s=0$ from $t=0 \ ns$ to $t=0.2 \ ns$, for $\Delta t = 10^{-6}$.}
    \label{fig:TC1:cp:t}
    \end{figure}

    \begin{figure}
        \centering
        \subfloat[$t=0.01 \ ns$\label{fig:TC1:Eg:t0}]
        {
            \begin{tikzpicture}
                \scriptsize
                \begin{axis}[
                    width=.28\textwidth,height=.1\textwidth,
                    scale only axis,
                    name=leftaxis,
                    xlabel={$s\ (m)$},
                    ylabel={charge concentration $\left(1/m\right)$},
                    axis y line*=left,
                    axis x line=bottom,
                    grid=major,
                    grid style={dotted}
                  ]
                    \addplot[
                      color=darkRed,
                      mark=none,
                      dashed,
                      line width=1.5pt
                    ] 
                    table[
                      col sep=comma,
                      x=arc_length,
                      y=c1
                    ] {./data/negative_charges_0.csv};
                    \addplot[
                        color=lightGreen,
                        mark=none,
                        dotted,
                        line width=1.5pt
                    ]
                    table[
                        col sep=comma,
                        x=arc_length,
                        y=field
                    ]{./data/c3_0.csv};
                  \end{axis}

                  \begin{axis}[
                  	width=.28\textwidth,height=.1\textwidth,
                    scale only axis,
                    name=rightaxis,
                    axis y line*=right,
                    axis x line =bottom,
                    ylabel={electric field $\left(V/m \right)$},
                    ymin=-1e-17,
                    ymax=1e-17,
                    grid=major,
                    grid style={dashdotted}
                  ]
                    \addplot[
                        color=darkBlue,
                        mark=none,
                        line width=1.5pt
                    ]
                    table[
                        col sep=comma,
                        x=arc_length,
                        y=field
                    ]{./data/Eg_0.csv};
                    \addplot[darkGray,mark=none] coordinates{(0,0)(1e-4,0)};
                  \end{axis}
            \end{tikzpicture}
        }\qquad
        \subfloat[$t=0.1 \ ns$\label{fig:TC1:Eg:t50}]
        {
        \begin{tikzpicture}
                \scriptsize
                \begin{axis}[
                    width=.28\textwidth,height=.1\textwidth,
                    scale only axis,
                    name=leftaxis,
                    xlabel={$s\ (m)$},
                    ylabel={charge concentration $\left(1/m\right)$},
                    axis y line*=left,
                    axis x line=bottom,
                    grid=major,
                    grid style={dotted}
                  ]
                    \addplot[
                      color=darkRed,
                      mark=none,
                      dashed,
                      line width=1.5pt
                    ] 
                    table[
                      col sep=comma,
                      x=arc_length,
                      y=negative_charge
                    ] {./data/negative_charges_10.csv};
                    \addplot[
                        color=lightGreen,
                        mark=none,
                        dotted,
                        line width=1.5pt
                    ]
                    table[
                        col sep=comma,
                        x=arc_length,
                        y=c3
                    ]{./data/c3_10.csv};

                    \addlegendimage{darkBlue, ultra thick, mark=none}
                  \end{axis}

                  \begin{axis}[
                  	width=.28\textwidth,height=.1\textwidth,
                    scale only axis,
                    name=rightaxis,
                    axis y line*=right,
                    axis x line =bottom,
                    ylabel={electric field $\left(V/m \right)$},
                    ymin=-1e-15,
                    ymax=1e-15,
                    grid=major,
                    grid style={dashdotted}
                  ]
                    \addplot[
                        color=darkBlue,
                        mark=none,
                        line width=1.5pt
                    ]
                    table[
                        col sep=comma,
                        x=arc_length,
                        y=field
                    ]{./data/Eg_10.csv};
                    \addplot[darkGray,mark=none] coordinates{(0,0)(1e-4,0)};
                  \end{axis}
            \end{tikzpicture}
        }\\
        \subfloat[$t=0.2 \ ns$\label{fig:TC1:Eg:t100}]
        {
        \begin{tikzpicture}
                \scriptsize
                \begin{axis}[
                    width=.28\textwidth,height=.1\textwidth,
                    scale only axis,
                    name = leftaxis,
                    xlabel={$s\ (m)$},
                    ylabel={charge concentration $\left(1/m\right)$},
                    axis y line*=left,
                    axis x line=bottom,
                     legend entries={
                         $\bar{c}_1+\bar{c}_2$, $\bar{c}_3$, $\mathbf{E}_g$
                     },
                    legend style={
                    at={(1.5,0.5)},
                    anchor=west
                    },
                    grid=major,
                    grid style={dotted}
                  ]
                    \addplot[
                      color=darkRed,
                      mark=none,
                      dashed,
                      line width=1.5pt
                    ] 
                    table[
                      col sep=comma,
                      x=arc_length,
                      y=negative_charge
                    ] {./data/negative_charges_20.csv};
                    \addplot[
                        color = lightGreen,
                        mark = none,
                        dotted,
                        line width=1.5pt
                    ]
                    table[
                        col sep = comma,
                        x = arc_length,
                        y = c3
                    ]{./data/c3_20.csv};

                    \addlegendimage{darkBlue, ultra thick, mark = none}
                  \end{axis}

                  \begin{axis}[
                  	width=.28\textwidth,height=.1\textwidth,
                    scale only axis,
                    name = rightaxis,
                    axis y line*=right,
                    axis x line =bottom,
                    ylabel = {electric field $\left(V/m \right)$},
                    ymin = -5e-14,
                    ymax = 5e-14,
                    grid=major,
                    grid style={dashdotted}
                  ]
                    \addplot[
                        color = darkBlue,
                        mark = none,
                        line width=1.5pt
                    ]
                    table[
                        col sep = comma,
                        x = arc_length,
                        y = field
                    ]{./data/Eg_20.csv};
                    \addplot[darkGray,mark=none] coordinates{(0,0)(1e-4,0)};
                  \end{axis}
            \end{tikzpicture}
        }
        \caption{\textbf{TC1 -} Radial electric field on $\Lambda=[0,10^{-4}]\ m$ (blue solid line), compared to the concentration of positive and negative volume charge (green dotted line and red dashed line, respectively), obtained with $\Delta t = 10^{-6}$.}
        \label{fig:TC1:Eg}
    \end{figure}
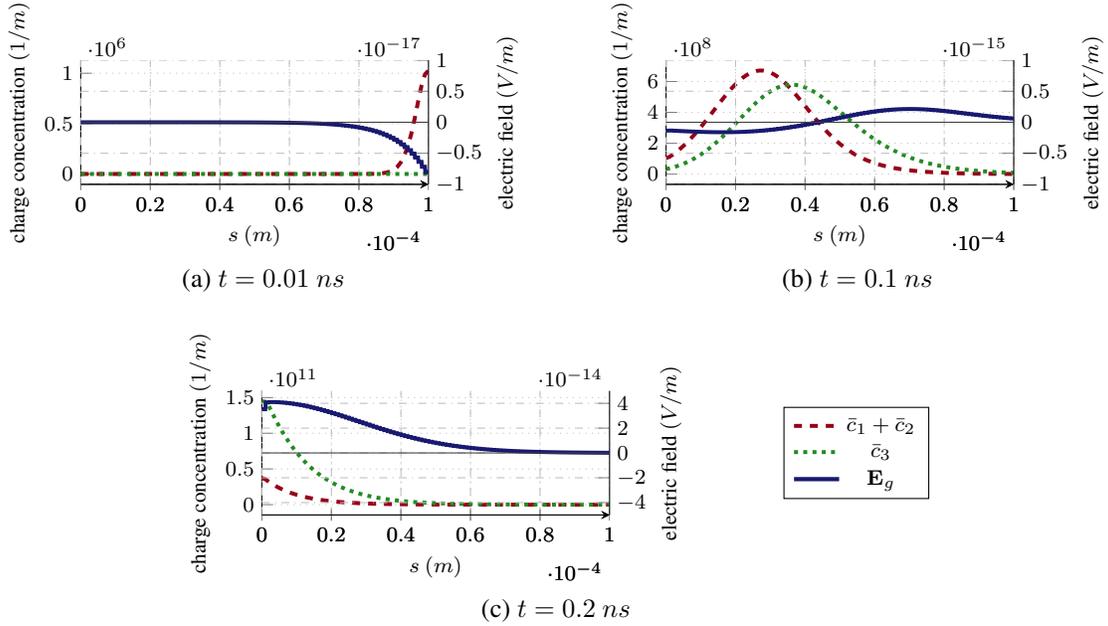

    \subsubsection{TC1 - Convergence of the numerical scheme}
    \label{section:reduction1d:TC1:convergence}
    We consider now a numerical solution, computed on a very fine time grid, with $\Delta t = 10^{-9}\ \mu s$, and study the convergence of the time discretization scheme to this reference solution. We fix the space mesh size to $h=10^{-7} m$ and observe the approximated error in $L^2$-norm at the final time $T=0.2\ ns$, defined as follows:

    \begin{equation*}
        \text{err}_p(\bar{c}) := \dfrac{1}{\|\bar{c}_p^{\text{ref},N}\|}
        \sqrt{h \sum_{i=1}^M \left( \bar{c}_{p,i}^N - \bar{c}_{p,i}^{\text{ref},N}\right)^2},
    \end{equation*}

    \noindent where $N$ denotes the number of time steps, $M$ the number of cells, and $\|\bar{c}_p^{\text{ref},N}\|$ the approximate $L^2$-norm of the reference solution at the final time $T$.

    Since both the Patankar--Euler and Explicit Euler schemes have convergence order 1, and the time-splitting scheme described in Algorithm~\ref{algorithm:time-splitting} is also of order 1, we expect the error to decrease linearly with $\Delta t$. The expected convergence of order 1 is confirmed by the numerical tests represented in Figure~\ref{fig:TC1:convergence} for all the charged species, where numerical solutions obtained for $\Delta t \in \{10^{-5},\,3\cdot 10^{-6},\,10^{-6},\,3\cdot 10^{-7},\,10^{-7}\}\ \mu s$ are compared with the reference solution obtained for $\Delta t = 10^{-9}\ \mu s$. The convergence of the adopted Finite Volume schemes, instead, is discussed in~\cite{crippa2024numericalmethods}.

    \begin{figure}
        \centering
        \subfloat[Concentration of volume electrons \label{fig:TC1:c1:convergence}]{
            \begin{tikzpicture}\scriptsize
                \begin{axis}[
                        scale only axis,
                        width=.2\textwidth,height=.1\textwidth,
                        ymode=log,
                        xmode=log,
                        axis x line=bottom,
                        axis y line=left,
                        ylabel={$\text{err}_1$},
                        xlabel={$\Delta t\ \left(\mu s\right)$},
                        xmax = 3e-5,
                        grid=major,
                        grid style={dotted}
                    ]
                    \addplot[
                        mark=o,
                        line width=1.5pt
                    ]
                    table[
                        x=dt,
                        y=errors1,
                        col sep=comma
                    ]{./data/convergence_errors_normalized};
                \addplot[
                    domain=3e-9:1e-5,
                    samples=100,
                    line width=1.5pt,
                    dashed,
                    color=darkRed
                ] {x*50};
                \end{axis}
            \end{tikzpicture}
        }\quad
        \subfloat[Concentration of negative ions \label{fig:TC1:c2:convergence}]{
            \begin{tikzpicture}\scriptsize
                \begin{axis}[
                        scale only axis,
                        width=.2\textwidth,height=.1\textwidth,
                        ymode=log,
                        xmode=log,
                        axis x line=bottom,
                        axis y line=left,
                        ylabel={$\text{err}_2$},
                        xlabel={$\Delta t\ \left(\mu s\right)$},
                        xmax = 3e-5,
                        grid=major,
                        grid style={dotted}
                    ]
                    \addplot[
                        mark=o,
                        line width=1.5pt
                    ]
                    table[
                        x=dt,
                        y=errors2,
                        col sep=comma
                    ]{./data/convergence_errors_normalized};
                \addplot[
                    domain=3e-9:1e-5,
                    samples=100,
                    line width=1.5pt,
                    dashed,
                    darkRed
                ] {x*1e+4};
                \end{axis}
            \end{tikzpicture}
        }\quad
        \subfloat[Concentration of positive ions \label{fig:TC1:c3:convergence}]{
            \begin{tikzpicture}\scriptsize
                \begin{axis}[
                        scale only axis,
                        width=.2\textwidth,height=.1\textwidth,
                        ymode=log,
                        xmode=log,
                        axis x line=bottom,
                        axis y line=left,
                        ylabel={$\text{err}_3$},
                        xlabel={$\Delta t\ \left(\mu s\right)$},
                        xmax = 3e-5,
                        grid=major,
                        grid style={dotted}
                    ]
                    \addplot[
                        mark=o,
                        line width=1.5pt
                    ]
                    table[
                        x=dt,
                        y=errors3,
                        col sep=comma
                    ]{./data/convergence_errors_normalized};
                \addplot[
                    domain=3e-9:1e-5,
                    samples=100,
                    line width=1.5pt,
                    dashed,
                    darkRed
                ] {x*1e+4};
                \end{axis}
            \end{tikzpicture}
        }\quad
        \hspace{15pt}
        \subfloat[Concentration of surface electrons \label{fig:TC1:ce:convergence}]{
            \begin{tikzpicture}\scriptsize
                \begin{axis}[
                        scale only axis,
                        width=.2\textwidth,height=.1\textwidth,
                        ymode=log,
                        xmode=log,
                        axis x line=bottom,
                        axis y line=left,
                        ylabel={$\text{err}_{\Gamma,e}$},
                        xlabel={$\Delta t\ \left(\mu s\right)$},
                        xmax = 3e-5,
                        grid=major,
                        grid style={dotted}
                    ]
                    
                \addplot[
                    mark=o,
                    line width=1.5pt
                ]
                table[
                    x=dt,
                    y=errorse,
                    col sep=comma
                ]{./data/convergence_errors_normalized};
                \addplot[
                    domain=3e-9:1e-5,
                    samples=100,
                    line width=1.5pt,
                    dashed,
                    darkRed
                ] {x*1e+4};
            \end{axis}
        \end{tikzpicture}
        }\quad
        \subfloat[Concentration of surface holes \label{fig:TC1:ch:convergence}]{
            \begin{tikzpicture}\scriptsize
                \begin{axis}[
                        scale only axis,
                        width=.2\textwidth,height=.1\textwidth,
                        ymode=log,
                        xmode=log,
                        axis x line=bottom,
                        axis y line=left,
                        ylabel={$\text{err}_{\Gamma,h}$},
                        xlabel={$\Delta t\ \left(\mu s\right)$},
                        xmax = 3e-5,
                        legend entries={
                            error,
                            $y=\Delta t$
                        },
                        legend style={
                            at={(2,.5)},
                            anchor=east
                        },
                        grid=major,
                        grid style={dotted}
                    ]
                \addplot[
                    mark=o,
                    line width=1.5pt
                ]
                table[
                    x=dt,
                    y=errorsh,
                    col sep=comma
                ]{./data/convergence_errors_normalized};
                \addplot[
                    domain=3e-9:1e-5,
                    samples=100,
                    line width=1.5pt,
                    dashed,
                    darkRed
                ] {x*1e+4};
            \end{axis}
        \end{tikzpicture}
        }
        \caption{\textbf{TC1 -} Normalized $L^2$-error at the final time $T=1 \ ns$, for $\Delta t$ from $10^{-7}\ \mu s$ to $10^{-5}\ \mu s$. The error is computed comparing the approximate charge concentrations with a reference numerical solution obtained with $\Delta t=10^{-9}\ \mu s$.}
        \label{fig:TC1:convergence}
    \end{figure}
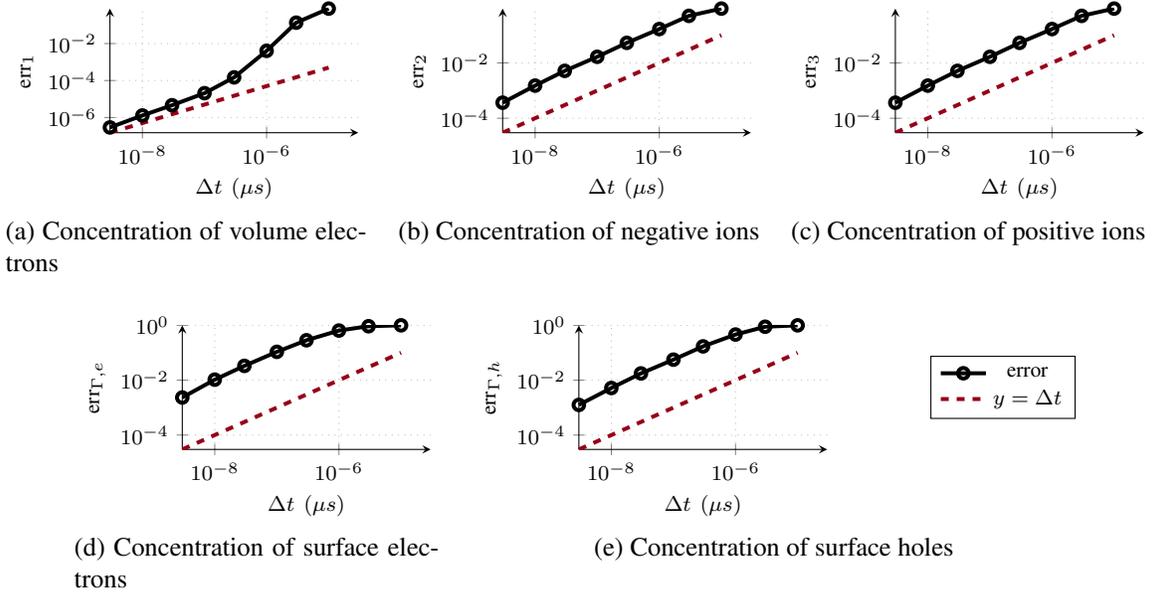

    \subsection{TC2: Domain with a bifurcation}
\label{section:reduction1d:TC2}
We consider now the simple bifurcated domain represented in Figure~\ref{figure:reduction1d:TC2:domain}, and suppose that the tangential component of the electric field in each branch is constant and equal to $E_\Lambda = 10^7 \frac{V}{m}$. Initially there is no charge of any species inside the domain, and we impose a Dirichlet condition on the volume concentration of electrons $\bar{c}_1 = 10^6 \frac{1}{m}$ on one inflow node (point D in Figure~\ref{figure:reduction1d:TC2:domain}).

\begin{figure}
	\centering
	\begin{tikzpicture}
		[decoration={markings, mark= at position 0.5 with {\arrow{stealth}}}]
		\scriptsize
		\filldraw (0,0) circle (2pt) node [xshift = .3cm, yshift = .3cm]{C};
		\filldraw (0,2.5) circle (2pt) node [xshift = .3cm]{B};
		\filldraw (-2.5,0) circle (2pt) node[xshift = .3cm, yshift = .3cm]{A};
		\filldraw (2.5,0) circle (2pt) node (D)[xshift = -.3cm, yshift = .3cm]{D};
		
		\draw [postaction={decorate}, very thick] (-2.5,0) -- (-0,0);
		\draw [postaction={decorate}, very thick] (0,0) -- (2.5,0);
		\draw [postaction={decorate}, very thick] (0,2.5) -- (0,0);
		
		\begin{scope}[overlay]
			\node at (5,1.5) (inflow) {\scriptsize inflow end--node for electrons};
			\draw[->, thin, bend left=25] (inflow.south) to (2.55,.1);
		\end{scope}
	\end{tikzpicture}
	\caption{\textbf{TC2 -} Branched domain with three edges AB, CD, CB, three end nodes A, B, D, and one bifurcation node C. The arrows represent the direction of the tangential electric field on each branch. Non--homogeneous Dirichlet condition on the volume concentration of electrons is imposed on node D.}
	\label{figure:reduction1d:TC2:domain}
\end{figure}
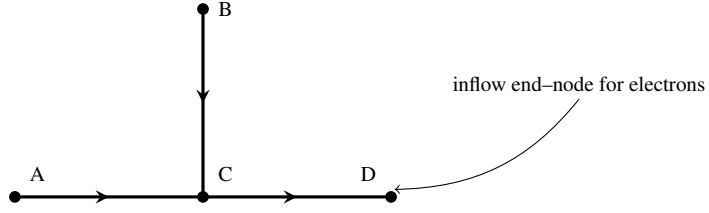

\noindent As discussed in Section~\ref{section:reduction1d:TC1}, the evolution of the charge densities is mainly governed by the chemical reaction of each charged species with the volume electrons, thus the only evident effects of diffusion and transport can be observed for $\bar{c}_1$. Figure~\ref{figure:reduction1d:TC2:c1} displays the behavior of the solution $\bar{c}_1$ in presence of a bifurcation node. The charge flowing into the domain from the inflow node is equally split between the two branches: the edge AB of Figure~\ref{figure:reduction1d:TC2:domain} corresponds to the segment $(0,2\cdot 10^{-4})m$ in Figure~\ref{figure:reduction1d:TC2:AB}, while the vertical segment BC of Figure~\ref{figure:reduction1d:TC2:domain} corresponds to the segment $(0,2\cdot 10^{-4})m$ in Figure~\ref{figure:reduction1d:TC2:BC}. As discussed in~\cite{crippa2024numericalmethods}, this ensures conservation of charge at the junction.

Finally, in Figure~\ref{figure:reduction1d:TC2:c}, we report all the final concentrations of charges in the branched domain at time $T=0.35 \ ns$. As for the test case TC1, discussed in Section~\ref{section:reduction1d:TC1}, the final volume concentration of positive charged species is comparable to the volume concentration of electrons, while the concentration of negative ions is much smaller, due to the small attachment coefficient. The surface concentrations on electrons and ions are the slowest to move, but in this case we observe a higher accumulation of charge on the interface than in TC1, due to the longer domain and time interval. We point out that in a realistic physical application the result would be different, because the modified distribution of charge would influence the tangential electric field and, accordingly, the chemical reaction coefficients, making attachment reactions more likely and introducing a non--negligible recombination effect.

\begin{figure}\centering
	\subfloat[Concentration $\bar{c}_1$ on AD\label{figure:reduction1d:TC2:AB}]
	{\begin{tikzpicture}
			\scriptsize
			\begin{axis}[
				xlabel={$s\ \left(m\right)$},
				ylabel={$\bar{c}_1\ \left(\frac{1}{m}\right)$},
				ymax=2.1e+10,
				width=.35\textwidth,height=.2\textwidth,
				grid=major,
				grid style={dotted}
				]
				\addplot+[const plot, black, line width=1.5pt, mark=none]
				table[col sep=comma, x=arc_length, y=Result] {./data/c1_overline.csv};
			\end{axis}
	\end{tikzpicture}}\qquad
	\subfloat[Concentration $\bar{c}_1$ on CB\label{figure:reduction1d:TC2:BC}]
	{\begin{tikzpicture}
			\scriptsize
			\begin{axis}[
				xlabel={$s\ \left(m\right)$},
				ylabel={$\bar{c}_1\ \left(\frac{1}{m}\right)$},
				ymax=2.1e+10,
				width=.35\textwidth,height=.2\textwidth,
				grid=major,
				grid style={dotted}
				]
				\addplot+[const plot, black, line width=1.5pt, mark=none]
				table[col sep=comma, x=arc_length_input_1, y=Result_input_1] {./data/c1_overline.csv};
			\end{axis}
	\end{tikzpicture}}
	\caption{\textbf{TC2 -} Concentration of volume electrons at time $t=0.2 \ ns$ on the branched domain represented in Figure~\ref{figure:reduction1d:TC2:domain}. At the bifurcation point C ($s=2\cdot 10^{-4}$ in Figure~\ref{figure:reduction1d:TC2:AB}) the charge density, flowing from the end node D ($s=4\cdot 10^{-4}$ in Figure~\ref{figure:reduction1d:TC2:AB}), is equally distributed on the two branches AB ($s\in(0,2\cdot 10^{-4})$ in Figure~\ref{figure:reduction1d:TC2:AB}) and BC ($s\in(0,2\cdot 10^{-4})$ in Figure~\ref{figure:reduction1d:TC2:BC}).}
	\label{figure:reduction1d:TC2:c1}
\end{figure}
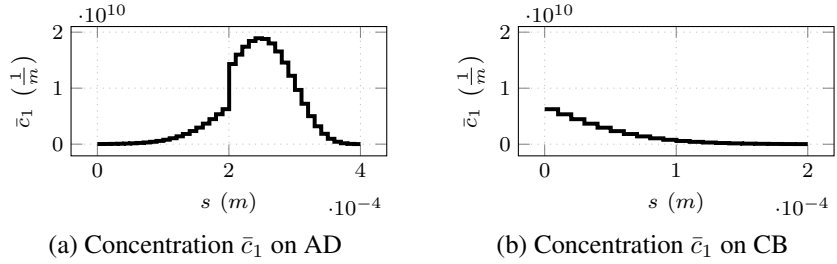

\begin{figure}\centering
	\subfloat[$\bar{c}_1$\label{figure:reduction1d:TC2:c:c1}]
	{\centering\includegraphics[width=.2\textwidth]{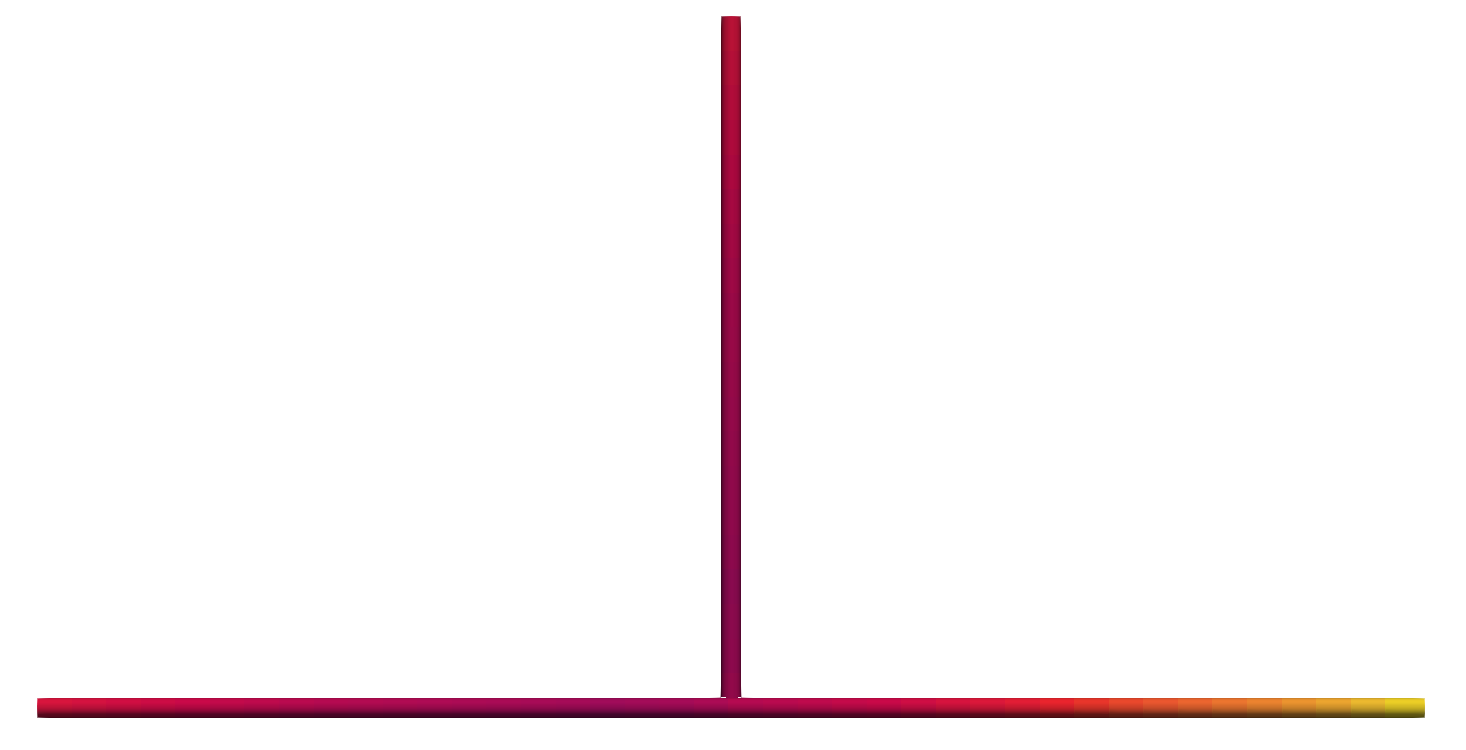}}
	\quad
	\subfloat[$\bar{c}_2$\label{figure:reduction1d:TC2:c:c2}]
	{\centering\includegraphics[width=.2\textwidth]{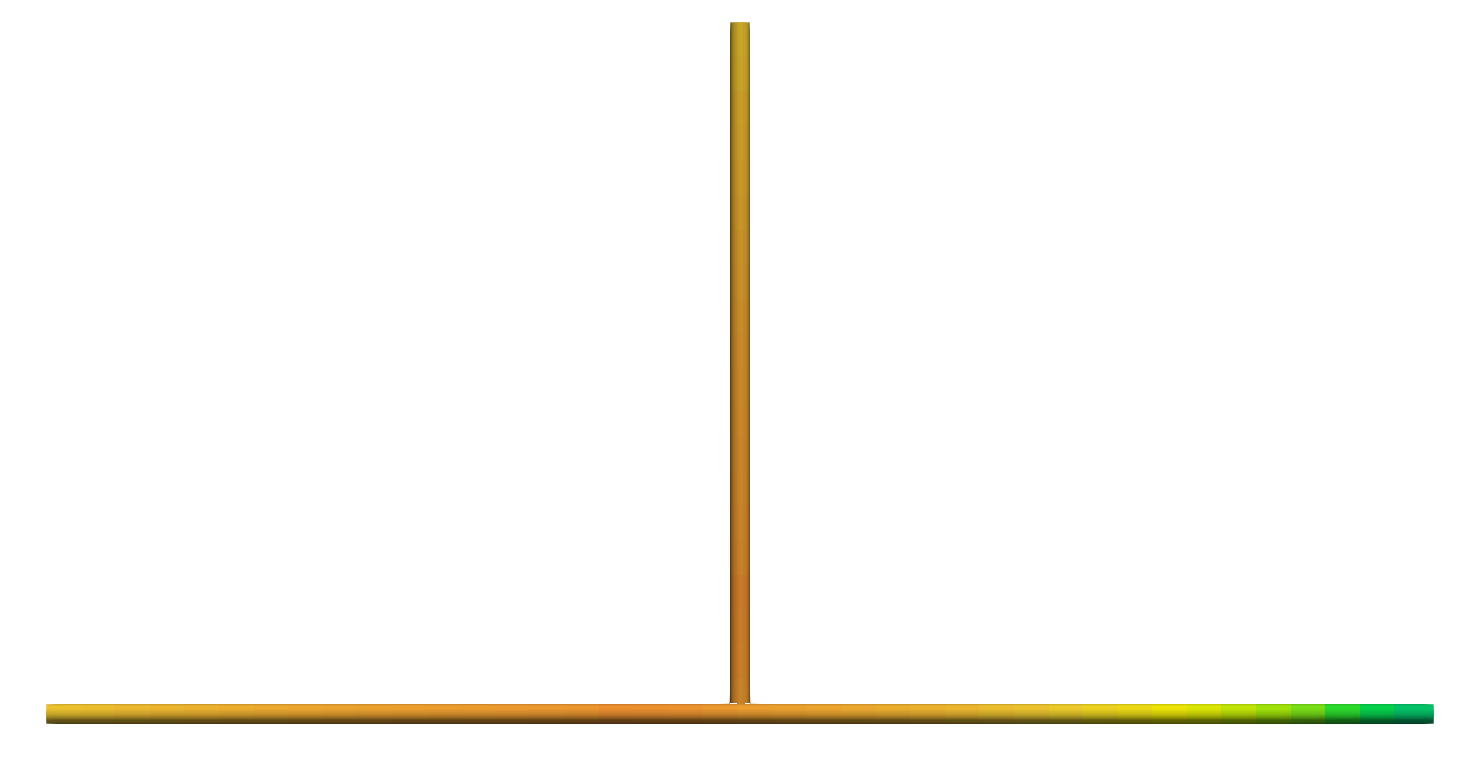}}
	\quad
	\subfloat[$\bar{c}_3$\label{figure:reduction1d:TC2:c:c3}]
	{\centering\includegraphics[width=.3\textwidth]{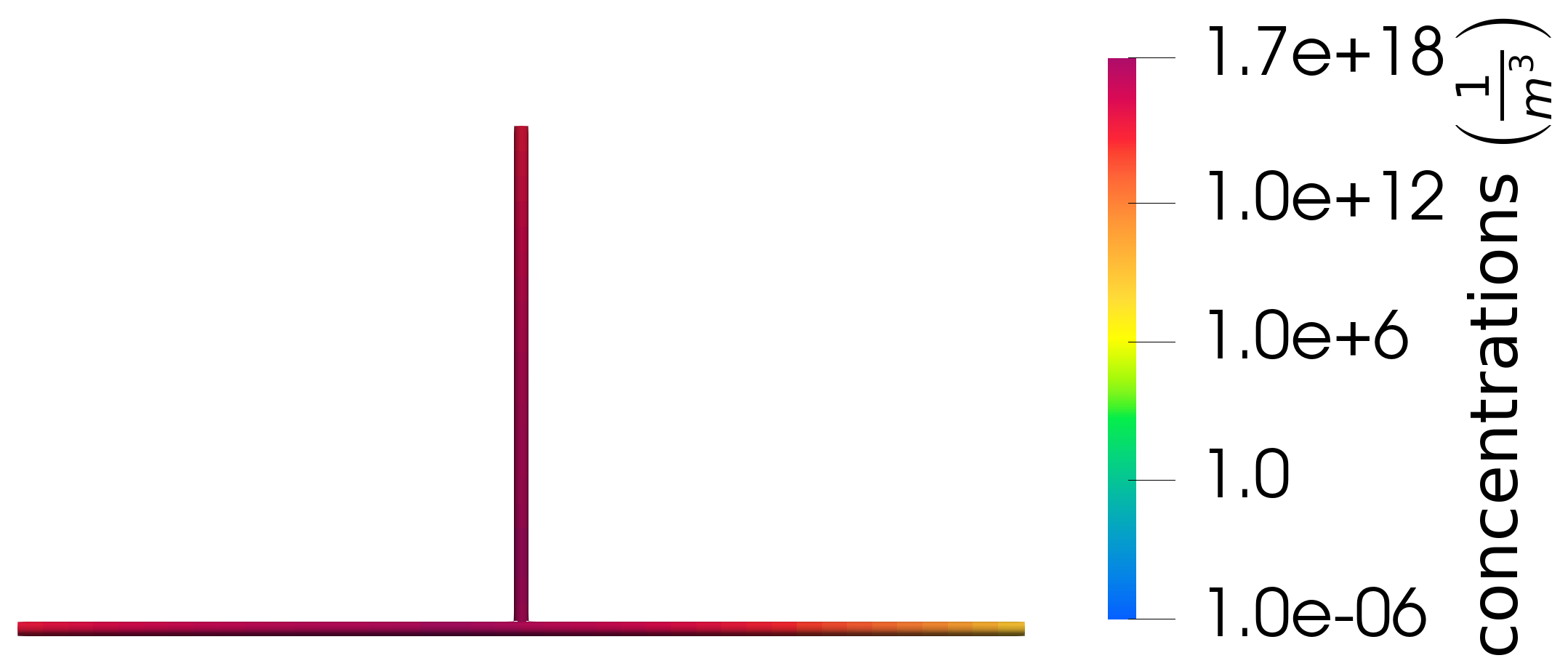}}\\
	\subfloat[$\bar{c}_{\Gamma,e}$\label{figure:reduction1d:TC2:c:ce}]
	{\centering\includegraphics[width=.2\textwidth]{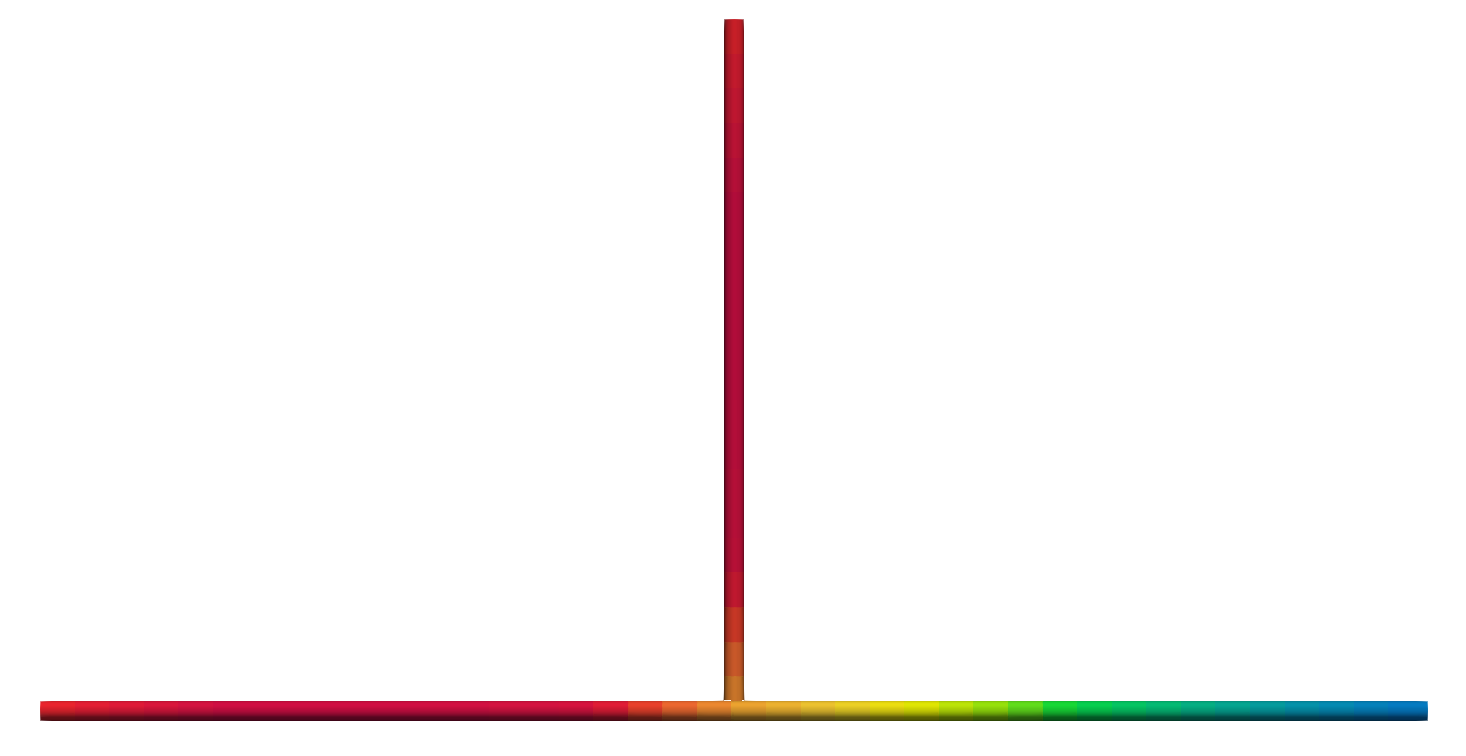}}
	\quad
	\subfloat[$\bar{c}_{\Gamma,h}$\label{figure:reduction1d:TC2:c:ch}]
	{\centering\includegraphics[width=.3\textwidth]{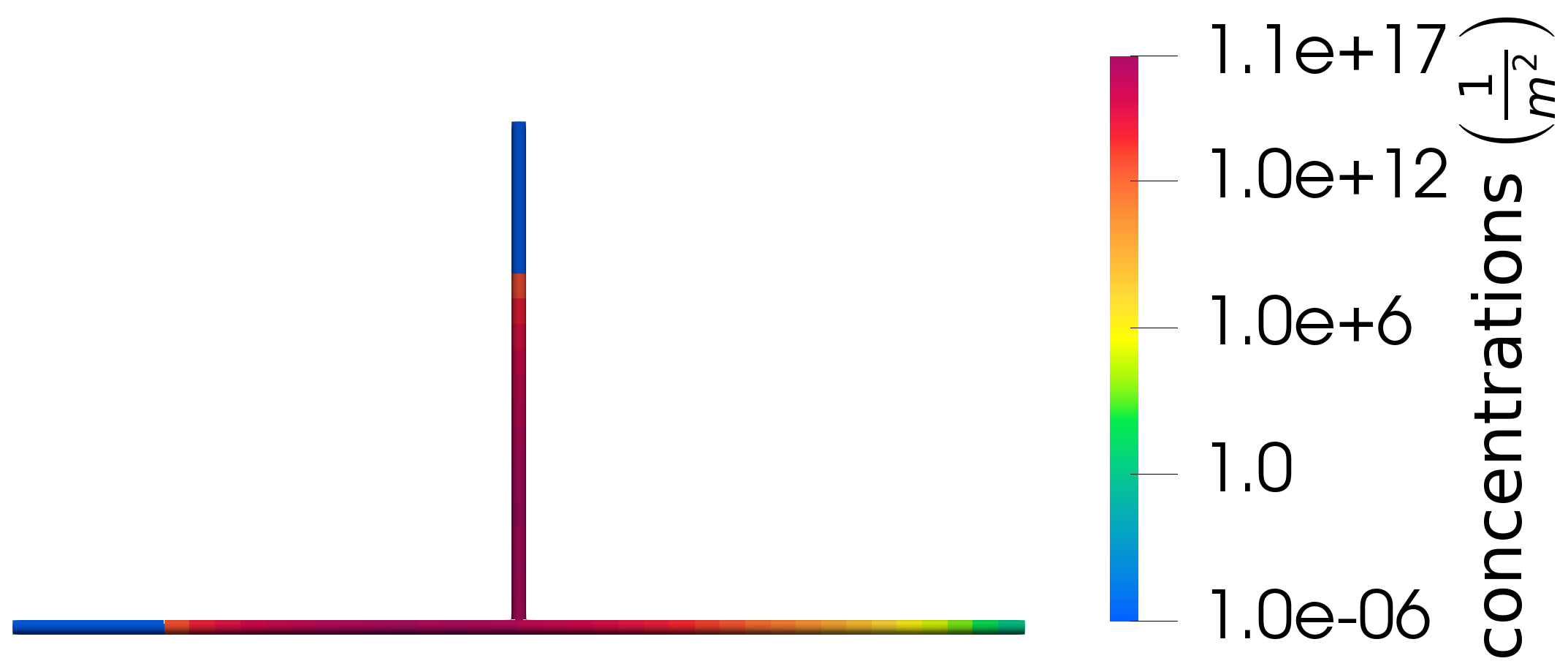}}
	\caption{\textbf{TC2 -} Final concentrations of charged particles at time $T=0.35 \ ns$.}
	\label{figure:reduction1d:TC2:c}
\end{figure}

\subsection{TC3: Electrical Treeing}
\label{section:reduction1d:TC3}
We now apply the reduced 1D model to simulate the movement of charged particles in a realistic geometry of an electrical treeing, under the effect of a constant electric field $E_\Lambda = 10^7 \frac{V}{m}$, directed along each branch of the graph. The treeing geometry was extracted as the skeleton of a 3D image detected via X--ray computed tomography~\cite{schurch2014imaging,schurch20193d,schurch2015comparison}, and it is made of 12544 segments. We set a Dirichlet boundary condition $\bar{c}_1 = 10^6 \frac{1}{m}$ for the volume electrons at the inflow, that is on the top node of the domain and simulate the phenomenon for $T=0.4\ ns$. We point out that this test case, although performed on the reduced geometry of an experimentally detected electrical treeing, is not completely physical: indeed, the influence of the increasing charge density on both the internal and external electric field should be taken into account by a coupling with the electrostatic problem. However, this coupling is beyond the scope of this work, where we only aim at simulating the evolution of charge densities under the effect of a constant electric field. We address the problem of coupling the movement of charges with the electrostatic model in~\cite{crippa2024mixed}.

In Figure~\ref{figure:reduction1d:TC3:c} we can observe that, under a constant longitudinal electric field, the concentration of charges is spread across the whole complex ramified treeing geometry. At each bifurcation node, the volume electron concentration is equally distributed among the outgoing edges, as discussed in Section~\ref{section:reduction1d:TC2} on a simple domain with a single bifurcation node. However, this effect is not clearly observable in Figure~\ref{figure:reduction1d:TC3:c} because of the dominant positive chemical reactions.\\
Because of the huge gap between the values of the ionization and attachment coefficients, we observe that the concentration of positive ions exceeds greatly that of the negative ions, both in the gas volume and on the dielectric surface. Moreover, the electron mobility is higher than the mobility of the positive ions, and it causes the concentration of positive ions to be higher than the volume concentration of electrons.
In more realistic applications, the increasing charge concentration would generate an intense radial electric field, that influences both the chemical reaction rates and the value of the potential in the gas. This process would end up with the decrease of the longitudinal electric field and a higher attachment coefficient, stopping the production of positive charges and promoting the production of negative ions.

\begin{figure}\centering
	\subfloat[$\bar{c}_1$\label{figure:reduction1d:TC3:c:c1}]
	{\centering\includegraphics[width=.23\textwidth]{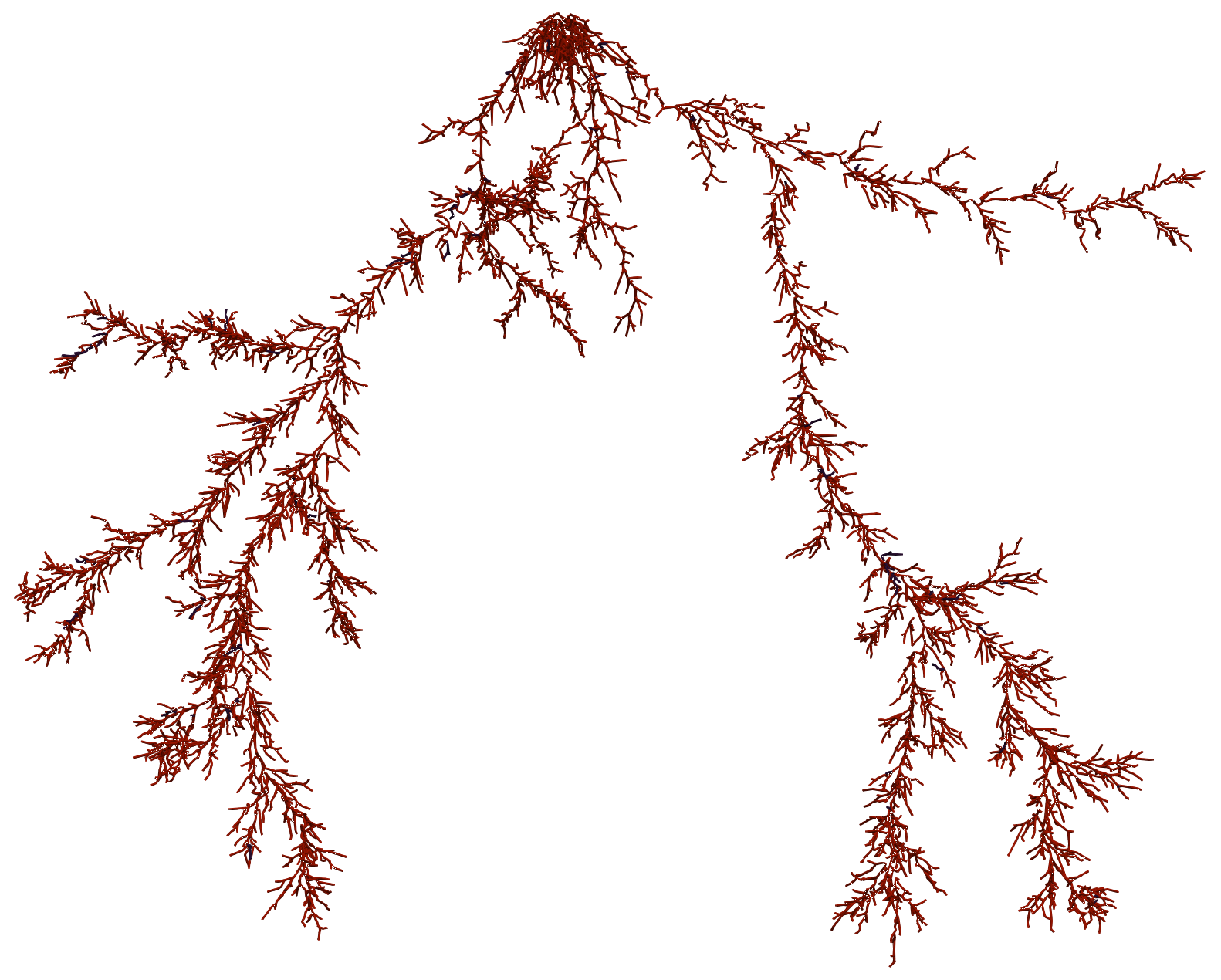}}
	\quad
	\subfloat[$\bar{c}_2$\label{figure:reduction1d:TC3:c:c2}]
	{\centering\includegraphics[width=.23\textwidth]{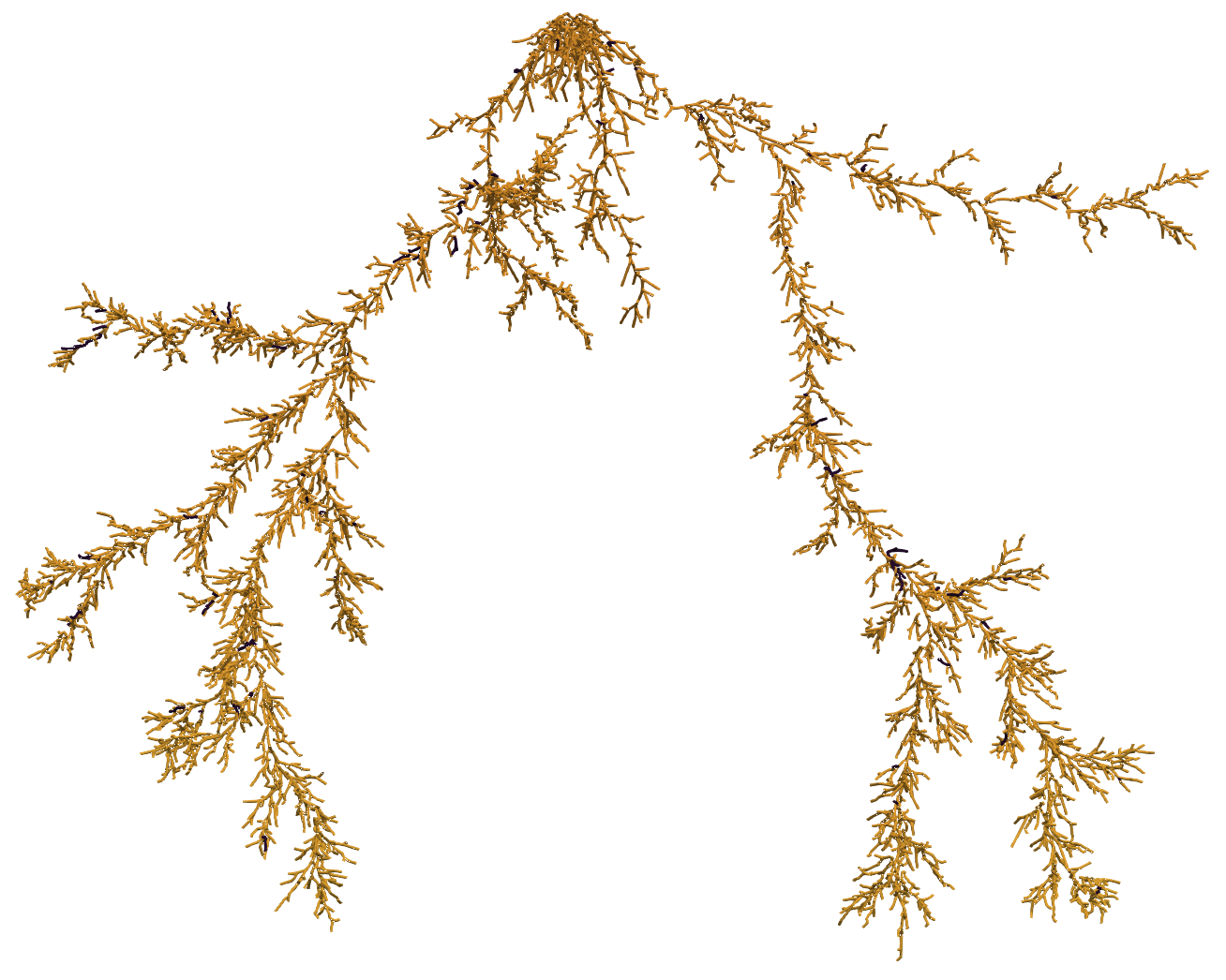}}
	\quad
	\subfloat[$\bar{c}_3$\label{figure:reduction1d:TC3:c:c3}]
	{\centering\includegraphics[width=.35\textwidth]{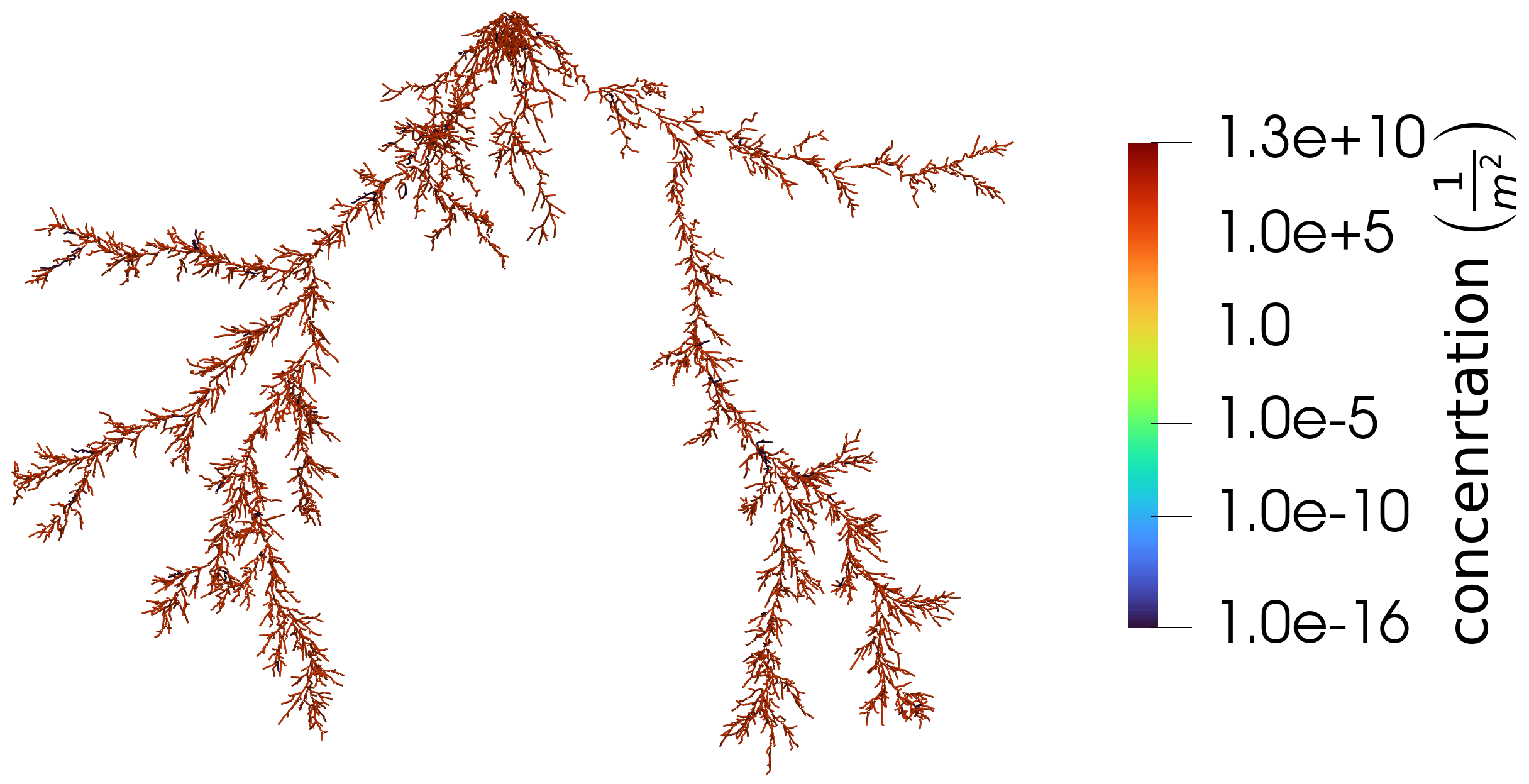}}\\
	\subfloat[$\bar{c}_{\Gamma,e}$\label{figure:reduction1d:TC3:c:ce}]
	{\centering\includegraphics[width=.23\textwidth]{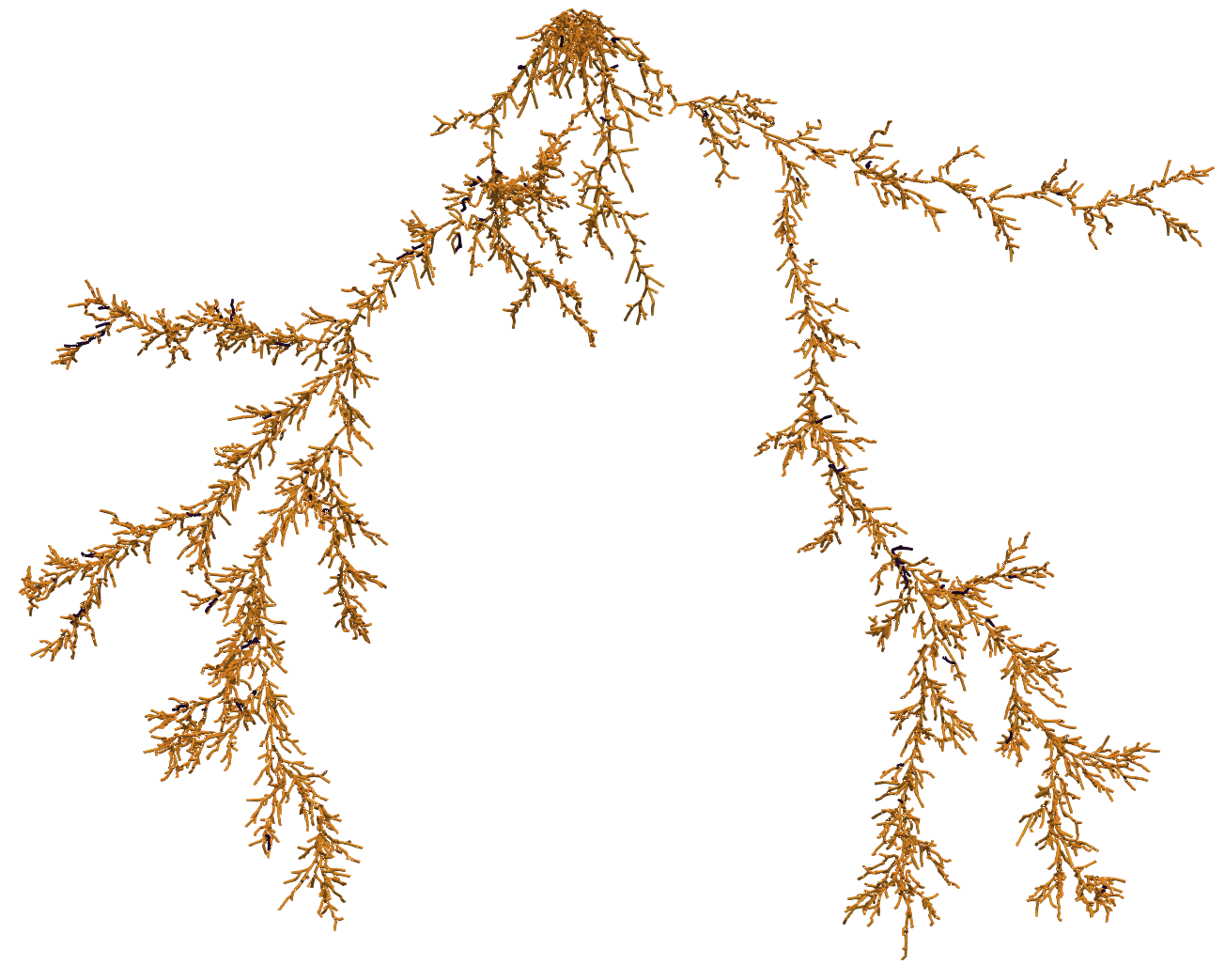}}
	\quad
	\subfloat[$\bar{c}_{\Gamma,h}$\label{figure:reduction1d:TC3:c:ch}]
	{\centering\includegraphics[width=.35\textwidth]{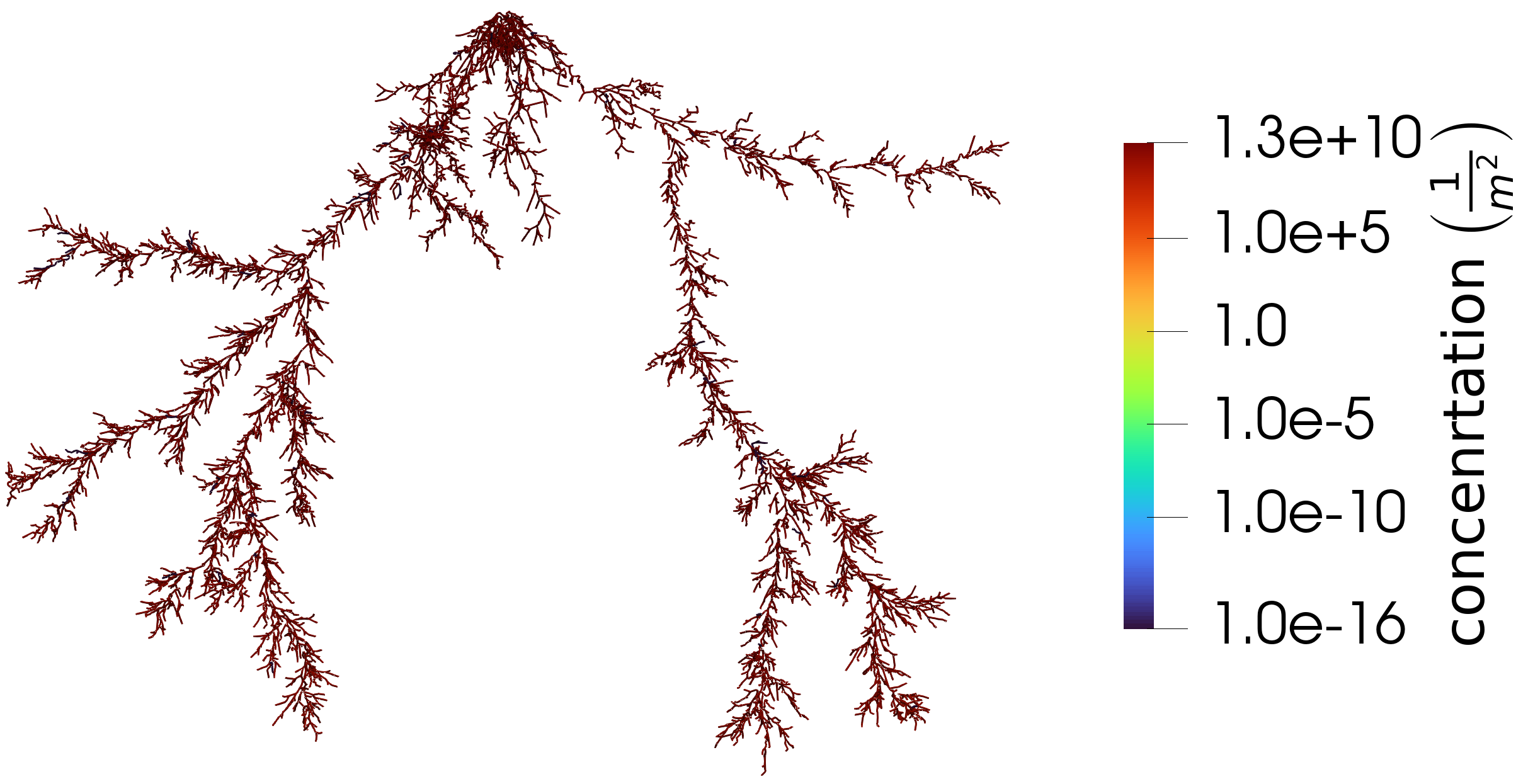}}
	\caption{\textbf{TC3 -} Final concentrations of charged particles at time $T=0.4 \ ns$.}
	\label{figure:reduction1d:TC3:c}
\end{figure}
	
	\section{Conclusions}
	\label{section:reduction1d:conclusions}
	We have reduced the advection--diffusion--reaction problem of charge concentrations in the electrical treeing to a system of equations defined on a 1D graph representing its skeleton. The resulting model takes into account the evolution of volume concentration of electrons and positive and negative ions, due to their movement under the effect of an electric field and the interactions among different charged species via chemical reactions. Moreover, we accounted for the exchange of electrons with the dielectric surface, also collapsed onto the 1D graph. This model complements the mixed--dimensional electrostatic problem derived in~\cite{crippa2024mixed}, possibly enabling in the future fast simulations of partial discharges in the electrical treeing.\\
	Moreover, we introduced an efficient evaluation of the transverse electric field, due to the volume and surface charge distributions, which overcomes the need to solve 2D Poisson problems on cross--sections and allows to discretize the model entirely on a 1D mesh. This representation of the transverse electric field introduces some non--linear reaction terms in the reduced equations, describing the flux of charges through the dielectric surface. However, we showed that the key physical properties of the original 3D problem, namely conservation of the total net charge and monotonicity of the solution are still satisfied by the reduced model.
	
	We assumed that the charge concentrations can be approximated as constants on cross--sections of the gas domain, which is in agreement with the hypothesis of the mixed--dimensional electrostatic model~\cite{crippa2024mixed}, but neglecting the distribution of charge has a strong impact on the produced electric field. To take this into account, we derived ODEs describing the evolution of two orthogonal components of the dipole moment.

    To ensure that the relevant physical properties of the problem are also satisfied by the discrete solutions, we adopted a consistent conservative Finite Volume scheme to solve the drift--diffusion equations and a monotone Patankar--Euler approach for time discretization. The considered time scheme involves an explicit treatment of nonlinear reaction terms, which makes it efficient.
    Improvements in the accuracy of the numerical solution can be gained by adopting higher--order schemes, both for the space discretization of the drift--diffusion equations, as pointed out in~\cite{crippa2024numericalmethods}, and for the time integration of the chemistry ODEs, employing for instance higher--order conservative Patankar schemes, as discussed in~\cite{formaggia2011positivity}. Then, the introduction of higher--order splitting schemes would also improve the approximation capability of the method.

    In this work we have fixed the longitudinal electric field in the gas to a known quantity, discarding the effect that the evolution of charge concentrations in the defect has on the electrostatic problem. This only enables the study of the initial stages of the discharges. However, the proposed discretization is compatible with the numerical approach proposed in~\cite{crippa2024mixed} for the 3D--1D electrostatic problem, that complements the description of the electrical treeing phenomenon. For this coupling we will employ more advanced time discretization schemes, as the one presented in~\cite{villa2017efficient}.
    The present work provides a rigorous derivation of a reduced plasma model for the electrical treeing, allowing efficient simulations on complex branched geometries that would otherwise difficult to mesh in 3D. A future coupling with the 3D--1D electrostatic problem and with more complex chemical models~\cite{villa2017simulation} can lead to fast simulations of partial discharges in the electrical treeing.
    
    \section{Acknowledgments}
    This work has been financed by the Research Found for the Italian Electrical System under the Contract Agreement between RSE and the Ministry of Economic Development. 
    B.C. acknowledges the support by the Swedish Research Council under grant no. 2021-06594, during the stay at Institut Mittag-Leffler in Djursholm, Sweden, in 2025.
    The mesh of the treeing structure is courtesy of Prof. R. Schurch, Universidad Técnica Federico Santa Marìa Valparaìso, Chile.
  
	\newpage
	\printbibliography[
	heading=bibintoc,
	title={References}
	]

@article{VILLA2017687,
	title = {A {PDE}-based partial discharge simulator},
	author = {Andrea Villa and Luca Barbieri and Marco Gondola and Andres R. Leon-Garzon and Roberto Malgesini},
	journal = {Journal of Computational Physics},
	volume = {345},
	pages = {687-705},
	year = {2017}
}

@article{villa2017implicit,
	title={An implicit three-dimensional fractional step method for the simulation of the corona phenomenon},
	author={Villa, Andrea and Barbieri, Luca and Gondola, Marco and Leon-Garzon, Andres R and Malgesini, Roberto},
	journal={Applied Mathematics and Computation},
	volume={311},
	pages={85--99},
	year={2017},
	publisher={Elsevier}
}

@inproceedings{bahadoorsingh2007role,
	title={The role of power quality in electrical treeing of epoxy resin},
	author={Bahadoorsingh, S and Rowland, SM},
	booktitle={2007 Annual Report-Conference on Electrical Insulation and Dielectric Phenomena},
	pages={221--224},
	year={2007},
	organization={IEEE}
}

@article{buccella2023computational,
	title={A computational modelling of carbon layer formation on treeing branches},
	author={Buccella, Giacomo and Villa, Andrea and Ceresoli, Davide and Schurch, Roger and Barbieri, Luca and Malgesini, Roberto and Palladini, Daniele},
	journal={Modelling and Simulation in Materials Science and Engineering},
	volume={31},
	number={3},
	pages={035001},
	year={2023},
	publisher={IOP Publishing}
}

@article{schurch2014imaging,
	title={Imaging and analysis techniques for electrical trees using X-ray computed tomography},
	author={Schurch, Roger and Rowland, Simon M and Bradley, Robert S and Withers, Philip J},
	journal={IEEE Transactions on Dielectrics and Electrical Insulation},
	volume={21},
	number={1},
	pages={53--63},
	year={2014},
	publisher={IEEE}
}

@article{schurch20193d,
	title={3D characterization of electrical tree structures},
	author={Schurch, Roger and Ardila-Rey, Jorge and Montana, Johny and Angulo, Alejandro and Rowland, Simon M and Iddrissu, Ibrahim and Bradley, Robert S},
	journal={IEEE Transactions on Dielectrics and Electrical Insulation},
	volume={26},
	number={1},
	pages={220--228},
	year={2019},
	publisher={IEEE}
}

@inproceedings{callender2019plasma,
	title={Plasma dynamic simulations of partial discharges within electrical tree structures},
	author={Callender, George and Lewin, Paul L},
	booktitle={2019 IEEE Electrical Insulation Conference (EIC)},
	pages={340--343},
	year={2019},
	organization={IEEE}
}

@inproceedings{villa2022towards,
	title={Towards the plasma-polymer simulation in treeing branches},
	author={Villa, A and Schurch, R and Barbieri, L and Buccella, G and Malgesini, R and Palladini, D},
	booktitle={2022 IEEE 4th International Conference on Dielectrics (ICD)},
	pages={171--174},
	year={2022},
	organization={IEEE}
}

@article{villa2017simulation,
	title={Simulation of the AC corona phenomenon with experimental validation},
	author={Villa, Andrea and Barbieri, Luca and Marco, Gondola and Malgesini, Roberto and Leon-Garzon, Andres R},
	journal={Journal of Physics D: Applied Physics},
	volume={50},
	number={43},
	pages={435201},
	year={2017},
	publisher={IOP Publishing}
}

@article{crippa2024mixed,
  title={A mixed-dimensional model for the electrostatic problem on coupled domains},
  author={Crippa, Beatrice and Scotti, Anna and Villa, Andrea},
  journal={Journal of Computational Physics},
  pages={114015},
  year={2025},
  publisher={Elsevier}
}

@article{ laurino2019derivation,
	author = {Federica Laurino and Paolo Zunino},
	title = {Derivation and analysis of coupled PDEs on manifolds with high dimensionality gap arising from topological model reduction},
	journal = {ESAIM: M2AN},
	year = 2019,
	volume = 53,
	number = 6,
	pages = "2047-2080",
}

@article{crippa2024numericalmethods,
  title={A monotone finite volume scheme for linear drift-diffusion and pure drift equations on one-dimensional graphs},
  author={Crippa, Beatrice and Scotti, Anna and Villa, Andrea},
  journal={Networks and Heterogeneous Media},
  volume={20},
  number={2},
  pages={670--700},
  year={2025}
}

@article{burchard2003high,
  title={A high-order conservative Patankar-type discretisation for stiff systems of production--destruction equations},
  author={Burchard, Hans and Deleersnijder, Eric and Meister, Andreas},
  journal={Applied Numerical Mathematics},
  volume={47},
  number={1},
  pages={1--30},
  year={2003},
  publisher={Elsevier}
}

@book{patankar2018numerical,
  title={Numerical heat transfer and fluid flow},
  author={Patankar, Suhas},
  year={2018},
  publisher={CRC press}
}

@article{morrow1997streamer,
  title={Streamer propagation in air},
  author={Morrow, R and Lowke, JJ},
  journal={Journal of Physics D: Applied Physics},
  volume={30},
  number={4},
  pages={614},
  year={1997},
  publisher={IOP Publishing}
}

@book{evans2022partial,
  title={Partial differential equations},
  author={Evans, Lawrence C},
  volume={19},
  year={2022},
  publisher={American Mathematical Society}
}

@article{masri2024modelling,
  title={The modelling error in multi-dimensional time-dependent solute transport models},
  author={Masri, Rami and Zeinhofer, Marius and Kuchta, Miroslav and Rognes, Marie E},
  journal={ESAIM: Mathematical Modelling and Numerical Analysis},
  volume={58},
  number={5},
  pages={1681--1724},
  year={2024},
  publisher={EDP Sciences}
}

@book{lie1893theorie,
  title={Theorie der transformationsgruppen},
  author={Lie, Sophus and Engel, Friedrich},
  volume={3},
  year={1893},
  publisher={Teubner}
}

@book{glowinski2017splitting,
  title={Splitting methods in communication, imaging, science, and engineering},
  author={Glowinski, Roland and Osher, Stanley J and Yin, Wotao},
  year={2017},
  publisher={Springer}
}

@book{pao2012nonlienar,
  title={Nonlinear parabolic and elliptic equations},
  author={Pao, Chia-Ven},
  year={2012},
  publisher={Springer Science \& Business Media}
}

@article{tran2010numerical,
  title={Numerical modelling of negative discharges in air with experimental validation},
  author={Tran, TN and Golosnoy, IO and Lewin, PL and Georghiou, George E},
  journal={Journal of Physics D: Applied Physics},
  volume={44},
  number={1},
  pages={015203},
  year={2010},
  publisher={IOP Publishing}
}

@article{georghiou2005numerical,
  title={Numerical modelling of atmospheric pressure gas discharges leading to plasma production},
  author={Georghiou, George E and Papadakis, AP and Morrow, R and Metaxas, AC},
  journal={Journal of Physics D: Applied Physics},
  volume={38},
  number={20},
  pages={R303},
  year={2005},
  publisher={IOP Publishing}
}

@article{morrow1985theory,
  title={Theory of negative corona in oxygen},
  author={Morrow, R},
  journal={Physical Review A},
  volume={32},
  number={3},
  pages={1799},
  year={1985},
  publisher={APS}
}

@article{komuro2013behaviour,
  title={Behaviour of OH radicals in an atmospheric-pressure streamer discharge studied by two-dimensional numerical simulation},
  author={Komuro, Atsushi and Ono, Ryo and Oda, Tetsuji},
  journal={Journal of Physics D: Applied Physics},
  volume={46},
  number={17},
  pages={175206},
  year={2013},
  publisher={IOP Publishing}
}

@article{formaggia2011positivity,
  title={Positivity and conservation properties of some integration schemes for mass action kinetics},
  author={Formaggia, Luca and Scotti, Anna},
  journal={SIAM journal on numerical analysis},
  volume={49},
  number={3},
  pages={1267--1288},
  year={2011},
  publisher={SIAM}
}

@article{kang2003numerical,
  title={Numerical study on influences of barrier arrangements on dielectric barrier discharge characteristics},
  author={Kang, Woo Seok and Park, Jin Myung and Kim, Yongho and Hong, Sang Hee},
  journal={IEEE Transactions on Plasma Science},
  volume={31},
  number={4},
  pages={504--510},
  year={2003},
  publisher={IEEE}
}

@article{nikonov2001surface,
  title={Surface charge and photoionization effects in short air gaps undergoingdischarges at atmospheric pressure},
  author={Nikonov, V and Bartnikas, R and Wertheimer, Michael R},
  journal={Journal of physics D: applied physics},
  volume={34},
  number={19},
  pages={2979},
  year={2001},
  publisher={IOP Publishing}
}

@article{steinle1999two,
  title={Two-dimensional simulation of filaments in barrier discharges},
  author={Steinle, Gunther and Neundorf, Doerte and Hiller, Wolfgang and Pietralla, Martin},
  journal={Journal of Physics D: Applied Physics},
  volume={32},
  number={12},
  pages={1350},
  year={1999},
  publisher={IOP Publishing}
}

@article{rauf1999dynamics,
  title={Dynamics of a coplanar-electrode plasma display panel cell. I. Basic operation},
  author={Rauf, Shahid and Kushner, Mark J},
  journal={Journal of applied physics},
  volume={85},
  number={7},
  pages={3460--3469},
  year={1999},
  publisher={AIP Publishing}
}

@article{meek1954electrical,
    title = {Electrical breakdown of gases},
     author={Meek, J. M. and Craggs, J. D.},
    journal = {Quarterly Journal of the Royal Meteorological Society},
    volume = {80},
    number = {344},
    pages = {282-283},
    year = {1954},
    publisher={Oxford (Clarendon Press)}
}

@article{schurch2015comparison,
	title={Comparison and combination of imaging techniques for three dimensional analysis of electrical trees},
	author={Schurch, Roger and Rowland, Simon M and Bradley, Robert S and Withers, Philip J},
	journal={IEEE Transactions on Dielectrics and Electrical Insulation},
	volume={22},
	number={2},
	pages={709--719},
	year={2015},
	publisher={IEEE}
}

@article{schottky1923kalte,
	title={{\"U}ber kalte und warme Elektronenentladungen},
	author={Schottky, W},
	journal={Zeitschrift f{\"u}r Physik},
	volume={14},
	number={1},
	pages={63--106},
	year={1923},
	publisher={Springer}
}

@article{hagstrum1954theory,
	title={Theory of Auger ejection of electrons from metals by ions},
	author={Hagstrum, Homer D},
	journal={Physical Review},
	volume={96},
	number={2},
	pages={336},
	year={1954},
	publisher={APS}
}

@article{wilson2018investigation,
	title={Investigation of surface boundary conditions for continuum modeling of RF plasmas},
	author={Wilson, Althea and Shotorban, Babak},
	journal={Physics of Plasmas},
	volume={25},
	number={5},
	year={2018},
	publisher={AIP Publishing}
}

@article{boeuf1995two,
	title={Two-dimensional model of a capacitively coupled rf discharge and comparisons with experiments in the Gaseous Electronics Conference reference reactor},
	author={Boeuf, JP and Pitchford, LC},
	journal={Physical Review E},
	volume={51},
	number={2},
	pages={1376},
	year={1995},
	publisher={APS}
}

@article{lesinigo2011multiscale,
	title={A multiscale Darcy--Brinkman model for fluid flow in fractured porous media},
	author={Lesinigo, Matteo and D'Angelo, Carlo and Quarteroni, Alfio},
	journal={Numerische Mathematik},
	volume={117},
	pages={717--752},
	year={2011},
	publisher={Springer}
}

@article{koch2022projection,
	title={Projection-based resolved interface 1D-3D mixed-dimension method for embedded tubular network systems},
	author={Koch, Timo},
	journal={Computers \& Mathematics with Applications},
	volume={109},
	pages={15--29},
	year={2022},
	publisher={Elsevier}
}

@article{koppl20203d,
	title={A 3D-1D coupled blood flow and oxygen transport model to generate microvascular networks},
	author={K{\"o}ppl, Tobias and Vidotto, Ettore and Wohlmuth, Barbara},
	journal={International journal for numerical methods in biomedical engineering},
	volume={36},
	number={10},
	pages={e3386},
	year={2020},
	publisher={Wiley Online Library}
}

@article{villa2017efficient,
  title={An efficient algorithm for corona simulation with complex chemical models},
  author={Villa, Andrea and Barbieri, Luca and Gondola, Marco and Leon-Garzon, Andres R and Malgesini, Roberto},
  journal={Journal of Computational Physics},
  volume={337},
  pages={233--251},
  year={2017},
  publisher={Elsevier}
}
	
	\newpage
	\appendix
    \section{Exact solutions of superimposition of effects problems}
    \label{appendix:exact_sol_superimposition_of_effects}
    In this section we report the detailed computations of the exact solutions of the systems considered in the superimposition of effects for the electric field in Section~\ref{section:reduction1d:superimposition_of_effects}.
    
    \subsection{Effect of the external electric field}
    \label{appendix:exact_sol_suprimposition_of_effects:Eext}
    We start by the effect of the $x$ component of the electric field~\eqref{eq:reduction1d:superimposition:x}, where we substitute the vector unknown $\bm{\Psi}^1$ with the opposite of the gradient of the corresponding potential $\tilde{\Phi}$, as in equation~\eqref{eq:reduction1d:electric_field:3d:3:adim}:
    
    \begin{equation}
    	\begin{cases}
    		\tilde{\Delta}_{\tilde{\mathcal{D}}} \tilde{\Phi}_g = 0, & \text{in}\ \tilde{\mathcal{D}}, \\
    		\tilde{\Delta}_{\tilde{\mathcal{D}}} \left(\epsilon_s \tilde{\Phi}_s \right) = 0, & \text{in}\ \tilde{\mathcal{D}}_s, \\
    		\tilde{\nabla}_{\tilde{\mathcal{D}}} \tilde{\Phi}_g\cdot\mathbf{n}_g + 
    		\epsilon_s \tilde{\nabla}_{\tilde{\mathcal{D}}} \tilde{\Phi}_s \cdot\mathbf{n}_s = 0, & \text{on}\ \partial\tilde{\mathcal{D}}, \\
    		-\tilde{\nabla}_{\tilde{\mathcal{D}}} \tilde{\Phi}_s \cdot \mathbf{n} = 
    		\cos\tilde{\theta}, & \text{on}\ \partial(\tilde{\mathcal{D}}\cup\tilde{\mathcal{D}}_s).
    	\end{cases}
    	\label{eq:reduction1d:superimposition:Eext:primal}
    \end{equation}
    
    \noindent We can explicitly rewrite problem~\eqref{eq:reduction1d:superimposition:Eext:primal} in cylindrical coordinates as follows:
    
    \begin{subnumcases}{}
    	-\dfrac{1}{\tilde{r}}\dfrac{\partial}{\partial \tilde{r}}\left( \tilde{r} \dfrac{\partial\tilde{\Phi}_g}{\partial \tilde{r}} \right) + \dfrac{1}{\tilde{r}^2}\dfrac{\partial^2\tilde{\Phi}_g}{\partial \tilde{\theta}^2} = 0, & $\text{in}\ \tilde{\mathcal{D}},$ 
    	\label{eq:reduction1d:superimposition:Eext:cylindrical:1}\\
    	-\dfrac{1}{\tilde{r}}\dfrac{\partial}{\partial \tilde{r}}\left( \tilde{r} \dfrac{\partial\tilde{\Phi}_s}{\partial \tilde{r}} \right) + \dfrac{1}{\tilde{r}^2}\dfrac{\partial^2\tilde{\Phi}_s}{\partial \tilde{\theta}^2} = 0, & $\text{in}\ \tilde{\mathcal{D}}_s,$ 
    	\label{eq:reduction1d:superimposition:Eext:cylindrical:2}\\
    	\dfrac{\partial \tilde{\Phi}_g}{\partial \tilde{r}} = \epsilon_s\dfrac{\partial \tilde{\Phi}_s}{\partial \tilde{r}}, & $\text{on}\ \partial\tilde{\mathcal{D}}, $
    	\label{eq:reduction1d:superimposition:Eext:cylindrical:3}\\
    	\dfrac{\partial\tilde{\Phi}_s}{\partial \tilde{r}} = -\cos\tilde{\theta}, & $\text{on}\ \partial(\tilde{\mathcal{D}}\cup\tilde{\mathcal{D}}_s).$
    	\label{eq:reduction1d:superimposition:Eext:cylindrical:4}
    \end{subnumcases}
    
    \noindent We can solve equations~\eqref{eq:reduction1d:superimposition:Eext:cylindrical:1} and~\eqref{eq:reduction1d:superimposition:Eext:cylindrical:2} by separation of variables: assuming that $\tilde{\Phi}(\tilde{r},\tilde{\theta}) = Q(\tilde{r})P(\tilde{\theta})$, we can find $k\in\mathbb{Q}$ such that
    
    \begin{equation*}
    	\dfrac{1}{Q}\tilde{r}\dfrac{d}{d\tilde{r}}\left( \tilde{r} \dfrac{d Q}{d \tilde{r}} \right) = - \dfrac{1}{P}\dfrac{d^2 P}{d \tilde{\theta}^2} = k.
    \end{equation*}
    
    \noindent We require both $Q$ and $P$ to be bounded functions, and that $P$ is periodic of period $2\pi$. Separating the two equations and applying the change of variables $\tilde{\rho} = \log(\tilde{r})$, we can rewrite the problem as:
    
    \begin{equation*}
    	\begin{dcases}
    		-\dfrac{d^2 Q^{(k)}}{d \tilde{\rho}^2} = kQ^{(k)}, \\
    		- \dfrac{d^2 P^{(k)}}{d \tilde{\theta}^2} = kP^{(k)}.
    	\end{dcases}
    \end{equation*}
    
    \noindent In order for $P^{(k)}$ to be $2\pi$-periodic, we need to consider $k\in\mathbb{N}$. Then the solutions to these equations are $Q^{(k)}(\tilde{r}) = \alpha_1^{(k)} \tilde{r}^k + \alpha_2^{(k)} \tilde{r}^{-k}$ and $P^{(k)}(\tilde{\theta}) = \beta_1^{(k)} \cos(k\tilde{\theta}) + \beta_2^{(k)} \sin(k\tilde{\theta})$. Moreover, we have a bounded potential for $\tilde{r}\to 0$ only if we set $\alpha_2^{(k)}=0\ \forall k$. The candidate solutions corresponding to each coefficient $k\in\mathbb{N}$ are therefore given by:
    
    \begin{align*}
    	\tilde{\Phi}_s^{(k)}(\tilde{r},\tilde{\theta}) &= 
    	Q_s^{(k)}(\tilde{r}) P_s^{(k)}(\tilde{\theta}) =
    	\tilde{r}^k \left( c_{1,s}^{(k)} \cos(k\tilde{\theta}) + c_{2,s}^{(k)} \sin(k\tilde{\theta}) \right),\\
    	\tilde{\Phi}_g^{(k)}(\tilde{r},\tilde{\theta}) &= 
    	Q_g^{(k)}(\tilde{r}) P_g^{(k)}(\tilde{\theta}) =
    	\tilde{r}^k \left( c_{1,g}^{(k)} \cos(k\tilde{\theta}) + c_{2,g}^{(k)} \sin(k\tilde{\theta}) \right).
    \end{align*}
    
    
    \noindent We compute the values of the constants $c_{i,s}^{(k)}$, $i=1,2$, by applying the boundary condition~\eqref{eq:reduction1d:superimposition:Eext:cylindrical:4}:
    
    \begin{equation*}
    	-\cos(\tilde{\theta}) = \dfrac{\partial\tilde{\Phi}_s^{(k)}}{\partial \tilde{r}}\big|_{\tilde{r} = \tilde{R}_s} =
    	k \tilde{R}_s^{k-1} \left( c_{1,s}^{(k)} \cos(k\tilde{\theta}) + c_{2,s}^{(k)} \sin(k\tilde{\theta}) \right).
    \end{equation*}
    
    \noindent This equation is only satisfied for $k=1$, with coefficients
    $c_{1,s}^{(1)} = -1$ and $c_{2,s}^{(1)} = 0$. Therefore, $\tilde{\Phi}_s(\tilde{r},\tilde{\theta}) = -\tilde{r}\cos(\tilde{\theta})$. Similarly, we compute $c_{i,g}^{(k)}$, $i=1,2$, by applying the interface condition~\eqref{eq:reduction1d:superimposition:Eext:cylindrical:3}:
    
    \begin{equation*}
    	-\epsilon_s\cos(\tilde{\theta}) =
    	\epsilon_s \dfrac{\partial\tilde{\Phi}_s}{\partial \tilde{r}}\big|_{\tilde{r} = 1}
    	= \dfrac{\partial\tilde{\Phi}_g^{(k)}}{\partial \tilde{r}}\big|_{\tilde{r} = 1}
    	= k \left( c_{1,g}^{(k)} \cos(k\tilde{\theta}) + c_{2,g}^{(k)} \sin(k\tilde{\theta}) \right),
    \end{equation*}
    
    \noindent which holds only if $k=1$, with coefficients $c_{1,g}^{(1)} = -\epsilon_s$ and $c_{2,g}^{(1)} = 0$.
    Thus, $\tilde{\Phi}_g(\tilde{r},\tilde{\theta}) = -\epsilon_s \tilde{r}\cos(\tilde{\theta})$. The potential in cartesian coordinates is given by:
    
    \begin{align*}
    	\tilde{\Phi}_g(\tilde{x},\tilde{y},\tilde{z}) = -\tilde{x}\epsilon_s, & \quad\text{in}\ \Omega_g, \\
    	\tilde{\Phi}_s(\tilde{x},\tilde{y},\tilde{z}) = -\tilde{x}, & \quad\text{in}\ \Omega_s,
    \end{align*}
    
    \noindent and the vector solution $\bm{\Psi}$ by the opposite of its gradient:
    
    \begin{align*}
    	\bm{\Psi}_g^1 = \epsilon_s\mathbf{x}, &\quad\text{in} \ \Omega_g \\
    	\bm{\Psi}_s^1 = \mathbf{x}, &\quad\text{in}\ \Omega_s.
    \end{align*}
    
    \noindent An analogous reasoning can be applied to the solution to problem~\eqref{eq:reduction1d:superimposition:y}, describing the effect of the $y$ component of the electric field, with exact solution
    
    \begin{align*}
    	\bm{\Psi}_g^2 = \epsilon_s\mathbf{y}, &\quad\text{in} \ \Omega_g \\
    	\bm{\Psi}_s^2 = \mathbf{y}, &\quad\text{in}\ \Omega_s.
    \end{align*}
    
    \subsection{Effect of the volume charge concentration}
    \label{appendix:exact_sol_superimpostion_of_effects:q}
    For the solution of problem~\eqref{eq:reduction1d:superimposition:q} we apply the divergence theorem to equations~\eqref{eq:reduction1d:electric_field:3d:1:adim} and~\eqref{eq:reduction1d:electric_field:3d:2:adim}. First, integrate equation~\eqref{eq:reduction1d:electric_field:3d:1:adim} over a circle $\mathcal{W}_g\subset\tilde{\mathcal{D}}$, concentric to $\tilde{\mathcal{D}}$, with radius $\rho\leq1$. Thanks to the divergence theorem, we obtain:
    
    \begin{equation}
    	\int_{\partial\mathcal{W}_g} \bm{\Psi}_g^5\cdot\mathbf{n} = \int_{\mathcal{W}_g} \tilde{\nabla}_{\tilde{\mathcal{D}}} \cdot\bm{\Psi}_g^5 = \int_{\mathcal{W}_g} 1 = |\mathcal{W}_g|,
    	\label{eq:reduction1d:superimposition:q:divergence}
    \end{equation}
    
    \noindent where $|\mathcal{W}_g| = \pi \rho^2$ denotes the area of $\mathcal{W}_g$. Because of the symmetry of the domain and the homogeneous charge distribution in the section, the field represented by $\bm{\Psi}_g^5$ is radial and uniform, i.e. $\bm{\Psi}_g^5 = |\bm{\Psi}_g^5|(\tilde{r})\mathbf{n}$, $\tilde{r}\in[0,1]$. As a consequence:
    
    \begin{equation*}
    	\int_{\partial\mathcal{W}_g} \bm{\Psi}_g^5\cdot\mathbf{n} = 2\pi \rho |\bm{\Psi}_g^5|(\rho).
    \end{equation*}
    
    \noindent Substituting in equation~\eqref{eq:reduction1d:superimposition:q:divergence}, we obtain:
    
    \begin{equation*}
    	\bm{\Psi}_g^5 = \dfrac{\tilde{r}}{2}\mathbf{n},\ \tilde{r}\in[0,1].
    \end{equation*}
    
    \noindent Integrate now on a circle $\mathcal{W}_s$ such that $\tilde{\mathcal{D}}\subset\mathcal{W}_s\subset\tilde{\mathcal{D}}_s$, concentric to $\tilde{\mathcal{D}}$, with different radius $\rho\in[1,\tilde{R}_s]$. Here the total charge concentration is 0 outside $\tilde{\mathcal{D}}$ and the field $\bm{\Psi}_s^5$ is radial and uniform in $\tilde{\mathcal{D}}_s$, due to the homogeneous charge distribution. With the same passages as before, applying the divergence theorem on $\mathcal{W}_s\setminus\tilde{\mathcal{D}}$, we obtain:
    
    \begin{equation*}
    	\bm{\Psi}_s^5 = \dfrac{1}{2\tilde{r}\epsilon_s}\mathbf{n},\ \tilde{r}\geq 1.
    \end{equation*}
    
    \subsection{Effect of the mean surface charge concentration}
    \label{appendix:exact_sol_superimposition_of_effects:q0}
    According to Gauss theorem of electrostatics, the flux of the electric field $\bm{\Psi}^g_s$ through a closed line is equal to the total charge in the area enclosed by the line. Thus,
    
    \begin{equation*}
    	\int_{\partial\mathcal{W}_s} \epsilon_s\bm{\Psi}_s^6\cdot\mathbf{n} = \int_{\mathcal{W}_s} \tilde{\nabla}_{\tilde{\mathcal{D}}}\cdot\bm{\Psi}_s^6 = \int_{\partial\tilde{\mathcal{D}}} 1 = |\partial\tilde{\mathcal{D}}|,
    \end{equation*}
    
    \noindent where $|\partial\tilde{\mathcal{D}}| = 2\pi$ denotes the perimeter of the section $\tilde{\mathcal{D}}$. Moreover, such field is radial and uniform, since it is generated by a constant charge distribution on a circumference, i.e. $\bm{\Psi}_s^6 = |\bm{\Psi}_s^6|(\tilde{r}) \mathbf{n}$, $\tilde{r}\geq 1$. Thus, the exact solution to problem~\eqref{eq:reduction1d:superimposition:q0} in $\Omega_s$ is:
    
    \begin{equation*}
    	\bm{\Psi}_s^6 = \dfrac{1}{\tilde{r}\epsilon_s}\mathbf{n}, \ \tilde{r}\geq 1.
    \end{equation*}
    
    \noindent On the other hand, the total charge inside any circle $\mathcal{W}_g\subset\tilde{\mathcal{D}}$ is 0, producing a null electric field $\bm{\Psi}_g^6 = 0$, in $\tilde{\mathcal{D}}$.
    
    \section{Properties of the coefficients F and G}
    \label{appendix:FG}
    In this section we compute the values of the coefficients $F_p^\pm$ and $G_{p,i}^\pm$, for $p=1,\,2,\,3$, $i=1,\,2$, defined in equations~\eqref{eq:reduction1d:def:F} and~\eqref{eq:reduction1d:def:G}, given the transverse electric field $\mathbf{E}_g$ computed in Section~\ref{section:reduction1d:superimposition_of_effects}, and discuss some of their properties.
    
    Substitute the expression of the electric field in the gas obtained by superimposition of effects in equation~\eqref{eq:reduction1d:superimposition:fields:g} into equations~\eqref{eq:reduction1d:def:F} and~\eqref{eq:reduction1d:def:G}, and recall that $\bm{\Psi}_g^1 = \epsilon_s \mathbf{x}$, $\bm{\Psi}_g^2 = \epsilon_s \mathbf{y}$, while $\bm{\Psi}_g^5$ is radial and has magnitude $|\bm{\Psi}_g^5|(R_g) = \dfrac{1}{2}$ on the boundary of the cross-sections $\mathcal{D}$ of $\Omega_g$. Then the coefficients can be rewritten as follows:
    
    \begin{alignat}{4}
    	F_p^+ &= \int_{\partial\mathcal{D}} \max\{0,\omega_p(\mathbf{v}\cdot\mathbf{n} + \rho)\}, \quad
    	&F_p^- &= \int_{\partial\mathcal{D}} \min\{0,\omega_p(\mathbf{v}\cdot\mathbf{n} + \rho)\},
    	\label{eq:reduction1d:appendix:F:rho}\\
    	G_{p,i}^+ &= \int_{\partial\mathcal{D}}  \max\left\{ 0, \omega_p(\mathbf{v} \cdot \mathbf{n} + \rho) \right\} \phi_i, \quad
    	&G_{p,i}^- &= \int_{\partial\mathcal{D}} \min\left\{ 0, \omega_p (\mathbf{v} \cdot \mathbf{n} + \rho) \right\} \phi_i,
    	\label{eq:reduction1d:appendix:G:rho}
    \end{alignat}
    
    \noindent where we have defined $\mathbf{v} = a_1\epsilon_s\mathbf{x} + a_2\epsilon_s\mathbf{y}$ and $\rho = \dfrac{a_5}{2}$, while $\mathbf{n}$ denotes the outgoing unit normal vector to $\partial\mathcal{D}$.\\
    We denote by $\partial\mathcal{D}^+$ the subregion of $\partial\mathcal{D}$ where $\mathbf{v}\cdot\mathbf{n}+\rho > 0$ and by $\partial\mathcal{D}^-$ the subregion of $\partial\mathcal{D}$ where $\mathbf{v}\cdot\mathbf{n}+\rho < 0$. In particular, when $\omega_p=1$, i.e. $p=3$, the function $\max\{0,\omega_p(\mathbf{v}\cdot\mathbf{n} + \rho)\}$ is non-zero only on $\partial\mathcal{D}^+$ and $\min\{0,\omega_p(\mathbf{v}\cdot\mathbf{n} + \rho)\}$ is non-zero only on $\partial\mathcal{D}^-$, while for $\omega_p=-1$, i.e. $p=1,\,2$, the two regions are switched. Figure~\ref{figure:appendix:rho_0} represents the region where $\mathbf{v}\cdot\mathbf{n} > 0$, corresponding to all the points where the normal vector $\mathbf{n}$ forms an acute angle with the vector $\mathbf{v}$. The line separating the region where $\mathbf{v}\cdot\mathbf{n} + \rho$ is positive from the region where it is negative is shifted by a factor $\rho$ along the direction of $\mathbf{v}$ when $\rho\neq 0$, in particular enlarging the positivity region when $\rho > 0$ and reducing it when $\rho < 0$ (see Figures~\ref{figure:appendix:rho_positive} and~\ref{figure:appendix:rho_negative}), until it reaches the limit cases where the positivity region is either empty or coincident with the whole $\partial\mathcal{D}$.
    
    \begin{figure}
    	\centering
    	\subfloat[$\rho = 0$\label{figure:appendix:rho_0}]
    	{\centering
    		\begin{tikzpicture}[scale=0.5]
    			\begin{axis}
    				[axis equal=true,
    				axis lines=middle,
    				xlabel={$x$},
    				ylabel={$y$},
    				xmin=-1.2, xmax=1.2,
    				ymin=-1.2, ymax=1.2,
    				xtick=\empty, ytick=\empty
    				]
    				\addplot[domain=0:360, samples=100, thick] ({cos(x)}, {sin(x)})
    				node[pos = 0.45, anchor=east] {$\partial\mathcal{D}$};
    				\addplot [domain=-45:135, samples=100, thick, color=gray!40, fill=gray!40, opacity=0.5] ({cos(x)}, {sin(x)});
    				\addplot [domain=135:315, samples=100, thick, color=darkBlue!40, fill=darkBlue!40, opacity=0.5] ({cos(x)}, {sin(x)});
    				\addplot[domain=-0.71:0.71, samples=100, thick, dashed] {-x};
    				\addplot[domain=-0.7:-0.9, samples=10, thick, ->] {-x}
    				node[pos=0.5, anchor=north east] {$\mathbf{n}$};
    				\addplot[domain=0:0.9, samples=100, thick, ->] {x}
    				node[pos=0.9, anchor=south east] {$\mathbf{v}$};
    				\draw(axis cs:0,0) 
    				-- (axis cs:0.1,0.1) 
    				-- (axis cs:0.2,0) 
    				-- (axis cs:0.1,-0.1);
    			\end{axis}
    	\end{tikzpicture}}\quad
    	\subfloat[$\rho > 0$\label{figure:appendix:rho_positive}]
    	{\centering
    		\begin{tikzpicture}[scale=0.5]
    			\begin{axis}
    				[axis equal=true,
    				axis lines=middle,
    				xlabel={$x$},
    				ylabel={$y$},
    				xmin=-1.2, xmax=1.2,
    				ymin=-1.2, ymax=1.2,
    				xtick=\empty, ytick=\empty
    				]
    				\addplot[domain=0:360, samples=100, thick] ({cos(x)}, {sin(x)})
    				node[pos = 0.45, anchor=east] {$\partial\mathcal{D}$};
    				\addplot [domain=-65:155, samples=100, thick, color=gray!40, fill=gray!40, opacity=0.5] ({cos(x)}, {sin(x)});
    				\addplot [domain=155:295, samples=100, thick, color=darkBlue!40, fill=darkBlue!40, opacity=0.5] ({cos(x)}, {sin(x)});
    				\addplot[domain=-.91:0.41, samples=100, thick, dashed] {-x-0.5};
    				\addplot[domain=-0.7:-0.9, samples=10, thick, ->] {-x}
    				node[pos=0.5, anchor=north east] {$\mathbf{n}$};
    				\addplot[domain=0:0.9, samples=100, thick, ->] {x}
    				node[pos=0.9, anchor=south east] {$\mathbf{v}$};
    				\addplot[domain=-0.25:0, samples=10, thick, <->] {x};
    			\end{axis}
    	\end{tikzpicture}}\quad
    	\subfloat[$\rho < 0$\label{figure:appendix:rho_negative}]
    	{\centering
    		\begin{tikzpicture}[scale=0.5]
    			\begin{axis}
    				[axis equal=true,
    				axis lines=middle,
    				xlabel={$x$},
    				ylabel={$y$},
    				xmin=-1.2, xmax=1.2,
    				ymin=-1.2, ymax=1.2,
    				xtick=\empty, ytick=\empty
    				]
    				\addplot[domain=0:360, samples=100, thick] ({cos(x)}, {sin(x)})
    				node[pos = 0.45, anchor=east] {$\partial\mathcal{D}$};
    				\addplot [domain=-25:115, samples=100, thick, color=gray!40, fill=gray!40, opacity=0.5] ({cos(x)}, {sin(x)});
    				\addplot [domain=115:335, samples=100, thick, color=darkBlue!40, fill=darkBlue!40, opacity=0.5] ({cos(x)}, {sin(x)});
    				\addplot[domain=-.41:0.91, samples=100, thick, dashed] {-x+0.49};
    				\addplot[domain=-0.7:-0.9, samples=10, thick, ->] {-x}
    				node[pos=0.5, anchor=north east] {$\mathbf{n}$};
    				\addplot[domain=0:0.9, samples=100, thick, ->] {x}
    				node[pos=0.9, anchor=south east] {$\mathbf{v}$};
    				\addplot[domain=0:0.25, samples=10, thick, <->] {x};
    			\end{axis}
    	\end{tikzpicture}}
    	\caption{Boundary of a section $\partial\mathcal{D}$, with corresponding regions where $\mathbf{v}\cdot\mathbf{n} + \rho$ is positive (colored in gray) and negative (colored in blue). The two regions are separated by the dashed line, orthogonal to the direction of $\mathbf{v}$.}
    	\label{figure:appendix:rho}
    \end{figure}
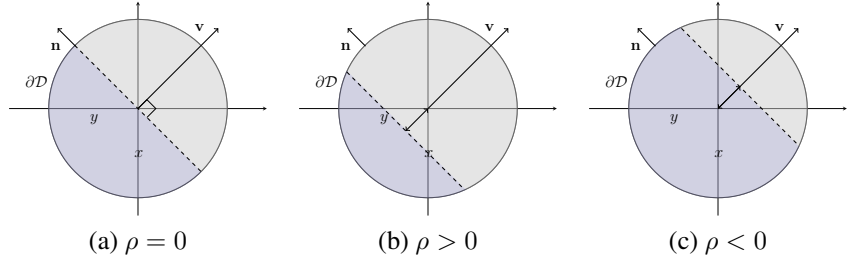
    
    \noindent In particular, if we denote by $\hat{\theta}$ the angle between $\mathbf{v}$ and the positive direction of the $x$ axis (i.e. $\tan\hat{\theta} = \frac{a_2}{a_1}$), and by $\theta$ the angular coordinate around $\partial\mathcal{D}$, then $\mathbf{v}\cdot\mathbf{n}>0$ only if $\cos(\theta-\hat{\theta}) > -\dfrac{\rho}{v}$, where $v=\epsilon_s\sqrt{a_1^2+a_2^2}$. As a consequence,
    
    \begin{equation*}
    	\partial\mathcal{D}^+ =
    	\begin{cases}
    		[0,2\pi], & \text{if\ } \rho \geq v,\\
    		\emptyset, & \text{if\ } \rho \leq -v, \\
    		\left( \hat{\theta} - \tilde{\theta},\ \hat{\theta} + \tilde{\theta} \right), & \text{if\ } v > \rho > -v,
    	\end{cases}
    	\quad
    	\partial\mathcal{D}^- =
    	\begin{cases}
    		\emptyset, & \text{if\ } \rho \geq v,\\
    		[0,2\pi], & \text{if\ } \rho \leq -v, \\
    		\left( \hat{\theta} - \tilde{\theta},\ \hat{\theta} + \tilde{\theta} \right), & \text{if\ } v > \rho > -v,
    	\end{cases}
    	\label{eq:reduction1d:appendix:inflow_outflow_boundary}
    \end{equation*}
    
    \noindent where $\tilde{\theta} = \arccos\left( -\dfrac{\rho}{v} \right)$.
    
    \subsection{Computation of the coefficients $F_p^\pm$}
    \label{appendix:F}
    We start by rewriting the coefficients $F_p^\pm,\, p=1,\,2,\,3,$ as follows:
    
    \begin{equation}
    	-F_1^- = -F_2^- = F_3^+ = \int_{\partial\mathcal{D}^+} (\mathbf{v}\cdot\mathbf{n} + \rho), \quad
    	-F_1^+ = -F_2^+ = F_3^- = \int_{\partial\mathcal{D}^-} (\mathbf{v}\cdot\mathbf{n} + \rho).
    	\label{eq:reduction1d:F:appendix:1}
    \end{equation}
    
    \noindent Since $\rho$ is constant on $\partial\mathcal{D}^+$ and $\partial\mathcal{D}^-$, we only need to compute the following integral:
    
    \begin{equation*}
    	I(a,b) = \int_a^b \mathbf{v}\cdot\mathbf{n}\ d\theta,
    \end{equation*}
    
    \noindent for $a,b\in[0,2\pi]$, and evaluate it for $(a,b) = (0,2\pi)$ and $(a,b) = (\hat{\theta} -\tilde{\theta}, \hat{\theta} + \tilde{\theta}) $. Indeed, we can rewrite equation~\eqref{eq:reduction1d:F:appendix:1} as follows:
    
    \begin{equation}
    	\begin{aligned}
    		-F_1^- = -F_2^- = F_3^+ &=
    		\begin{cases}
    			R_g \left( v I(0,2\pi) + (b-a) \rho \right), &\text{if}\ \rho\geq v, \\
    			R_g \left( v I(\hat{\theta} - \tilde{\theta}, \hat{\theta} + \tilde{\theta}) + (b-a) \rho \right), &\text{if}\ v>\rho>-v, \\
    			0, &\text{if}\ \rho\leq -v,
    		\end{cases}\\
    		-F_1^+ = -F_2^+ = F_3^- & =
    		\begin{cases}
    			0, &\text{if}\ \rho\geq v, \\
    			R_g \left( v I(\hat{\theta} + \tilde{\theta}, \hat{\theta} + 2\pi -\tilde{\theta}) + 2\tilde{\theta}\rho \right), &\text{if}\ v>\rho>-v, \\
    			R_g \left( v I(0,2\pi) + 2\pi\rho \right), &\text{if}\ \rho\leq -v.
    		\end{cases}
    	\end{aligned}
    	\label{eq:reduction1d:F:appendix:2}
    \end{equation}
    
    \noindent Since $v$ is constant on $\partial{D}$ and $\cos(\theta - \hat{\theta}) = \cos\theta \cos\hat{\theta} - \sin\theta\sin\hat{\theta}$, we obtain:
    
    \begin{multline*}
    	I(a,b) = \int_a^b v \cos(\theta - \hat{\theta}) d\theta =
    	\cos\hat{\theta}\int_a^b \cos\theta d\theta + \sin\hat{\theta}\int_a^b \sin\theta d\theta =\\
    	= \cos\hat{\theta}(\sin b - \sin a) + \sin\hat{\theta}(\cos a - \cos b).
    \end{multline*}
    
    \noindent Then, exploiting the properties of $\sin$ an $\cos$ of sum and difference of angles, we can evaluate the integral on the intervals $(a,b)$ of our interest:
    
    \begin{equation}
    	I(0,2\pi) = 0, \quad
    	I(\hat{\theta} - \tilde{\theta}, \hat{\theta} + \tilde{\theta}) = 2\sin\tilde{\theta}, \quad
    	I(\hat{\theta} + \tilde{\theta}, \hat{\theta} + 2\pi - \tilde{\theta}) = -2\sin\tilde{\theta}.
    	\label{eq:reduction1d:F:appendix:integral}
    \end{equation}
    
    \noindent Finally substituting~\eqref{eq:reduction1d:F:appendix:integral} into~\eqref{eq:reduction1d:F:appendix:2}, we obtain:
    
    \begin{equation}
    	\begin{aligned}
    		-F_1^- = -F_2^- = F_3^+ &=
    		\begin{cases}
    			2 R_g \pi \rho, &\text{if}\ \rho\geq v, \\
    			2 R_g \left( v\sin\tilde{\theta} + \rho\tilde{\theta} \right), &\text{if}\ v>\rho>-v, \\
    			0, &\text{if}\ \rho\leq -v,
    		\end{cases}\\
    		-F_1^+ = -F_2^+ = F_3^- & =
    		\begin{cases}
    			0, &\text{if}\ \rho\geq v, \\
    			2 R_g \left( \rho (\pi - \tilde{\theta}) - v \sin(\pi - \tilde{\theta}) \right), &\text{if}\ v>\rho>-v, \\
    			2 R_g \pi \rho, &\text{if}\ \rho\leq -v.
    		\end{cases}
    	\end{aligned}
    	\label{eq:reduction1d:F:appendix:expression}
    \end{equation}
    
    \begin{remark}
    	\label{remark:appendix:F:sign}
    	We notice from equation~\eqref{eq:reduction1d:F:appendix:1} that $F_p+$ are integrals of positive functions, while $F_p^-$ are integrals of negative functions. As a consequence, $F_p^+ \geq 0$, and $F_p^-\leq 0$, $\forall p=1,2,3$.
    \end{remark}
    
    \subsection{Computation of the coefficients $G^\pm_{p,i}$}
    \label{appendix:G}
    In order to compute the coefficients $G_{p,i}^\pm$, for $p=1,\,2,\,3$ and $i=1,\,2$, as in equation~\eqref{eq:reduction1d:appendix:G:rho}, we need to evaluate on the regions $\partial\mathcal{D}^\pm$, defined in~\eqref{eq:reduction1d:appendix:inflow_outflow_boundary}, the following integrals:
    
    \begin{equation*}
    	I_1(a,b) = \int_a^b \left(\mathbf{v}\cdot\mathbf{n} + \rho\right) \cos\theta d\theta,
    	\quad
    	I_2(a,b) = \int_a^b \left(\mathbf{v}\cdot\mathbf{n} + \rho\right) \sin\theta d\theta,
    	\quad
    	a,\,b \in \left[ 0,\,2\pi \right].
    \end{equation*}
    
    \noindent Indeed,
    
    \begin{equation}
    	\begin{aligned}
    		\int_{\partial\mathcal{D}^+} (\mathbf{v}\cdot\mathbf{n} + \rho)\phi_i &=
    		\begin{cases}
    			-\epsilon_s I_i(0,2\pi), & \text{if\ } \rho\geq v, \\
    			0, & \text{if\ } \rho \leq -v, \\
    			-\epsilon_s I_i(\hat{\theta}-\tilde{\theta}, \hat{\theta}+\tilde{\theta}), & \text{if\ } v > \rho > -v,
    		\end{cases}\\
    		\int_{\partial\mathcal{D}^-} (\mathbf{v}\cdot\mathbf{n} + \rho)\phi_i &=
    		\begin{cases}
    			0, & \text{if\ } \rho\geq v, \\
    			-\epsilon_s I_i(0,2\pi), & \text{if\ } \rho \leq -v, \\
    			-\epsilon_s I_i(\hat{\theta}+\tilde{\theta}, \hat{\theta} + 2\pi - \tilde{\theta}), & \text{if\ } v > \rho > -v.
    		\end{cases}
    	\end{aligned}
    	\label{eq:reduction1d:G:v_rho_+-}
    \end{equation}
    
    \noindent Since $v$, $\hat{\theta}$ and $\rho$ are constant on $\partial\mathcal{D}$, and $\cos(\theta-\hat{\theta}) = \cos\theta \cos\hat{\theta} - \sin\theta \sin\hat{\theta}$, we obtain:
    
    \begin{align*}
    	I_1(a,b) &= \int_a^b v\cos(\theta-\hat{\theta})\cos\theta d\theta + \int_a^b \rho\cos\theta d\theta =\\
    	&= v \cos\hat{\theta} \int_a^b \cos^2\theta d\theta + v\sin\hat{\theta} \int_a^b \sin\theta\cos\theta d\theta + \rho\int_a^b \cos\theta d\theta = \\
    	&= \dfrac{v\cos\hat{\theta}}{2} (\cos b \sin b - \cos a \sin a + b - a)
    	+ \dfrac{v\sin\hat{\theta}}{2} (\sin^2 b - \sin^2 a)
    	+ \rho (\sin b - \sin a),
    \end{align*}
    
    \begin{align*}
    	I_2(a,b) &= \int_a^b v\cos(\theta-\hat{\theta})\sin\theta d\theta + \int_a^b \rho\sin\theta d\theta =\\
    	&= v \cos\hat{\theta} \int_a^b \cos\theta\sin\theta d\theta + v\sin\hat{\theta} \int_a^b \sin^2\theta d\theta + \rho\int_a^b \sin\theta d\theta = \\
    	&= \dfrac{v\cos\hat{\theta}}{2} (\sin^2 b - \sin^2 a)
    	+ \dfrac{v\sin\hat{\theta}}{2} (\cos a \sin a - \cos b \sin b + b - a)
    	- \rho (\cos b - \cos a).
    \end{align*}
    
    \noindent Then, we can evaluate these two integrals on the intervals of our interest:
    
    \begin{equation}
    	\begin{alignedat}{3}
    		I_1(0,2\pi) &= \pi v \cos\hat{\theta}, \quad 
    		I_1(\hat{\theta}-\tilde{\theta}, \hat{\theta}+\tilde{\theta}) & =
    		\cos\hat{\theta} \left( v\tilde{\theta} + \rho \sin\tilde{\theta} \right), \\
    		I_2(0,2\pi) & = \pi v \sin\hat{\theta}, \quad
    		I_2(\hat{\theta}-\tilde{\theta}, \hat{\theta}+\tilde{\theta}) & =
    		\sin\hat{\theta} \left( v\tilde{\theta} + \rho \sin\tilde{\theta} \right).
    	\end{alignedat}
    	\label{eq:reduction1d:appendix:G:integrals}
    \end{equation}
    
    \noindent Since $\hat{\theta}$ was defined as the angle between $\mathbf{v}$ and the positive direction of the $x$ axis, we can trace it back to the values of $a_1$ and $a_2$, and, by properly replacing equations~\eqref{eq:reduction1d:appendix:G:integrals} into~\eqref{eq:reduction1d:G:v_rho_+-}, we obtain an explicit expression of the coefficients $G_{p,i}^+$ and $G_{p,i}^-$, $p=1,2,3,\ i=1,2$:
    
    \begin{equation}
    	\label{eq:reduction1d:G:expression}
    		-G_{1,i}^- = -G_{2,i}- = G_{3,i}^+ =
    		\begin{cases}
    			-\epsilon_s^2 R_g \pi a_i, & \text{if}\ \rho \geq v, \\
    			0, & \text{if}\ \rho \leq -v, \\
    			- R_g \dfrac{a_i}{\sqrt{a_1^2+a_2^2}} \left(\sqrt{a_1^2+a_2^2}\tilde{\theta}+\rho\sin\tilde{\theta}\right), & \text{if}\ v>\rho>-v,
    		\end{cases}
    	\end{equation}
    	\begin{equation*}
    		-G_{1,i}^+ = -G_{2,i}^+ = G_{3,i}^- =
    		\begin{cases}
    			0, & \text{if}\ \rho \geq v, \\
    			-\epsilon_s^2 R_g \pi a_i, & \text{if}\ \rho \leq -v, \\
    			- R_g \dfrac{a_i}{\sqrt{a_1^2+a_2^2}} \left(\sqrt{a_1^2+a_2^2}(\pi - \tilde{\theta}) -\rho\sin(\pi-\tilde{\theta})\right), & \text{if}\ v>\rho>-v.
    		\end{cases}
    \end{equation*}
    
    \begin{remark}
    	\label{remark:appendix:G:sign}
    	Notice that
        $G_{p,i}^+\leq 0$ and $G_{p,i}^-\leq 0$ if $a_i\geq 0$ for $p=3$, while $G_{p,i}^+\geq 0$ and $G_{p,i}^-\geq 0$ if $a_i\geq 0$ for $p=1,2$.
    \end{remark}

    
    
    
    
    
    \subsection{Proof that the destruciton term~\eqref{eq:reduction1d:destruction} satisfies Property~\ref{property:D:limit}}
    \label{appendix:production-destruction}
    The destruction term~\eqref{eq:reduction1d:destruction} is given by 
    
    \begin{equation*}
    	\begin{split}
    		D(a_i) &= \dfrac{\bar{\sigma}_\Gamma}{|\partial\mathcal{D}|}\epsilon_s^2\pi a_i - 
    		\sum_{p=1}^3 e\dfrac{\bar{\mu}_p}{|\mathcal{D}|} \left(\bar{c}_p G_{p,i}^+ + \bar{c}_p^b G_{p,i}^- \right) =\\
    		&=
    		\begin{cases}
    			\dfrac{\bar{\sigma}_\Gamma}{|\partial\mathcal{D}|}\epsilon_s^2\pi a_i +
    			\dfrac{e}{|\mathcal{D}|}\epsilon_s^2 R_g \pi a_i \bar{\mu}_e\bar{c}_3 
    			, & \text{if}\ \rho\geq v, \\
    			\dfrac{\bar{\sigma}_\Gamma}{|\partial\mathcal{D}|}\epsilon_s^2\pi a_i + 
    			\dfrac{e}{|\mathcal{D}|}\epsilon_s^2 R_g \pi a_i \bar{\mu}_e\bar{c}_3^b 
    			, & \text{if}\ \rho \leq -v, \\
    			\dfrac{\bar{\sigma}_\Gamma}{|\partial\mathcal{D}|}\epsilon_s^2\pi a_i
    			+ \dfrac{e}{|\mathcal{D}|} R_g a_i 
    			\left( \bar{\mu}_3\bar{c}_3 + \sum_{p=1}^2 \bar{\mu}_p \bar{c}_p^b \right) K^+(\tilde{\theta}),
    			&\text{if}\ v>\rho>-v,
    		\end{cases}
    	\end{split}
    \end{equation*}
    
    \noindent where $ K^+(\tilde{\theta})  = \tilde{\theta} + \dfrac{\rho}{\sqrt{a_1^2+a_2^2}}\sin\tilde{\theta}$ is finite for $a_i\to 0$, $i=1,2$. Thus, $\lim_{a_i\to 0} D(a_i) = 0$ for $i=1,2$.

    \subsection{Proof of Property~\ref{property:reduction1d:F:bounded}}
    \label{appendix:F:proof:bounded}
    
    \begin{itemize}
    	\item \textit{Case $\rho \geq v$:}
    	\begin{equation*}
    		F_p^+ =
    		\begin{cases}
    			0, & \text{if\ } p=1,\,2,\\
    			2R_g\pi\rho \leq
    			\hat{F}_3^+ \sum_{p=1}^3 |\bar{c}_p|, & \text{if\ } p=3,
    		\end{cases}
    	\end{equation*}
    	\noindent by triangular inequality, with $\hat{F}_3^+ = 2R_g\pi$;
    	
    	\item \textit{Case $p\leq -v$:}
    	\begin{equation*}
    		F_p^+ =
    		\begin{cases}
    			-2R_g\pi\rho =
    			R_g^2 \pi \left( \bar{c}_1 + \bar{c}_2 - \bar{c}_3 \right)
    			\leq \hat{F}_p^+ \sum_{p=1}^3|\bar{c}_p|, & \text{if\ } p=1,\,2,\\
    			0, & \text{if\ } p=3,
    		\end{cases}
    	\end{equation*}
    	\noindent by triangular inequality, with $\hat{F}_p^+ = R_g^2 \pi$;
    	
    	\item \textit{Case $v>\rho>-v$:}\\
    	Since $\tilde{\theta}\in[0,\,\pi]$, $\sin(\gamma) \in [-1,\,1],\, \forall \gamma\in[0,\,2\pi$, and $\rho\in(-v,\,v)$,
    	\begin{equation*}
    		F_p^+ =
    		\begin{cases}
    			2R_g\left( v\sin(\pi-\tilde{\theta}) +\rho(\pi-\tilde{\theta})\right)
    			< 2R_g v (\pi +1), & \text{if\ } p=1,\,2,\\
    			2R_g\left( \rho\tilde{\theta} + v\sin\tilde{\theta}\right) 
    			< 2R_g v (\pi +1), &\text{if\ } p=3.
    		\end{cases}
    	\end{equation*}
    	
    	\noindent As a consequence, $F_p^+ \leq \hat{F}_p^+ \sum_{p=1}^3 |\bar{c}_p| + \tilde{F}_p^+$, with
    	
    	\begin{align*}
    		\hat{F}_p^+ &=
    		\begin{cases}
    			2 R_g \pi, & \text{if}\ p=3 \text{\ and\ } \rho \geq v,\\
    			R_g^2 \pi, & \text{if}\ p=1,\,2 \text{\ and\ } \rho \leq -v,\\
    			0, & \text{otherwise},
    		\end{cases}\\
    		\tilde{F}_p^+ &=
    		\begin{cases}
    			2R_gv(\pi+1), & \text{if\ } v>\rho>-v, \ p=1,\,2,\,3,\\
    			0, & \text{otherwise}.
    		\end{cases}
    	\end{align*}
    	
    \end{itemize}

    \section{Characteristic time scale of the right--hand side of the dipole moment equation}
    \label{appendix:reduction1d:characteristic_time}
    The typical time scale of the forcing term on right--hand side of~\eqref{eq:reduction1d:dipole:a} can be estimated as

        \begin{multline}
            \sum_{p=1}^3 \dfrac{e}{\epsilon_0\epsilon_s^2 \pi R_g} \frac{\bar{\mu}_p}{|\mathcal{D}|}\left( \bar{c}_p G_{p,i}^+ 
            + \bar{c}_p^b G_{p,i}^- \right) \approx\\
            \approx \sum_{p=1}^3 \frac{1.6 \cdot 10^{-19} C}{8.8 \cdot 10^{-12} \frac{C}{V \ m} \epsilon_s^2 \pi R_g} \frac{10^{-2} \frac{m^2}{V \ s}}{\pi R_g^2} 10^{20} \frac{1}{m} \left( G_{p,i}^+ 
            + G_{p,i}^-\right) \approx
            10^{10} \dfrac{1}{s} \sum_{p=1}^3 \dfrac{G_{p,i}^+ 
            + G_{p,i}^-}{4\pi R_g \epsilon_s^2},
            \label{eq:reduction1d:Gp:order_of_magnitude}
        \end{multline}

        \noindent where, from the expression derived in Appendix~\ref{appendix:G}, we have 
        
        \begin{equation*}
        	\sum_{p=1}^3 \left(G_{p,i}^+ + G_{p,i}^- \right) = \epsilon_s^2 R_g\pi a_i.
        \end{equation*}
        
        \noindent Substituting this approximation into~\eqref{eq:reduction1d:Gp:order_of_magnitude}, we obtain 
            
        \begin{equation*}
           	\sum_{p=1}^3 \dfrac{e}{\epsilon_0\epsilon_s^2 \pi R_g} \dfrac{\bar{\mu}_p}{|\partial \mathcal{D}|}\left( \bar{c}_p G_{p,i}^+ 
        	+ \bar{c}_p^b G_{p,i}^- \right) \approx
		    10^{10}\dfrac{1}{s} a_i.
        \end{equation*}
            
        \noindent Similarly,

        \begin{equation*}
            \dfrac{\bar{\sigma}_\Gamma}{|\partial\mathcal{D}|}\dfrac{a_i}{\epsilon_0 R_g}\approx
            \dfrac{1.6 \cdot 10^{-19} C \cdot 10^{20} \frac{1}{m} \cdot 10^{-16} \frac{m^2}{V \ s}}{2\pi R_g}\dfrac{a_i}{8.8\cdot 10^{-12}\frac{C}{V \ m} R_g}\approx
            \dfrac{1}{R_g} 10^4 \dfrac{m}{s} a_i,
        \end{equation*}

        \noindent with $R_g$ small. In particular, with a radius of the gas domain $R_g \approx 10^{-6}$ we have $\dfrac{\bar{\sigma}_\Gamma}{|\partial\mathcal{D}|}\dfrac{a_i}{\epsilon_0 R_g}\approx a_i\cdot 10^{10}\dfrac{1}{s}$.
    	
\end{document}